\documentclass[sn-mathphys,Numbered]{sn-jnl}

\usepackage{threeparttable}

\usepackage{subfigure} 
\usepackage{ulem}
\usepackage{graphicx}%
\usepackage{multirow}%
\usepackage{amsmath,amssymb,amsfonts}%
\usepackage{amsthm}%
\usepackage{mathrsfs}%
\usepackage[title]{appendix}%
\usepackage{xcolor}%
\usepackage{textcomp}%
\usepackage{manyfoot}%
\usepackage{booktabs}%
\usepackage{algorithm}%
\usepackage{algorithmicx}%
\usepackage{algpseudocode}%
\usepackage{listings}%



\theoremstyle{thmstyleone}%
\newtheorem{thm}{Theorem}[section]
\newtheorem{proposition}[thm]{Proposition}%
\makeatletter

\newcommand{\Rmnum}[1]{\expandafter\@slowromancap\romannumeral #1@}
\makeatother
\usepackage{amsfonts}
\usepackage{txfonts}
\usepackage{paralist}
\usepackage{color}
\usepackage{tabu}                     
\usepackage{multirow}                 
\usepackage{multicol}                 
\usepackage{multirow}                
\usepackage{float}                    
\usepackage{makecell}                 
\usepackage{booktabs}                 

\usepackage{amsmath,bm}
\usepackage{amstext}
\usepackage{amsfonts}
\usepackage{amssymb}
\usepackage{graphicx}
\usepackage{verbatim}
\usepackage{bbm}
\newtheorem{theorem}{Theorem}[section]
\newtheorem{lemma}[theorem]{Lemma}
\newtheorem{remark}[theorem]{Remark}
\newtheorem{example}[theorem]{Example}
\newtheorem{corollary}[theorem]{Corollary}

\theoremstyle{thmstyletwo}%
\theoremstyle{thmstylethree}%
\usepackage{graphicx} 
\usepackage{subfigure} 
\usepackage{ulem}
\newtheorem{definition}{Definition}%

\raggedbottom

\begin{document}

\title[Article Title]{Test for high-dimensional linear hypothesis of mean vectors via random integration}

\author[1]{\fnm{Jianghao} \sur{Li}}

\author[2]{\fnm{Shizhe} \sur{Hong}} 
\author*[3]{\fnm{Zhenzhen} \sur{Niu}} \email{niuzz@sdnu.edu.cn}


\author[1]{\fnm{Zhidong} \sur{Bai}}

\affil[1]{\orgdiv{School of Mathematics and Statistics}, \orgname{Northeast Normal University}, \orgaddress{\street{5268 Peoples Road}, \city{Changchun}, \postcode{130024}, \country{China}}}

\affil[2]{\orgdiv{School of Statistics and Management}, \orgname{ Shanghai University of Finance and Economics}, \orgaddress{\street{No. 777, Guoding Road}, \city{Shanghai}, \postcode{200433}, \country{China}}}

\affil[3]{\orgdiv{ School of Mathematics and Statistics}, \orgname{Shandong Normal University}, \orgaddress{\street{ No.1 University Road,
			Science Park, Changqing District}, \city{Jinan}, \postcode{250358}, \country{China}}}

\abstract{
In this paper, we investigate hypothesis testing for the linear combination of mean vectors across multiple populations through the method of random integration. We have established the asymptotic distributions of the test statistics under both null and alternative hypotheses. Additionally, we provide a theoretical explanation for the special use of our test statistics in situations when the nonzero signal in the linear combination of the true mean vectors is weakly dense. Moreover, Monte-Carlo simulations are presented to evaluate the suggested test against existing high-dimensional tests. The findings from these simulations reveal that our test not only aligns with the performance of other tests in terms of size but also exhibits superior power.
}

\keywords{High-dimensional data; Linear hypothesis; Mean vector tests; U-statistics}

\pacs[Mathematics Subject Classification]{62H15, 62E20}

\maketitle

\section{Introduction}
\label{intro}
High-dimensional data has grown more commonplace due to the quick development of data gathering and technological advancements, which are now present in many scientific fields, including finance, medicine, genomics, and other fields. The common characteristic of these high-dimensional datasets is that their dimensions are much greater than the number of samples, referred to as ``large $p$, small $n$". In such cases, traditional statistical techniques are either entirely useless or inapplicable. This has prompted statisticians to explore novel approaches to dealing with high-dimensional data, e.g., the widely used random matrix method to tackle large-dimensional multivariate statistical issues.

In recent years, the test for
mean vectors under high-dimensional setting have been a very hot topic in the literature because of its important applications. Many scholars, including
\cite{bai1996effect}; \cite{srivastava2008test}; \cite{chen2010two}; \cite{lopes2011more}; \cite{cai2014two}; \cite{xu2016adaptive}; \cite{zhangjt2020simple} and among others, are interested in the two-sample high-dimensional mean test problem. Besides the above two-sample test problem, the multi-sample test problem is also concerned by many researchers,  see, e.g., \cite{schott2007test} \cite{cai2014high}; \cite{hu2017on}; \cite{li2024test} and the references therein.

However, if we only test the equality of several high-dimensional
mean vectors, that is,
$\mathrm{H}_0: \boldsymbol{\mu}_{1} = \boldsymbol{\mu}_{2} = \ldots = \boldsymbol{\mu}_{q},$
we may not effectively
infer the differences between several populations. When the heteroscedastic one-way MANOVA test is rejected, the relationship between the mean vectors may exhibit the following scenarios: e.g., $\boldsymbol{\mu}_{2} = 2\boldsymbol{\mu}_{3}$ or  $\boldsymbol{\mu}_{1} -3\boldsymbol{\mu}_{2} + \boldsymbol{\mu}_{3}=0$. In fact, these scenarios can be generalized as making inferences on a linear combination of $q$ mean vectors.

Suppose that random samples
$\{\mathbf{x}_{i1}, \mathbf{x}_{i2}, \cdots, \mathbf{x}_{in_i}\}\in \mathbb{R}^{p}$
follows from the $i$-th populations with mean vectors $\boldsymbol{\mu}_{i}$, possibly unknown and unequal covariance matrices $\boldsymbol{\Sigma}_{i}$ among populations, $i=1,2,\cdots,q$.
Our primary interest is to test the following linear hypothesis:
\begin{align}\label{test on linear hypothesis of means1}
\mathrm{H}_0:\sum_{i=1}^{q} \beta_{i} \boldsymbol{\mu}_{i}=\mathbf{0}   ~~\text{ vs. }  \mathrm{H}_1: \mathrm{~not~} \mathrm{H}_{0},
\end{align}
where $\beta_{1}, \ldots, \beta_{q}$  are given scalars. For the hypothesis testing in (\ref{test on linear hypothesis of means1}), \cite{nishiyama2013testing} transformed the $q$ samples into one sample by borrowing the successful idea from the multivariate Scheff\'{e}'s transformation \cite{scheffe1943solutions} (also known as Bennett's transformation \cite{bennett1950note}), then proposed a Dempster trace test. By transforming the samples using methods akin to \cite{nishiyama2013testing}, \cite{jiang2017likelihood} presented a test statistic similar to the F-matrix by employing the advanced tools from {\it Random Matrix Theory}. Although Bennett's transformation, which converts multi-sample Behrens-Fisher problems into one-sample testing problems, offers an alternative approach to making the original problem \eqref{test on linear hypothesis of means1} be solved in a different way, Bennett's transformation is widely recognized as a very flawed solution to the Behrens-Fisher problem in classical multivariate theory, owing to its notable loss in power.
 \cite{li2017test} proposed a test for multiple linear combinations of mean vectors with unequal covariance matrices, which generalizes the test problem (\ref{test on linear hypothesis of means1}). A U-statistic-based test for a general linear hypothesis testing issue under heteroscedasticity was obtained by \cite{zhou2017high}. In actuality, for this test problem (\ref{test on linear hypothesis of means1}), the test statistics given by \cite{li2017test} and \cite{zhou2017high} are essentially equivalent. Without requiring the common covariance matrix assumption,
\cite{zhang2022linear} extended the well-known Welch-Satterthwaite chi-squared approximation technique to the general linear hypothesis testing problem. Nevertheless, the existing literature for the hypothesis test (\ref{test on linear hypothesis of means1}) does not provide a solution when nonzero signals in the linear combination of the true mean vectors are more dense or weakly dense nonzero signals. Therefore, it makes sense to develop a new test procedure for high-dimensional data.

In this paper, with the aid of random integration, we present a weighted $L_2$-norm-based test for the hypothesis (\ref{test on linear hypothesis of means1}) under moderate conditions. With adjustable parameter settings, some existing tests, such as \cite{chen2010two, li2017test, zhou2017high, jiang022nonparametric}, become special instances in our proposed framework. The newly proposed high-dimensional test for the hypothesis (\ref{test on linear hypothesis of means1}), is non-parametric and purposefully avoids defining an explicit relationship between the dimension $p$ and the total sample size $n$. Furthermore, when nonzero signals in the linear combination of the true mean vectors are more dense or weaker dense, we demonstrate theoretically that the power of our newly proposed high-dimensional test behaves distinctively from certain existing tests for the hypothesis (\ref{test on linear hypothesis of means1}), such as \cite{li2017test} or \cite{zhou2017high}, \cite{zhang2022linear}. In addition, we also fully utilize numerical results in a variety of settings to support our main results. 

The rest of this paper proceeds as follows:
Our test procedure for the hypothesis (\ref{test on linear hypothesis of means1}) as well as the asymptotic distribution of our test statistic are presented in Section~\ref{section2.1} and \ref{section2.2}. In Section~\ref{section2.3}, 
We explore the asymptotic power of our test and briefly discuss its power performance when the covariance matrix of several populations is identical. Extensive simulation studies are presented in Section ~\ref{section3}. Finally, technical proofs of the main theorems
are outlined in Appendix \ref{Appendix}.

\section{Methodology and main results
}
\subsection{Construction of the test statistic}\label{section2.1}
We initiate the development of the test statistic for the hypothesis problem \eqref{test on linear hypothesis of means1} by defining $\sum_{i=1}^q \beta_i \boldsymbol{\mu}_i \triangleq \boldsymbol{\mu}$. This definition transforms  \eqref{test on linear hypothesis of means1} into the following equivalent form: 
\begin{equation}
\mathrm{H}_0:\boldsymbol{\mu}=\mathbf{0}   ~~\text{ vs. }  \mathrm{H}_1:\boldsymbol{\mu} \neq \mathbf{0}.
\label{equivalent test on linear hypothesis of means}
\end{equation}
Denote $\mathbf{x}= \sum_{i=1}^{q} \beta_{i} \mathbf{x}_{i}$, where $\mathbf{x}_i$ represents the $i$-th populations with mean vectors $\boldsymbol{\mu}_{i}$ and covariance matrices $\boldsymbol{\Sigma}_{i}$, $i=1,2,\cdots,q$.
Note that
\begin{align*}
\boldsymbol{\mu}= \mathbf{0} &\Leftrightarrow \mathbb{E} \mathbf{x}= \mathbf{0} \Leftrightarrow \boldsymbol{\theta }^{\top} \mathbb{E} \mathbf{x}= 0, \text { for any } \boldsymbol{\theta } \in \mathbb{R}^{p} \\
& \Leftrightarrow \mathbb{E} \left(\boldsymbol{\theta }^{\top} \mathbf{x}\right)=0, \text { for any } \boldsymbol{\theta } \in \mathbb{R}^{p} .
\end{align*}
A novel non-parametric technique called ``random integration'' suggested by \cite{jiang022nonparametric} turns the original issue \eqref{test on linear hypothesis of means1} into a one-dimension issue by introducing a random weight vector and integrating on its distribution. 
Therefore, to quantify the variations of a linear combination of mean vectors of several populations, one can employ the random integration approach described in \cite{jiang022nonparametric}. 
Accordingly, testing problem \eqref{equivalent test on linear hypothesis of means} can be further translated to determine whether
\begin{align}\label{Random integration technique}
\operatorname{RI}_{w}(\mathbf{x}) \triangleq \int \mathbb{E}^{2}\left(\boldsymbol{\theta }^{\top}\mathbf{x}\right) w(\boldsymbol{\theta }) d \boldsymbol{\theta }=0,
\end{align}
where  $w(\boldsymbol{\theta })$ represents any positive weighting function. 

We deduce that  $\boldsymbol{\mu}= \mathbf{0}$  if and only if $ \operatorname{RI}_{w}(\mathbf{x})=0$, as indicated by Equation \eqref{Random integration technique}. An explicit expression for evaluating \eqref{Random integration technique} with a proper weight function $w$ is given in the following lemma.
\begin{lemma}\label{linear mean test using random integration}
	Given that  $w(\boldsymbol{\theta })=\prod_{i=1}^{p} w_{i}\left(\theta_{i}\right)$ with each $w_{i}(\cdot)$ being a density function with a mean $ \alpha_{i}$  and variance  $\omega_{i}^{2}$  for  $i=1, \ldots, p$, then we have
	\begin{align*}
	\operatorname{RI}_{\theta}(\mathbf{x} ) & \triangleq \operatorname{RI}_{w}(\mathbf{x} ) \\
	& =\boldsymbol{\mu}^{\top} \mathbf{\Omega} \boldsymbol{\mu} + \left(\boldsymbol{\mu}^{\top} \boldsymbol{\alpha}\right)^{2} = ( \sum_{i=1} \beta_{i} \boldsymbol{\mu}_{i} )^{\top}\left( \mathbf{\Omega} +\boldsymbol{\alpha}\boldsymbol{\alpha}^{\top} \right) ( \sum_{i=1} \beta_{i} \boldsymbol{\mu}_{i} ),
	\end{align*}
	where $\boldsymbol{\alpha}=   \left(\alpha_{1}, \alpha_{2}, \ldots, \alpha_{p}\right)^{\top}$, and
	$$
	\mathbf{\Omega}=\left(\begin{array}{cccc}
	\omega_{1}^{2} & 0 & \cdots & 0 \\
	0 & \omega_{2}^{2} & \cdots & 0 \\
	\vdots & \vdots & \ddots & \vdots \\
	0 & 0 & \cdots & \omega_{p}^{2}
	\end{array}\right).  
	$$
	In addition, we have  $\operatorname{RI}_{\theta}(\mathbf{x} ) \geq 0$, achieving equality if and only if $ \sum_{i=1}^q \beta_i \boldsymbol{\mu}_i \triangleq \boldsymbol{\mu}= \mathbf{0}$.
\end{lemma}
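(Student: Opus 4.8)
The plan is to evaluate the integral in \eqref{Random integration technique} directly: first reduce the integrand to a quadratic form in $\boldsymbol{\theta}$, then integrate term by term against the product density $w$. I would begin by noting that, since $\mathbf{x}=\sum_{i=1}^q\beta_i\mathbf{x}_i$ has mean $\mathbb{E}\mathbf{x}=\sum_{i=1}^q\beta_i\boldsymbol{\mu}_i=\boldsymbol{\mu}$, for every fixed $\boldsymbol{\theta}$ we have $\mathbb{E}(\boldsymbol{\theta}^{\top}\mathbf{x})=\boldsymbol{\theta}^{\top}\boldsymbol{\mu}$, and hence $\mathbb{E}^{2}(\boldsymbol{\theta}^{\top}\mathbf{x})=(\boldsymbol{\theta}^{\top}\boldsymbol{\mu})^{2}=\boldsymbol{\mu}^{\top}(\boldsymbol{\theta}\boldsymbol{\theta}^{\top})\boldsymbol{\mu}$. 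Substituting this into \eqref{Random integration technique} and pulling the deterministic vector $\boldsymbol{\mu}$ outside the $\boldsymbol{\theta}$-integral leaves us with computing the second-moment matrix $\mathbf{M}\triangleq\int\boldsymbol{\theta}\boldsymbol{\theta}^{\top}w(\boldsymbol{\theta})\,d\boldsymbol{\theta}$, so that $\operatorname{RI}_{w}(\mathbf{x})=\boldsymbol{\mu}^{\top}\mathbf{M}\boldsymbol{\mu}$.

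Next I would compute $\mathbf{M}$ entrywise using the assumed product form $w(\boldsymbol{\theta})=\prod_{i=1}^{p}w_{i}(\theta_{i})$. For the diagonal entries, $M_{ii}=\int\theta_{i}^{2}w_{i}(\theta_{i})\,d\theta_{i}=\omega_{i}^{2}+\alpha_{i}^{2}$ by the definition of the mean $\alpha_i$ and variance $\omega_i^2$ of $w_i$. For the off-diagonal entries $i\neq j$, the product structure separates the $\theta_i$- and $\theta_j$-integrals, giving $M_{ij}=\big(\int\theta_{i}w_{i}(\theta_{i})\,d\theta_{i}\big)\big(\int\theta_{j}w_{j}(\theta_{j})\,d\theta_{j}\big)=\alpha_{i}\alpha_{j}$. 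Therefore $\mathbf{M}=\boldsymbol{\Omega}+\boldsymbol{\alpha}\boldsymbol{\alpha}^{\top}$, and plugging this back in yields $\operatorname{RI}_{w}(\mathbf{x})=\boldsymbol{\mu}^{\top}(\boldsymbol{\Omega}+\boldsymbol{\alpha}\boldsymbol{\alpha}^{\top})\boldsymbol{\mu}=\boldsymbol{\mu}^{\top}\boldsymbol{\Omega}\boldsymbol{\mu}+(\boldsymbol{\mu}^{\top}\boldsymbol{\alpha})^{2}$; replacing $\boldsymbol{\mu}$ by $\sum_{i=1}^{q}\beta_{i}\boldsymbol{\mu}_{i}$ gives the stated identity.

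For the last assertion, I would observe that $\boldsymbol{\Omega}$ is diagonal with strictly positive entries $\omega_{i}^{2}>0$ (each $w_i$ being a genuine density with positive variance), hence positive definite, while $\boldsymbol{\alpha}\boldsymbol{\alpha}^{\top}$ is positive semidefinite; thus $\boldsymbol{\Omega}+\boldsymbol{\alpha}\boldsymbol{\alpha}^{\top}$ is positive definite, so $\operatorname{RI}_{\theta}(\mathbf{x})=\boldsymbol{\mu}^{\top}(\boldsymbol{\Omega}+\boldsymbol{\alpha}\boldsymbol{\alpha}^{\top})\boldsymbol{\mu}\geq 0$ with equality exactly when $\boldsymbol{\mu}=\mathbf{0}$, i.e. when $\sum_{i=1}^{q}\beta_{i}\boldsymbol{\mu}_{i}=\mathbf{0}$.

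The algebra here is elementary; the only care needed is measure-theoretic — namely that the interchange of the data-expectation with the $\boldsymbol{\theta}$-integration is legitimate and that $\mathbf{M}$ is finite so the quadratic form is well defined. Both follow from the standing assumption that each $w_i$ has finite second moment $\omega_i^{2}<\infty$ together with Fubini--Tonelli (the integrand $\boldsymbol{\mu}^{\top}\boldsymbol{\theta}\boldsymbol{\theta}^{\top}\boldsymbol{\mu}\,w(\boldsymbol{\theta})$ being nonnegative), so I do not expect a genuine obstacle; the main thing to get right is the bookkeeping that isolates the diagonal contribution $\omega_i^{2}+\alpha_i^{2}$ from the rank-one piece $\boldsymbol{\alpha}\boldsymbol{\alpha}^{\top}$.
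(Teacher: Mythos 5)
Your proposal is correct, and it takes a slightly different (and more elementary) route than the paper. The paper's Appendix A.1 proves the identity by introducing an independent copy $\breve{\mathbf{x}}$ of $\mathbf{x}$, writing $\mathbb{E}^{2}(\boldsymbol{\theta}^{\top}\mathbf{x})=\mathbb{E}(\boldsymbol{\theta}^{\top}\mathbf{x}\,\boldsymbol{\theta}^{\top}\breve{\mathbf{x}})$, and then invoking Fubini to pull the data-expectation outside the $\boldsymbol{\theta}$-integral, after which the second moments of $w$ produce $\mathbf{\Omega}+\boldsymbol{\alpha}\boldsymbol{\alpha}^{\top}$. You instead evaluate $\mathbb{E}(\boldsymbol{\theta}^{\top}\mathbf{x})=\boldsymbol{\theta}^{\top}\boldsymbol{\mu}$ pointwise in $\boldsymbol{\theta}$ first, so the integrand becomes the deterministic quadratic form $\boldsymbol{\mu}^{\top}\boldsymbol{\theta}\boldsymbol{\theta}^{\top}\boldsymbol{\mu}$ and the whole lemma reduces to the entrywise computation of $\mathbf{M}=\int\boldsymbol{\theta}\boldsymbol{\theta}^{\top}w(\boldsymbol{\theta})\,d\boldsymbol{\theta}=\mathbf{\Omega}+\boldsymbol{\alpha}\boldsymbol{\alpha}^{\top}$; no independent copy is needed and the only measure-theoretic input is finiteness of the second moments of $w$ (the Fubini step you mention is then essentially vacuous, since the inner expectation is already taken). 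The paper's symmetrization device buys a formulation that mirrors the later U-statistic construction of $T_n$ (products of two independent observations), while your direct computation is shorter for the lemma itself. You also supply the argument for $\operatorname{RI}_{\theta}(\mathbf{x})\geq 0$ with equality iff $\boldsymbol{\mu}=\mathbf{0}$, via positive definiteness of $\mathbf{\Omega}+\boldsymbol{\alpha}\boldsymbol{\alpha}^{\top}$ when each $\omega_i^{2}>0$, which the paper's appendix states in the lemma but does not prove explicitly; this is a welcome addition and consistent with the paper's standing restriction $0<\omega_1\leq\cdots\leq\omega_p$.
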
 
For notational convenience, denote $\mathbf{W} = \mathbf{\Omega} +\boldsymbol{\alpha}\boldsymbol{\alpha}^{\top}$. Again, the covariance matrix $\boldsymbol{\Sigma}_i$ and the mean vector $\boldsymbol{\mu}_i$ correspond to the $i$-th population $\mathbf{x}_{i}$, which is represented by the $i$-th sample $\mathbf{x}_{ij}$, $i=1,\dots,q,j=1,\dots,n_i$. To test the hypothesis problem \eqref{test on linear hypothesis of means1}, the following test statistic is taken into consideration as an unbiased estimator of the target function $( \sum_{i=1} \beta_{i} \boldsymbol{\mu}_{i} )^{\top} \mathbf{W} ( \sum_{i=1} \beta_{i} \boldsymbol{\mu}_{i} )$:
\begin{align}
T_{n}  =\sum_{i_{1} \neq i_{2}}^{q} \frac{\beta_{i_{1}} \beta_{i_{2}} }
{n_{i_{1}} n_{i_{2}}} \sum_{j_{1}, j_{2}}\left(\mathbf{x}_{i_{1} j_{1}}\right)^{\top} \mathbf{W} \left(\mathbf{x}_{i_{2} j_{2}} \right)
+\sum_{i=1}^{q} \frac{\beta_{i}^{2}}{n_{i}\left(n_{i}-1\right)} \sum_{j_{1} \neq j_{2}}\left(\mathbf{x}_{i j_{1}}\right)^{\top} \mathbf{W} \left(\mathbf{x}_{i j_{2}} \right).
\end{align}
\begin{remark}
	The exact form obtained with ``random integration'' is essentially a weighted $L_2$-norm test. The two-sample mean test, or $q=2$, $\beta_1 = 1$, $\beta_2 = -1$, is a specific example of hypothesis \eqref{test on linear hypothesis of means1}. Following that, \cite{jiang022nonparametric}'s test statistic is a particular instance of our test statistic. Specifically, when $\omega_{i} =1, \alpha_{i} = 0$, the expression for test statistic $T_{n}$ is reduced to U-statistics proposed by \cite{chen2010two}; Similarly, the test statistics proposed by \cite{li2017test} or \cite{zhou2017high} for the linear hypothesis \eqref{test on linear hypothesis of means1} are also included. In summary, the fact that it offers a unified framework with adjustable settings that makes it possible to include some of the current testing procedures is undoubtedly a benefit. In this article, for the sake of argument, we will confine ourselves to the case that $\alpha_{1} = \cdots = \alpha_{p}$,  $0<\omega_{1} \leq \cdots \leq \omega_{p}$. 
\end{remark}
Some straightforward calculations lead to 
\begin{align*}
\mathbb{E} \left( T_{n} \right)
= & \left( \sum_{i=1}^{q} \beta_{i} \boldsymbol{\mu}_{i} \right)^{\top} \mathbf{W} \left( \sum_{i=1}^{q} \beta_{i} \boldsymbol{\mu}_{i} \right),
\end{align*}
and $\operatorname{Var}\left(T_{n}\right)=\sigma_{n,q_1}^2+\sigma_{n,q_2}^2$ with
\begin{align*}
&\sigma_{n,q_{1}}^{2}=2 \sum_{i_{1} \neq i_{2}} \frac{\beta_{i_{1}} ^{2}\beta_{i_{2}}^{2}}{n_{i_{1}} n_{i_{2}}} \operatorname{tr}\left(  \mathbf{W} \boldsymbol{\Sigma}_{i_1}  \mathbf{W} \boldsymbol{\Sigma}_{i_2}\right)+2 \sum_{i=1}^{q} \frac{\beta_{i}^{4}}{n_{i}\left(n_{i}-1\right)} \operatorname{tr}\left( \mathbf{W} \boldsymbol{\Sigma}_{i}\right)^2,\\
&\sigma_{n,q_{2}}^{2}=4\left(\sum_{i=1}^{q} \beta_{i} \boldsymbol{\mu}_{i}\right)^{\top}  \mathbf{W}  \left(\sum_{i=1}^{q} \frac{\beta_{i}^2}{n_{i}} \boldsymbol{\Sigma}_{i}\right)  \mathbf{W}  \left(\sum_{i=1}^{q} \beta_{i} \boldsymbol{\mu}_{i}\right).
\end{align*}
The above results are postponed in the Appendix.

\subsection{The asymptotic null distribution of $T_n$}\label{section2.2}
To derive the asymptotic normality of our test statistic $T_n$, we impose a set of regular conditions. The key assumption is that the observations $\mathbf{x}_{ij},i=1,\dots,q,j=1,\dots,n_i$ are generated from the following structure:
\begin{align} \label{general factor model}
\mathbf{x}_{i j}=\boldsymbol{\mu}_i + \boldsymbol{\Gamma}_{i} \mathbf{z}_{i j}, i = 1,\ldots, q, j= 1, \ldots, n_i,
\end{align}
where $\boldsymbol{\Gamma}_{i}$ is a  $p \times m$  matrix with $m \geq p$ such that 
$\boldsymbol{\Gamma}_{i} \boldsymbol{\Gamma}_{i}^{\top}=\boldsymbol{\Sigma}_{i}$ ,  $\mathbf{z}_{i j}=\left( z_{ij1}, \ldots, z_{ijm   }\right)^{\top}$ are $m\times1$ independent and identically distributed (i.i.d.) random vectors satisfying $\mathbb{E}(\mathbf{z}_{i j})=\mathbf{0}$ and $\operatorname{Cov}(\mathbf{z}_{i j})=\mathbf{I}_m$, where $\mathbf{I}_m$ stands for the $m\times m$ identity matrix. This factor model structure is widely used in the literature, for instance \cite{bai1996effect}; \cite{chen2010tests} and the references therein. Additional assumptions are listed as follows:

\noindent{\textbf{Assumption A}}.
The components of $\mathbf{z}_{ij},i=1,\dots,q,j=1,\dots,n_i$ have uniformly bounded eighth moment, and satisfy
\begin{align}\label{moment conditions of z_ij}
\mathbb{E}\left( z_{i j \ell }\right)=0, \mathbb{E}\left( z_{i  j \ell }^2\right)=1, \mathbb{E}\left( z_{i  j \ell}^3 \right)=\kappa_{3i}\ \text{and}\ \mathbb{E}\left( z_{i  j \ell}^4 \right)=\kappa_{4i}
\end{align}
for $\ell \in \{1,\ldots, m \}$. Also, for positive integer  $v $ and  $\varsigma_{\ell}$  such that  $\sum_{\ell=1}^{v} \varsigma_{\ell} \leq 8$,
\begin{align}\label{pseudo-independence condition of z_ij}
\mathbb{E}\left(z_{i j \ell_{1}}^{\zeta_{1}} \cdots z_{i j \ell_{v}}^{\zeta v}\right)=\mathbb{E} \left(z_{i j \ell_{1}}^{\zeta_1}\right) \cdots \mathbb{E} \left(z_{i j \ell_{v}}^{\zeta_v}\right),
\end{align} 
whenever $\ell_{1}\neq \ell_{2}\neq \ldots \neq \ell_{v}$.

\noindent{\textbf{Assumption B}}. The sample sizes $n_i,i=1,\dots,q$ satisfy that $n_i/n \rightarrow k_{i} \in (0,1)$ for $i=1, \ldots q$ as  $n \rightarrow \infty$, where $n=\sum_{i=1}^{q} n_{i}$;

\noindent{\textbf{Assumption C}}. 
\begin{align}\label{bounded but spike existit}
\operatorname{tr}\left( \mathbf{W} \boldsymbol{\Sigma}_{i_1} \mathbf{W} \boldsymbol{\Sigma}_{i_2} \mathbf{W} \boldsymbol{\Sigma}_{i_3} \mathbf{W} \boldsymbol{\Sigma}_{i_4}\right)=
o\left[\operatorname{tr}^{2} \left\{\left(\sum_{i=1}^{q} \mathbf{W} \boldsymbol{\Sigma}_{i}\right)^{2}\right\}\right], 
\end{align} 
for $i_1, i_2, i_3, i_4 \in\{1,2, \ldots, q\} $;

\noindent{\textbf{ Assumption D}}.
\begin{align} \label{local alternative condition: test on linear hypothesis of means1}
\left(\sum_{i=1}^{q} \beta_{i} \boldsymbol{\mu}_{i}\right)^{\top}  \mathbf{W}  \left(\sum_{i=1}^{q} \beta_{i}^2\boldsymbol{\Sigma}_{i}\right)  \mathbf{W}  \left(\sum_{i=1}^{q} \beta_{i} \boldsymbol{\mu}_{i} \right) = o\left(n^{-1} \sum_{i_{1},i_{2}}^{q} \beta_{i_{1}}^{2} \beta_{i_{2}}^{2} \operatorname{tr}(  \mathbf{W} \boldsymbol{\Sigma}_{i_1}  \mathbf{W} \boldsymbol{\Sigma}_{i_2}) \right).
\end{align}


The above assumptions can be viewed as extensions of those in \cite{chen2010two} and \cite{li2017test}. Assumption A specifies the moment conditions and a pseudo-independence requirement for the components of $\mathbf{z}_{i j}$, which extends the case where $z_{ij\ell}$'s are i.i.d. with a bounded eighth moment. Assumption B explains that each population has a certain proportion of sample sizes in the total. Assumption C can be seen as the multi-sample version of conditions (2.4) in \cite{jiang022nonparametric}, which is achievable through appropriate selection of $\mathbf{W}$. Theoretically, consider a simple case that $\lambda_{1} \leq \lambda_{2} \cdots \leq   \lambda_{p}$  and  $\lambda_{1}^{*} \leq \lambda_{2}^{*} \cdots \leq \lambda_{p}^{*}$  are eigenvalues of  $\mathbf{W}$  and  $\boldsymbol{\Sigma}_{1} = \boldsymbol{\Sigma}_{2}= \cdots=\boldsymbol{\Sigma}_{q}=\boldsymbol{\Sigma}$, respectively, and assume that $\alpha_{1} = \cdots = \alpha_{p}=\alpha$,  $0<\omega_{1} \leq \cdots \leq \omega_{p}$. Then, \eqref{bounded but spike existit} reduces to $\operatorname{tr}\left(\mathbf{W}\boldsymbol{\Sigma}\right)^4=o\left[\operatorname{tr}^{2}\left(\mathbf{W} \boldsymbol{\Sigma}\right)^{2}\right]$, which holds when $\alpha^2=O\left(p^{-3/4}\right)$, $\operatorname{tr}\boldsymbol{\Sigma}^4=o\left(\operatorname{tr}^2\boldsymbol{\Sigma}^2\right)$ and $p^{1/2}\lambda_p^{*2}=o\left(\operatorname{tr}\boldsymbol{\Sigma}^2\right)$. Assumption D is intended to establish the asymptotic normality of our suggested statistic under the local alternative hypothesis and evaluate the power performance of our proposed test, which will hold automatically when the null hypothesis is true. Furthermore, under Assumption D, it can be easily proved that $\operatorname{Var}\left(T_n\right) = \sigma_{n,q_{1}}^{2} (1+o(1))$, i.e., the variance of $T_n$ is dominated by $\sigma_{n,q_1}^2$. Therefore, we use $\sigma_{n,q}^2\triangleq\sigma_{n,q_1}^2$ to denote the leading term of $\operatorname{Var}\left(T_n\right)$.

The theorem presented below establishes the asymptotic normality of $T_{n}$.
\begin{theorem}\label{mean linear test via the weighted-norm: main result}
	Under Assumptions A-D, as $p$, $n \to \infty$, we have
	\begin{align}
	\frac{T_{n } - \left( \sum_{i=1}^{q} \beta_{i} \boldsymbol{\mu}_{i} \right)^{\top} \mathbf{W} \left( \sum_{i=1}^{q} \beta_{i} \boldsymbol{\mu}_{i} \right) }{\sigma_{n,q}}	\stackrel{\text { d }}{\longrightarrow} \mathrm{N} (0,1),
	\end{align}
	where  $\stackrel{\text { d }} {\longrightarrow}$ means the convergence in distribution.
\end{theorem}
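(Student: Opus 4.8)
The plan is to establish the asymptotic normality via a martingale central limit theorem, adapting the strategy of \cite{chen2010two} and \cite{li2017test} to the weighting matrix $\mathbf{W}$ and to the multi-sample linear combination. First I would center the data: writing $\mathbf{x}_{ij}=\boldsymbol{\mu}_i+\boldsymbol{\Gamma}_i\mathbf{z}_{ij}$ and setting $\mathbf{y}_{ij}:=\boldsymbol{\Gamma}_i\mathbf{z}_{ij}$, substitution into $T_n$ yields a decomposition $T_n=\mathbb{E}(T_n)+A_n+B_n$, where $A_n$ collects all terms purely quadratic in the $\mathbf{y}_{ij}$ and $B_n$ collects the bilinear cross terms between the $\boldsymbol{\mu}_i$'s and the $\mathbf{y}_{ij}$'s. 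A direct moment computation (the one already recorded before the theorem) gives $\operatorname{Var}(A_n)=\sigma_{n,q_1}^2(1+o(1))$ and $\operatorname{Var}(B_n)=\sigma_{n,q_2}^2$, and Assumption D forces $\sigma_{n,q_2}^2=o(\sigma_{n,q_1}^2)$; hence $B_n/\sigma_{n,q}\to 0$ in $L^2$ and it suffices to prove $A_n/\sigma_{n,q}\xrightarrow{d}\mathrm{N}(0,1)$.

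Next I would realize $A_n$ as a sum of martingale differences. Pool the $n=\sum_i n_i$ centered observations into a single sequence $\mathbf{y}_1,\dots,\mathbf{y}_n$ (retaining a record of which population each belongs to), let $\mathcal{F}_k=\sigma(\mathbf{y}_1,\dots,\mathbf{y}_k)$, and put $D_{nk}=\mathbb{E}(A_n\mid\mathcal{F}_k)-\mathbb{E}(A_n\mid\mathcal{F}_{k-1})$. Because $A_n$ is a quadratic form carrying no ``diagonal'' terms in the pooled index — the within-sample $j_1=j_2$ contributions having been removed by construction, and the cross-sample contributions involving distinct observations — each $D_{nk}$ equals $\mathbf{y}_k^{\top}\mathbf{W}$ times a partial sum of the earlier $\mathbf{y}_\ell$'s with the appropriate coefficients $\beta_{i}\beta_{i'}/(n_i n_{i'})$, so $\{D_{nk},\mathcal{F}_k\}$ is a martingale difference array with $A_n=\sum_{k=1}^n D_{nk}$.

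Then I would invoke the martingale CLT (Corollary 3.1 of Hall and Heyde). Two conditions must be verified: (i) the conditional Lyapunov bound $\sum_{k=1}^n\mathbb{E}(D_{nk}^4)/\sigma_{n,q}^4\to 0$, which implies the conditional Lindeberg condition; and (ii) $\sum_{k=1}^n\mathbb{E}(D_{nk}^2\mid\mathcal{F}_{k-1})/\sigma_{n,q}^2\xrightarrow{p}1$. For (i), expanding $\mathbb{E}(D_{nk}^4)$ produces traces of products of at most four matrices of the form $\mathbf{W}\boldsymbol{\Sigma}_i$, together with lower-order pieces carrying the fourth-moment constants $\kappa_{4i}$ that arise from the non-Gaussian coordinates of $\mathbf{z}_{ij}$ — this is exactly where the bounded eighth moments and the pseudo-independence identity \eqref{pseudo-independence condition of z_ij} of Assumption A are used — and each such term is shown to be $o(\sigma_{n,q}^4)$ by Assumption C, $\operatorname{tr}(\mathbf{W}\boldsymbol{\Sigma}_{i_1}\mathbf{W}\boldsymbol{\Sigma}_{i_2}\mathbf{W}\boldsymbol{\Sigma}_{i_3}\mathbf{W}\boldsymbol{\Sigma}_{i_4})=o(\operatorname{tr}^2\{(\sum_i\mathbf{W}\boldsymbol{\Sigma}_i)^2\})$, combined with the proportionality of the $n_i$ in Assumption B. For (ii), I would first check $\mathbb{E}\sum_k\mathbb{E}(D_{nk}^2\mid\mathcal{F}_{k-1})=\operatorname{Var}(A_n)=\sigma_{n,q}^2(1+o(1))$ and then bound $\operatorname{Var}\big(\sum_k\mathbb{E}(D_{nk}^2\mid\mathcal{F}_{k-1})\big)$, which again reduces to fourth-order trace expressions that Assumptions A and C make $o(\sigma_{n,q}^4)$.

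The \emph{main obstacle} is the combinatorial bookkeeping in steps (i)–(ii): since $A_n$ mixes cross-sample bilinear forms $\mathbf{y}_{i_1j_1}^{\top}\mathbf{W}\mathbf{y}_{i_2j_2}$ with $i_1\neq i_2$ and within-sample ones with $i_1=i_2$, the moment expansions split into many index configurations, classified by which population labels and which sample indices coincide, and the non-Gaussian fourth cumulants $\kappa_{4i}$ generate additional diagonal-type contributions. Arranging all of these so that every surviving term is visibly of the type controlled by Assumption C — in particular verifying that the $\kappa_{4i}$-terms are themselves $o(\operatorname{tr}^2\{(\sum_i\mathbf{W}\boldsymbol{\Sigma}_i)^2\})$ — is the technical heart of the proof; the pseudo-independence condition is precisely what keeps this expansion finite and tractable when the coordinates of $\mathbf{z}_{ij}$ are not genuinely independent.
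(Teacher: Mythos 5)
Your proposal follows essentially the same route as the paper: a Hall--Heyde martingale CLT applied to the centered statistic over the pooled sample $\mathbf{y}_1,\dots,\mathbf{y}_n$, with the conditional variance ratio and a fourth-moment Lyapunov/Lindeberg bound whose trace terms are controlled by Assumptions A--C, and with Assumption D guaranteeing $\sigma_{n,q_2}^2=o(\sigma_{n,q_1}^2)$. The only cosmetic difference is that you remove the mean--noise cross terms $B_n$ up front by a Chebyshev/Slutsky argument, whereas the paper keeps them inside the martingale differences (its $V_t=\sum_{s<t}\varphi_{st}+\psi_t$) and verifies the same two conditions for the combined array; both variants are valid and entail the same moment bookkeeping.
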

%

The proof of Theorem \ref{mean linear test via the weighted-norm: main result} is outlined in Appendix \ref{Appendix2}.
In particular, when the null hypothesis holds, we have the following corollary:
\begin{corollary}
	Under Assumptions A-D and $ \mathrm{H}_0:\sum_{i=1}^{q} \beta_{i} \boldsymbol{\mu}_{i}=\mathbf{0}$, as $p$, $n \to \infty$, we have
	\begin{align}
	\frac{T_{n } }{\sigma_{n,q}}	\stackrel{\text { d }}{\longrightarrow} \mathrm{N} (0,1),
	\end{align}	
\end{corollary}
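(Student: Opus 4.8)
The plan is to prove the asymptotic normality of $T_n$ via a martingale central limit theorem applied to the centred data, with Assumptions A--D entering only in the final verification of two regularity conditions. First I would write $\mathbf{x}_{ij}=\boldsymbol{\mu}_i+\mathbf{y}_{ij}$ with $\mathbf{y}_{ij}=\boldsymbol{\Gamma}_i\mathbf{z}_{ij}$, so that $\mathbb{E}\mathbf{y}_{ij}=\mathbf{0}$ and $\mathbb{E}(\mathbf{y}_{ij}\mathbf{y}_{ij}^{\top})=\boldsymbol{\Sigma}_i$, and substitute into $T_n$. Expanding every bilinear form $\mathbf{x}_{i_1j_1}^{\top}\mathbf{W}\mathbf{x}_{i_2j_2}$ splits $T_n$ into three parts: the deterministic term $(\sum_i\beta_i\boldsymbol{\mu}_i)^{\top}\mathbf{W}(\sum_i\beta_i\boldsymbol{\mu}_i)=\mathbb{E}(T_n)$; a mean-zero ``cross'' part, linear in the $\mathbf{y}_{ij}$, which equals $2(\sum_i\beta_i\boldsymbol{\mu}_i)^{\top}\mathbf{W}(\sum_i\beta_i\bar{\mathbf{y}}_i)$ with $\bar{\mathbf{y}}_i=n_i^{-1}\sum_j\mathbf{y}_{ij}$ and hence has variance exactly $\sigma_{n,q_2}^2$; and a mean-zero ``quadratic'' part $\tilde T_n$ in the $\mathbf{y}_{ij}$ alone, of variance $\sigma_{n,q_1}^2=\sigma_{n,q}^2$. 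By Assumption D together with Assumption B (so $n_i\asymp n$), $\sigma_{n,q_2}^2=o(\sigma_{n,q}^2)$, so Chebyshev's inequality makes the cross part $o_{\mathrm{P}}(\sigma_{n,q})$; it therefore suffices to show $\tilde T_n/\sigma_{n,q}\stackrel{\mathrm{d}}{\longrightarrow}\mathrm{N}(0,1)$.

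Next I would set up the martingale. Relabel the $n=\sum_i n_i$ centred vectors as $\mathbf{v}_1,\dots,\mathbf{v}_n$ (all of population $1$, then population $2$, and so on), write $\sigma(k)$ for the population index of $\mathbf{v}_k$, and collect coefficients so that $\tilde T_n=\sum_{1\le l<k\le n}c_{kl}\,\mathbf{v}_k^{\top}\mathbf{W}\mathbf{v}_l$, where $c_{kl}=2\beta_{\sigma(k)}\beta_{\sigma(l)}/(n_{\sigma(k)}n_{\sigma(l)})$ when $\sigma(k)\neq\sigma(l)$ and $c_{kl}=2\beta_{\sigma(k)}^2/(n_{\sigma(k)}(n_{\sigma(k)}-1))$ when $\sigma(k)=\sigma(l)$. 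With $\mathcal{F}_k=\sigma(\mathbf{v}_1,\dots,\mathbf{v}_k)$, $\mathbf{u}_k:=\mathbf{W}\sum_{l<k}c_{kl}\mathbf{v}_l$ (which is $\mathcal{F}_{k-1}$-measurable) and $D_{nk}:=\mathbf{v}_k^{\top}\mathbf{u}_k$, one has $\tilde T_n=\sum_{k=2}^{n}D_{nk}$ and, since $\mathbf{v}_k$ is independent of $\mathcal{F}_{k-1}$ with mean zero, $\{D_{nk},\mathcal{F}_k\}$ is a martingale difference array. By the martingale central limit theorem (Hall and Heyde, Corollary 3.1), the conclusion follows once I check
\begin{align*}
\text{(i)}\ \ \frac{1}{\sigma_{n,q}^{2}}\sum_{k=2}^{n}\mathbb{E}\bigl[D_{nk}^{2}\mid\mathcal{F}_{k-1}\bigr]\stackrel{\mathrm{P}}{\longrightarrow}1,
\qquad
\text{(ii)}\ \ \frac{1}{\sigma_{n,q}^{4}}\sum_{k=2}^{n}\mathbb{E}\bigl[D_{nk}^{4}\bigr]\longrightarrow 0 .
\end{align*}

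For (i) I would use $\mathbb{E}[D_{nk}^{2}\mid\mathcal{F}_{k-1}]=\mathbf{u}_k^{\top}\boldsymbol{\Sigma}_{\sigma(k)}\mathbf{u}_k$; taking expectations and counting same- and different-population pairs recovers $\mathbb{E}\sum_k\mathbb{E}[D_{nk}^{2}\mid\mathcal{F}_{k-1}]=\sigma_{n,q_1}^2=\sigma_{n,q}^2$, so (i) reduces to $\operatorname{Var}\bigl(\sum_k\mathbf{u}_k^{\top}\boldsymbol{\Sigma}_{\sigma(k)}\mathbf{u}_k\bigr)=o(\sigma_{n,q}^4)$. For (ii), writing $\mathbf{v}_k=\boldsymbol{\Gamma}_{\sigma(k)}\mathbf{z}_k$ and expanding $\mathbb{E}[(\mathbf{z}_k^{\top}\boldsymbol{\Gamma}_{\sigma(k)}^{\top}\mathbf{u}_k)^{4}\mid\mathcal{F}_{k-1}]$ via the moment and pseudo-independence conditions in Assumption A (the cubic-moment contributions vanish because $\mathbb{E}z_{ij\ell}=0$, and $\kappa_{4i}$ is bounded) yields $\mathbb{E}[D_{nk}^{4}\mid\mathcal{F}_{k-1}]\le C(\mathbf{u}_k^{\top}\boldsymbol{\Sigma}_{\sigma(k)}\mathbf{u}_k)^{2}$ for an absolute constant $C$. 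In both (i) and (ii), taking expectations and expanding produces finite linear combinations of quadratic traces $\operatorname{tr}(\mathbf{W}\boldsymbol{\Sigma}_{i_1}\mathbf{W}\boldsymbol{\Sigma}_{i_2})$ and of the quartic traces $\operatorname{tr}(\mathbf{W}\boldsymbol{\Sigma}_{i_1}\mathbf{W}\boldsymbol{\Sigma}_{i_2}\mathbf{W}\boldsymbol{\Sigma}_{i_3}\mathbf{W}\boldsymbol{\Sigma}_{i_4})$, with the quartic-trace coefficients summing to $O(n^{-5})$ or smaller; since $\sigma_{n,q}^2\asymp n^{-2}\operatorname{tr}\{(\sum_i\mathbf{W}\boldsymbol{\Sigma}_i)^{2}\}$ up to constants depending only on the $\beta_i$ and $k_i$, the quadratic-trace part is automatically of smaller order (as $n\to\infty$), and Assumption C --- namely $\operatorname{tr}(\mathbf{W}\boldsymbol{\Sigma}_{i_1}\mathbf{W}\boldsymbol{\Sigma}_{i_2}\mathbf{W}\boldsymbol{\Sigma}_{i_3}\mathbf{W}\boldsymbol{\Sigma}_{i_4})=o(\operatorname{tr}^{2}\{(\sum_i\mathbf{W}\boldsymbol{\Sigma}_i)^{2}\})$ --- is exactly what forces every quartic-trace sum to be $o(\sigma_{n,q}^4)$. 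Theorem \ref{mean linear test via the weighted-norm: main result} follows, and the corollary is the special case $\sum_i\beta_i\boldsymbol{\mu}_i=\mathbf{0}$, for which $\mathbb{E}(T_n)=0$.

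\emph{Main obstacle.} The crux is the bookkeeping underlying (i) and (ii): one must expand $\mathbb{E}[D_{nk}^{4}]$ and $\operatorname{Var}(\sum_k\mathbb{E}[D_{nk}^{2}\mid\mathcal{F}_{k-1}])$ into sums of traces, retain only the index-coincidence patterns that survive the pseudo-independence condition (carrying along the $\kappa_{3i},\kappa_{4i}$ corrections, which turn out to be of genuinely smaller order), and then verify --- using Assumption C and the ordering $0<\omega_1\le\cdots\le\omega_p$ imposed on $\mathbf{W}$ --- that every surviving term is dominated by $\sigma_{n,q}^4$. Steps $1$ and $2$ (the centring/reduction and the martingale decomposition) are routine; essentially all of the work lies in this moment algebra.
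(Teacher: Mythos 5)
Your proposal is correct, and its skeleton coincides with the paper's: a martingale central limit theorem (Hall--Heyde, Corollary 3.1) applied to the centred statistic, with the conditional-variance condition and a Lindeberg/Lyapunov condition verified by expanding second and fourth moments into traces and invoking Assumption C to dominate the quartic traces, exactly as in the paper's terms $K_1$ and $R_1$--$R_2$. The one structural difference is how the linear (mean-related) component is handled. You peel it off first: writing $T_n-\mathbb{E}(T_n)$ as the quadratic part plus $2\left(\sum_i\beta_i\boldsymbol{\mu}_i\right)^{\top}\mathbf{W}\left(\sum_i\beta_i\bar{\mathbf{y}}_i\right)$, you note its variance is exactly $\sigma_{n,q_2}^2$, use Assumptions B and D to get $\sigma_{n,q_2}^2=o(\sigma_{n,q}^2)$, and discard it by Chebyshev before building the martingale; for the corollary under $\mathrm{H}_0$ this term is in fact identically zero, so your reduction is especially clean there. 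The paper instead keeps this component inside the martingale differences (the $\psi_t$ terms in $V_t=\sum_{s<t}\varphi_{st}+\psi_t$) and absorbs its contribution through the extra terms $K_2$--$K_5$ and $R_3$ in the moment computations. The two routes are equivalent in strength here: both rest on Assumption D to make $\sigma_{n,q_1}^2$ the dominant variance, and both reduce to the same trace bookkeeping for the purely quadratic martingale, so your version trades the paper's additional mixed terms for one preliminary Chebyshev step, at the cost of treating the drift term outside rather than inside the martingale normalization used in Theorem \ref{mean linear test via the weighted-norm: main result}.
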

As the quantities $\operatorname{tr}\left(  \mathbf{W} \boldsymbol{\Sigma}_{i_1}  \mathbf{W} \boldsymbol{\Sigma}_{i_2}\right)$ and $\operatorname{tr}\left( \mathbf{W} \boldsymbol{\Sigma}_{i}\right)^2$ are unknown, it becomes necessary to employ their estimators that enjoy ratio consistency to formulate our test procedure based on $T_{n}$. 
The estimators proposed by \cite{li2024test} are utilized for this purpose, as they offer both consistency and computational efficiency, enabling an effective implementation of our testing approach. They are defined as
\begin{align*}
&\operatorname{tr} \widehat{ \left(\mathbf{W} \boldsymbol{\Sigma}_{i}\right)^{2} } \\
=& \frac{- 1}{(n_{i} - 1) (n_{i} - 2) } \sum_{j=1 }^{n_{i}} \left( ( \mathbf{x}_{ij} -  \bar{\mathbf{x}}_{i} )^{\top} \mathbf{W} ( \mathbf{x}_{ij} -  \bar{\mathbf{x}}_{i} )\right)^{2} + \frac{ (n_{i} - 1)^{2}}{n_{i}  (n_{i} - 3) }\operatorname{tr} (\mathbf{W} \mathbf{S}_{i})^{2} \\
&+ \frac{(n_{i} - 1)}{n_{i}(n_{i} - 2) (n_{i} - 3) } \operatorname{tr}^{2} (\mathbf{W} \mathbf{S}_{i})
\end{align*}
and
\begin{align*}
\operatorname{tr} \widehat{\left(\mathbf{W} \boldsymbol{\Sigma}_{i_{1}} \mathbf{W} \boldsymbol{\Sigma}_{i_{2}}\right)} 
= & \operatorname{tr} (\mathbf{W} \mathbf{S}_{i_{1}} \mathbf{W} \mathbf{S}_{i_{2}}  ),
\end{align*}
where $\mathbf{\bar{x}}_{i}$ and $\mathbf{S}_{i}$ stand for the sample mean vector and the sample covariance matrix of the $i$th group, respectively, i.e.,
\begin{align*}
\mathbf{\bar{x}}_{i}=\frac1{n_i}\sum_{j=1}^{n_i}\mathbf{x}_{ij},\quad
\mathbf{S}_{i} = \frac{1}{n_{i}-1} \sum_{j=1}^{n_{i}} (\mathbf{x}_{ij} - \bar{\mathbf{x}}_{i})(\mathbf{x}_{ij} - \bar{\mathbf{x}}_{i})^{\top}.
\end{align*}
Hence, a ratio consistent estimator of $\sigma_{n,q}^{2}$ can be taken as
\begin{align*}
\hat{\sigma}_{n, q}^{2}=  2 \sum_{i_{1} \neq i_{2}} \frac{\beta_{i_{1}} ^{2}\beta_{i_{2}}^{2}}{n_{i_{1}} n_{i_{2}}} \operatorname{tr} \widehat{\left(\mathbf{W} \boldsymbol{\Sigma}_{i_{1}} \mathbf{W} \boldsymbol{\Sigma}_{i_{2}}\right)}+2 \sum_{i=1}^{q} \frac{\beta_{i} ^{4}}{n_{i}\left(n_{i}-1\right)} \operatorname{tr} \widehat{ \left(\mathbf{W} \boldsymbol{\Sigma}_{i}\right)^{2} }.
\end{align*}
Thus by Slutsky's theorem, we obtain the following theorem:
\begin{theorem}\label{Corollary of CLT of mean linear test via the weighted-norm}
	Under Assumptions A-D, and $ \mathrm{H}_0:\sum_{i=1}^{q} \beta_{i} \boldsymbol{\mu}_{i}=\mathbf{0}$ as $p$, $n \to \infty$, we have
	\begin{align}
	\frac{T_{n } }{\hat{\sigma}_{n, q}}	\stackrel{\text { d }}{\longrightarrow} \mathrm{N} (0,1),
	\end{align}	
\end{theorem}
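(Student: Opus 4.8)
The plan is to deduce Theorem~\ref{Corollary of CLT of mean linear test via the weighted-norm} from the Corollary to Theorem~\ref{mean linear test via the weighted-norm: main result} via Slutsky's theorem. Under $\mathrm{H}_0$ that Corollary already yields $T_n/\sigma_{n,q}\stackrel{\mathrm{d}}{\longrightarrow}\mathrm{N}(0,1)$, so writing $T_n/\hat\sigma_{n,q}=(T_n/\sigma_{n,q})\cdot(\sigma_{n,q}/\hat\sigma_{n,q})$ it suffices to prove the ratio consistency $\hat\sigma_{n,q}^2/\sigma_{n,q}^2\stackrel{\mathrm{P}}{\longrightarrow}1$. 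Now $\hat\sigma_{n,q}^2$ and $\sigma_{n,q}^2=\sigma_{n,q_1}^2$ are the \emph{same} finite linear combination, with weights $2\beta_{i_1}^2\beta_{i_2}^2/(n_{i_1}n_{i_2})$ and $2\beta_i^4/\{n_i(n_i-1)\}$, of the estimators $\operatorname{tr}\widehat{(\mathbf{W}\boldsymbol{\Sigma}_{i_1}\mathbf{W}\boldsymbol{\Sigma}_{i_2})}$, $\operatorname{tr}\widehat{(\mathbf{W}\boldsymbol{\Sigma}_i)^2}$ and, respectively, of the population traces $\operatorname{tr}(\mathbf{W}\boldsymbol{\Sigma}_{i_1}\mathbf{W}\boldsymbol{\Sigma}_{i_2})$, $\operatorname{tr}(\mathbf{W}\boldsymbol{\Sigma}_i)^2$. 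Under Assumption~B each weight is of exact order $n^{-2}$; since $\mathbf{W}=\mathbf{\Omega}+\boldsymbol{\alpha}\boldsymbol{\alpha}^\top$ is positive definite and the $\boldsymbol{\Sigma}_i$ are positive semidefinite, each population trace is nonnegative and at most $\operatorname{tr}\{(\sum_{i=1}^q\mathbf{W}\boldsymbol{\Sigma}_i)^2\}$, whence $\sigma_{n,q}^2\asymp n^{-2}\operatorname{tr}\{(\sum_{i=1}^q\mathbf{W}\boldsymbol{\Sigma}_i)^2\}$ up to the fixed constants $\beta_i$. Consequently it is enough to show that \emph{each} trace estimator is unbiased for its target and has variance $o[\operatorname{tr}^2\{(\sum_{i=1}^q\mathbf{W}\boldsymbol{\Sigma}_i)^2\}]$; then $\operatorname{Var}(\hat\sigma_{n,q}^2)=o((\sigma_{n,q}^2)^2)$ follows by Cauchy--Schwarz over the finitely many summands, and Chebyshev's inequality finishes the reduction.

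Unbiasedness is the easy half. Because the $q$ samples are mutually independent and each $\mathbf{S}_i$ is unbiased for $\boldsymbol{\Sigma}_i$, we get $\mathbb{E}\,\operatorname{tr}(\mathbf{W}\mathbf{S}_{i_1}\mathbf{W}\mathbf{S}_{i_2})=\operatorname{tr}(\mathbf{W}\,\mathbb{E}\mathbf{S}_{i_1}\,\mathbf{W}\,\mathbb{E}\mathbf{S}_{i_2})=\operatorname{tr}(\mathbf{W}\boldsymbol{\Sigma}_{i_1}\mathbf{W}\boldsymbol{\Sigma}_{i_2})$ for $i_1\neq i_2$. For the quadratic estimator I would invoke the exact algebraic identity of \cite{li2024test}: the coefficients multiplying $\sum_j((\mathbf{x}_{ij}-\bar{\mathbf{x}}_i)^\top\mathbf{W}(\mathbf{x}_{ij}-\bar{\mathbf{x}}_i))^2$, $\operatorname{tr}(\mathbf{W}\mathbf{S}_i)^2$ and $\operatorname{tr}^2(\mathbf{W}\mathbf{S}_i)$ are designed precisely so that the fourth-moment term $\kappa_{4i}$ produced by the first sum cancels, giving $\mathbb{E}\,\operatorname{tr}\widehat{(\mathbf{W}\boldsymbol{\Sigma}_i)^2}=\operatorname{tr}(\mathbf{W}\boldsymbol{\Sigma}_i)^2$ under the factor model \eqref{general factor model}.

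For the variance, write $\mathbf{S}_i=\boldsymbol{\Sigma}_i+\boldsymbol{\delta}_i$. Expanding $\operatorname{tr}(\mathbf{W}\mathbf{S}_{i_1}\mathbf{W}\mathbf{S}_{i_2})$, its dominant stochastic part is $\operatorname{tr}(\mathbf{W}\boldsymbol{\delta}_{i_1}\mathbf{W}\boldsymbol{\Sigma}_{i_2})+\operatorname{tr}(\mathbf{W}\boldsymbol{\Sigma}_{i_1}\mathbf{W}\boldsymbol{\delta}_{i_2})$; substituting \eqref{general factor model} and using the moment and pseudo-independence conditions \eqref{moment conditions of z_ij}--\eqref{pseudo-independence condition of z_ij}, an index count gives $\operatorname{Var}\{\operatorname{tr}(\mathbf{W}\mathbf{S}_{i_1}\mathbf{W}\mathbf{S}_{i_2})\}=O(n^{-1})$ times a finite sum of fourth-order traces $\operatorname{tr}(\mathbf{W}\boldsymbol{\Sigma}_{a}\mathbf{W}\boldsymbol{\Sigma}_{b}\mathbf{W}\boldsymbol{\Sigma}_{c}\mathbf{W}\boldsymbol{\Sigma}_{d})$, the $(\kappa_{4a}-3)$-contributions being of the same type because $\kappa_{4a}$ is bounded (Assumption~A). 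By Assumption~C every such fourth-order trace is $o[\operatorname{tr}^2\{(\sum_i\mathbf{W}\boldsymbol{\Sigma}_i)^2\}]$, so $\operatorname{Var}\{\operatorname{tr}(\mathbf{W}\mathbf{S}_{i_1}\mathbf{W}\mathbf{S}_{i_2})\}=o[\operatorname{tr}^2\{(\sum_i\mathbf{W}\boldsymbol{\Sigma}_i)^2\}]$. The same expansion, now carried to eighth order in the $z_{ij\ell}$ since $\operatorname{tr}\widehat{(\mathbf{W}\boldsymbol{\Sigma}_i)^2}$ is quadratic in $\mathbf{S}_i$, bounds $\operatorname{Var}\{\operatorname{tr}\widehat{(\mathbf{W}\boldsymbol{\Sigma}_i)^2}\}$ by $O(n^{-1})$ times fourth-order traces and products of lower-order traces, all $o[\operatorname{tr}^2\{(\sum_i\mathbf{W}\boldsymbol{\Sigma}_i)^2\}]$ by Assumption~C. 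With unbiasedness this yields $\hat\sigma_{n,q}^2=\sigma_{n,q}^2(1+o_{\mathrm{P}}(1))$; since $\sigma_{n,q}^2=\operatorname{Var}(T_n)(1+o(1))>0$ under Assumption~D, we obtain $\sigma_{n,q}/\hat\sigma_{n,q}\stackrel{\mathrm{P}}{\longrightarrow}1$, and Slutsky's theorem gives $T_n/\hat\sigma_{n,q}\stackrel{\mathrm{d}}{\longrightarrow}\mathrm{N}(0,1)$.

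The hard part will be the variance bookkeeping in the preceding paragraph: under the general factor model \eqref{general factor model} only the \emph{pseudo}-independence property \eqref{pseudo-independence condition of z_ij} is available rather than full independence of coordinates, so one must enumerate with care the surviving index-coincidence patterns, separate the pieces carrying $\kappa_{3i}$ and $\kappa_{4i}$ from the Gaussian-like ones, and verify in each case that after crude operator-norm and Cauchy--Schwarz reductions the term collapses to one of the fourth-order traces governed by Assumption~C. The remaining ingredients --- unbiasedness, the order-$n^{-2}$ matching of the weights, Cauchy--Schwarz across the finitely many summands, Chebyshev, and Slutsky --- are routine.
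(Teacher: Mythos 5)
Your proposal follows essentially the same route as the paper: under $\mathrm{H}_0$ the Corollary to Theorem~\ref{mean linear test via the weighted-norm: main result} gives $T_n/\sigma_{n,q}\stackrel{\mathrm{d}}{\longrightarrow}\mathrm{N}(0,1)$, and the paper then simply invokes the ratio consistency of the trace estimators of \cite{li2024test} and applies Slutsky's theorem, exactly as you do. The only difference is that you additionally sketch how that ratio consistency (unbiasedness plus a variance bound controlled by Assumption~C) would be verified, a step the paper delegates entirely to \cite{li2024test}; your sketch is consistent with what is needed.
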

According to Theorem \ref{Corollary of CLT of mean linear test via the weighted-norm}, our proposed test rejects $\mathrm{H}_{0}$ if $T_{n } \geq \hat{\sigma}_{n, q} z_{\vartheta}$ for a given significance level $\vartheta$, where  $z_{\vartheta}$  is the upper-$\vartheta$  quantile of $N(0,1)$. 


\subsection{The power analysis}\label{section2.3}
In this subsection, we delve into the power of the proposed test and undertake a preliminary analysis of our suggested statistic under the local alternative hypothesis. From Theorem \ref{mean linear test via the weighted-norm: main result}, we can readily deduce the power of our test as follows: 
\begin{align}\label{general power}
\lim _{ n, p \rightarrow \infty} \mathrm{P}\left( T_{n } \geq \hat{\sigma}_{n, q} z_{\vartheta} \right) 
= \lim _{ n, p \rightarrow \infty} \Phi\left\{-z_{\vartheta}+\frac{ \left( \sum_{i=1}^{q} \beta_{i} \boldsymbol{\mu}_{i} \right)^{\top} \mathbf{W} \left( \sum_{i=1}^{q} \beta_{i} \boldsymbol{\mu}_{i} \right) }{ 	{\sigma}_{n, q}  }\right\},
\end{align}
where $ \Phi(\cdot)$ is the cumulative distribution function of the standard normal random variable.
In order to simplify our analysis and underscore the strengths of our method, we assume that the $q$ population covariance matrices are equal. In this scenario, the asymptotic power \eqref{general power} will be transformed to  
\begin{align}\notag
& \lim _{ n, p \rightarrow \infty} \mathrm{P}\left( T_{n } \geq \hat{\sigma}_{n, q_{1}} z_{\vartheta} \right) \\
\label{equal power}
= & \lim _{ n, p \rightarrow \infty} \Phi\left\{-z_{\vartheta}+\frac{ \left( \sum_{i=1}^{q} \beta_{i} \boldsymbol{\mu}_{i} \right)^{\top} \mathbf{W} \left( \sum_{i=1}^{q} \beta_{i} \boldsymbol{\mu}_{i} \right) }{\sqrt{2 \operatorname{tr}\left( \mathbf{W} \boldsymbol{\Sigma}_1 \right)^2  }  \sqrt{ \sum_{i_{1} \neq i_{2}} \frac{\beta_{i_{1}}^{2}\beta_{i_{2}} ^{2}}{n_{i_{1}} n_{i_{2}}} +  \sum_{i=1}^{q} \frac{\beta_{i} ^{4}}{n_{i}\left(n_{i}-1\right)}  } }\right\}.
\end{align}
With an understanding of the weight matrix $\mathbf{W}$, our next step involves deriving the lower bound for the expression in \eqref{equal power} by considering the case where  $\alpha_{1}=\cdots=\alpha_{p}=\alpha$  and  $0<\omega_{1} \leq   \cdots \leq \omega_{p}$. Let  $\lambda_{1} \leq \lambda_{2} \cdots \leq   \lambda_{p}$  and  $\lambda_{1}^{*} \leq \lambda_{2}^{*} \cdots \leq \lambda_{p}^{*}$  be eigenvalues of  $\mathbf{W}$  and  $\boldsymbol{\Sigma}_{1} = \boldsymbol{\Sigma}_{2}= \cdots=\boldsymbol{\Sigma}_{q}$, respectively. Then by calculating the characteristic function of $\mathbf{W}$, it can be found that  $\omega_1^2\leq\lambda_1\leq\omega_2^2,\dots,\omega_{p-1}^2\leq\lambda_{p-1}\leq\omega_p^2$, and $\left|\lambda_p-\omega_p^2\right|\leq\left\|\mathbf{W}-\mathbf{\Omega}\right\|=p\alpha^2 $. For the true mean vectors $\boldsymbol{\mu}_i$, we consider a scenario with weakly dense, yet non-trivial signals across the linear combination of these vectors:
\begin{align*}
\sum_{i=1}^q \beta_i \boldsymbol{\mu}_i \triangleq \boldsymbol{\mu} =(\overbrace{\nu, \ldots, \nu}^{p^{\delta}}, \overbrace{0, \ldots, 0}^{ p^{1-\delta}})^{\top},
\end{align*}
that is, $\boldsymbol{\mu}$ had $p^{\delta}$ nonzero entries of equal value. By noting that $\operatorname{tr}\left( \mathbf{W} \boldsymbol{\Sigma}_1 \right)^2\leq\lambda_p^{*2}\operatorname{tr}\mathbf{W}^2$, 
we can establish a lower bound such that
\begin{align*}
&\frac{ \left( \sum_{i=1}^{q} \beta_{i} \boldsymbol{\mu}_{i} \right)^{\top} \mathbf{W} \left( \sum_{i=1}^{q} \beta_{i} \boldsymbol{\mu}_{i} \right) }{\sqrt{2 \operatorname{tr}\left( \mathbf{W} \boldsymbol{\Sigma}_1 \right)^2  }  \sqrt{ \sum_{i_{1} \neq i_{2}} \frac{\beta_{i_{1}}^{2}\beta_{i_{2}}^{2}}{n_{i_{1}} n_{i_{2}}} +  \sum_{i=1}^{q} \frac{\beta_{i} ^{4}}{n_{i}\left(n_{i}-1\right)}  } } \\
\geq& \frac{\alpha^{2} p^{2 \delta} \nu^{2}+\nu^{2} \sum_{i=1}^{p^{\delta} } \omega_{i}^{2}}{\sqrt{2\left(\lambda_{p}^{*}\right)^{2}\left(\sum_{i=2}^{p} \omega_{i}^{4} + \omega_{p}^{4} +2p \omega_{p}^{2} \alpha^{2} + p^{2} \alpha^{4} \right)}   \sqrt{ \sum_{i_{1} \neq i_{2}} \frac{\beta_{i_{1}} ^{2}\beta_{i_{2}} ^{2}}{n_{i_{1}} n_{i_{2}}} +  \sum_{i=1}^{q} \frac{\beta_{i}^{4}}{n_{i}\left(n_{i}-1\right)}  }  } .
\end{align*}
Then, we have the following corollary:
\begin{corollary}\label{powerto1}
	Assume $\boldsymbol{\Sigma}_{1} = \boldsymbol{\Sigma}_{2}= \cdots=\boldsymbol{\Sigma}_{q}$, $\alpha_{1}=\cdots=\alpha_{p}=\alpha$  and  $0 < \omega_{1} \leq   \cdots \leq \omega_{p} < \infty$, $\alpha^{2}=O\left(p^{-3 / 4}\right)$, $\lambda_{p}^{*} \sqrt{ \sum_{i_{1} \neq i_{2}} \frac{\beta_{i_{1}}^{2}\beta_{i_{2}}^{2}}{n_{i_{1}} n_{i_{2}}} +  \sum_{i=1}^{q} \frac{\beta_{i}^{4}}{n_{i}\left(n_{i}-1\right)}  } =o\left( \nu^{2} p^{2 \delta -1}\right)$ and Assumptions A-D hold. Then under $ \mathrm{H}_{1}:  \sum_{i=1}^q \beta_i \boldsymbol{\mu}_i \neq 0 $, as $p$, $n \to \infty$, the asymptotic power of our proposed test is given by
	\begin{align*}
	\lim _{ n, p \rightarrow \infty} \mathrm{P}\left( T_{n } \geq \hat{\sigma}_{n, q_{1}} z_{\vartheta} \right) 
	= 1.
	\end{align*}
\end{corollary}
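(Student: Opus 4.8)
\noindent\textbf{Proof sketch for Corollary \ref{powerto1}.}
The plan is to show that the non-centrality term inside $\Phi(\cdot)$ in the power formula \eqref{equal power} tends to $+\infty$; once that is done the claim is immediate, since $\Phi$ is continuous and nondecreasing with $\Phi(t)\to 1$ as $t\to\infty$. Because the $q$ populations are assumed homoscedastic, formula \eqref{equal power} is already available from Theorem \ref{mean linear test via the weighted-norm: main result} together with the consistency of $\hat\sigma_{n,q}$ proved in Theorem \ref{Corollary of CLT of mean linear test via the weighted-norm}, so the entire task reduces to a rate computation. Writing $G_n:=\sqrt{\sum_{i_1\ne i_2}\beta_{i_1}^2\beta_{i_2}^2/(n_{i_1}n_{i_2})+\sum_i\beta_i^4/(n_i(n_i-1))}$, it suffices to prove that $R_{n,p}:=(\sum_i\beta_i\boldsymbol{\mu}_i)^{\top}\mathbf{W}(\sum_i\beta_i\boldsymbol{\mu}_i)\,/\,\bigl(\sqrt{2\operatorname{tr}(\mathbf{W}\boldsymbol{\Sigma}_1)^2}\,G_n\bigr)\to\infty$.

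First I would quote the lower bound for $R_{n,p}$ derived in the display immediately preceding the corollary, which rests on the identity $\boldsymbol{\mu}^{\top}\mathbf{W}\boldsymbol{\mu}=\nu^2\sum_{i=1}^{p^{\delta}}\omega_i^2+\alpha^2\nu^2 p^{2\delta}$, on $\operatorname{tr}(\mathbf{W}\boldsymbol{\Sigma}_1)^2\le(\lambda_p^{*})^2\operatorname{tr}\mathbf{W}^2$, and on $\|\mathbf{W}-\boldsymbol{\Omega}\|\le p\alpha^2$. It then remains to bound that expression from below using the structural hypotheses. Since $0<\omega_1\le\cdots\le\omega_p<\infty$, the numerator is at least $\nu^2\omega_1^2\lfloor p^{\delta}\rfloor$, hence of order $\nu^2 p^{\delta}$ (the nonnegative term $\alpha^2\nu^2 p^{2\delta}$ only helps). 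For the denominator, boundedness of the $\omega_i$ gives $\sum_{i=2}^p\omega_i^4+\omega_p^4=O(p)$, while $\alpha^2=O(p^{-3/4})$ forces $2p\,\omega_p^2\alpha^2=O(p^{1/4})$ and $p^2\alpha^4=O(p^{1/2})$ to be of smaller order, so $\sqrt{2(\lambda_p^{*})^2(\sum_{i=2}^p\omega_i^4+\omega_p^4+2p\omega_p^2\alpha^2+p^2\alpha^4)}=O(\lambda_p^{*}\sqrt{p})$. Combining the two estimates, $R_{n,p}$ is at least a constant multiple of $\nu^2 p^{\delta}/(\lambda_p^{*}\sqrt{p}\,G_n)=c\,\nu^2 p^{\delta-1/2}/(\lambda_p^{*}G_n)$.

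Finally I would invoke the rate hypothesis $\lambda_p^{*}G_n=o(\nu^2 p^{2\delta-1})$, i.e.\ $\nu^2 p^{2\delta-1}/(\lambda_p^{*}G_n)\to\infty$; writing $\nu^2 p^{\delta-1/2}/(\lambda_p^{*}G_n)=\bigl[\nu^2 p^{2\delta-1}/(\lambda_p^{*}G_n)\bigr]\cdot p^{1/2-\delta}$ shows the lower bound, and hence $R_{n,p}$, diverges. Therefore the argument $-z_{\vartheta}+R_{n,p}$ in \eqref{equal power} tends to $+\infty$, and the limiting power equals $\lim\Phi(-z_{\vartheta}+R_{n,p})=1$.

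The computation involves no deep step; it is an order-of-magnitude audit. The one place that genuinely uses the assumptions is the control of $\operatorname{tr}(\mathbf{W}\boldsymbol{\Sigma}_1)^2$: one must be sure the rank-one perturbation $\boldsymbol{\alpha}\boldsymbol{\alpha}^{\top}$ does not inflate $\operatorname{tr}\mathbf{W}^2$ beyond order $p$, which is exactly what $\alpha^2=O(p^{-3/4})$ guarantees by keeping $p\,\omega_p^2\alpha^2$ and $p^2\alpha^4$ of smaller order than $\sum_i\omega_i^4\asymp p$. After that, the only thing to watch is that the powers of $p$ in the numerator and denominator match up against the rate condition in the weakly dense regime, and one should record explicitly that the passage to the limit through $\Phi$ is legitimate because $\Phi$ is continuous and nondecreasing with $\Phi(+\infty)=1$.
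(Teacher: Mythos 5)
Your route is the same one the paper takes: it gives no separate proof of Corollary~\ref{powerto1} beyond the displayed lower bound immediately preceding it, so the whole content of a proof is exactly the order audit you attempt, i.e.\ showing that the non-centrality term in \eqref{equal power} diverges. Your bookkeeping up to the last line is fine: the numerator of the bound is at least $\nu^{2}\omega_{1}^{2}\lfloor p^{\delta}\rfloor\asymp\nu^{2}p^{\delta}$ (plus the nonnegative term $\alpha^{2}\nu^{2}p^{2\delta}$), and since the $\omega_i$ are bounded and $\alpha^{2}=O(p^{-3/4})$ the bracket in the denominator is of order $p$, so the bound is of order $\nu^{2}p^{\delta-1/2}/(\lambda_{p}^{*}G_{n})$.

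The gap is your final inference. From $\nu^{2}p^{2\delta-1}/(\lambda_{p}^{*}G_{n})\to\infty$ and the identity $\nu^{2}p^{\delta-1/2}/(\lambda_{p}^{*}G_{n})=\bigl[\nu^{2}p^{2\delta-1}/(\lambda_{p}^{*}G_{n})\bigr]\,p^{1/2-\delta}$ you conclude divergence, but when $\delta>1/2$ the factor $p^{1/2-\delta}\to0$, and a product of a diverging and a vanishing sequence need not diverge; the implication is valid only for $\delta\le 1/2$, whereas the corollary is advertised precisely for the weakly dense regime $\delta>1/2$. The discarded term $\alpha^{2}\nu^{2}p^{2\delta}$ cannot repair this, since $\alpha^{2}=O(p^{-3/4})$ yields $\alpha^{2}\nu^{2}p^{2\delta}/(\lambda_{p}^{*}\sqrt{p}\,G_{n})=O(p^{-1/4})\cdot\nu^{2}p^{2\delta-1}/(\lambda_{p}^{*}G_{n})$, again not forced to diverge. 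Indeed the stated rate hypothesis alone does not force the conclusion: with $q=2$, $\beta_1=-\beta_2=1$, $n_1=n_2$, $\boldsymbol{\Sigma}_1=\mathbf{I}_p$, bounded $\omega_i$, $\alpha=0$, $\delta=1$ and $\nu^{2}=(np^{0.7})^{-1}$, one has $\lambda_{p}^{*}G_{n}=o(\nu^{2}p^{2\delta-1})$ and Assumptions A--D hold, yet the non-centrality ratio is $\asymp\nu^{2}n\sqrt{p}=p^{-0.2}\to0$, so \eqref{equal power} gives limiting power $\vartheta$, not $1$. To close the argument you must either restrict to $\delta\le 1/2$, or replace the rate condition by $\lambda_{p}^{*}G_{n}=o\bigl(\nu^{2}p^{\delta-1/2}\bigr)$ (which is what your lower bound actually needs, and which is what holds in the regime the paper describes after the corollary, namely $\lambda_p^{*}\asymp1$, $\nu^{2}\asymp G_n$, $\delta>1/2$). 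To be fair, the paper never spells out this step, so the defect lies as much in the corollary's stated hypothesis as in your write-up; but as a proof of the statement as given, your final line does not follow.
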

Corollary \ref{powerto1} indicates that for the asymptotic power of $T_n$ to reach 1, a sufficient condition is $\delta>1/2$, provided that the eigenvalues of $\boldsymbol{\Sigma}_{1}$ are restricted away from $0$ and $\nu^{2} = O \left( \sqrt{ \sum_{i_{1} \neq i_{2}} \frac{\beta_{i_{1}}^{2}\beta_{i_{2}} ^{2}}{n_{i_{1}} n_{i_{2}}} +  \sum_{i=1}^{q} \frac{\beta_{i} ^{4}}{n_{i}\left(n_{i}-1\right)} } \right)$. This theoretically demonstrates that our statistic performs well under weakly dense signal situations. From another perspective, the suggested new test, with the exception of the strong sparse non-zero signals, may be unable to offer significant benefits when $\delta <1/2$. 


\section{Simulation Studies}\label{section3}
In this section, we conduct a series of numerical studies to evaluate the performance of our proposed test, denoted by $T_L$. To provide a comprehensive evaluation, we compare it against various testing procedures for linear hypotheses of mean vectors, including those introduced by \cite{zhou2017high} and \cite{zhang2017linear}, which are subsequently referenced as $T_{U}$ and $T_{C}$, respectively. Notably, these two referenced methods represent the U-statistic and chi-square approximation techniques. For simplicity, we set $q=3$ first and assume the samples are generated from
\begin{align}\label{sample}
\mathbf{x}_{i j}=\boldsymbol{\mu}_i +\boldsymbol{\Sigma}_{i}^{1/2} \mathbf{z}_{i j},\quad j= 1, \ldots, n_i,\quad i=1, 2, 3.
\end{align}
Moreover, the linear coefficients $\beta_i$ are set to be $\beta_1=2$, $\beta_2=-2$ and $\beta_3=-1$. The additional settings for our model are detailed as follows:
\begin{itemize}
	\item [(1)] The covariance structures $\boldsymbol{\Sigma}_{i}$ are considered as the following two cases, representing the common and different covariance matrices, respectively:
	\begin{itemize}
		\item [\textbf{Case 1.}] $\boldsymbol{\Sigma}_{1}=\boldsymbol{\Sigma}_{2}=\boldsymbol{\Sigma}_{3}= \left(2\times0.4^{|i-j|}\right)_{i,j}$.
		\item  [\textbf{Case 2.}]
		$\boldsymbol{\Sigma}_{1}=\left(0.5^{|i-j|}I_{(|i-j|\leq1)}\right)_{i,j}$, $\boldsymbol{\Sigma}_{2}=1.5\boldsymbol{\Sigma}_{1}$ and $\boldsymbol{\Sigma}_{3}=2\boldsymbol{\Sigma}_{1}$.
	\end{itemize}
	\item [(2)] The elements of $\mathbf{z}_{i j}$ are assumed to be independent and identically distributed, sourced from the following distributions:
	\begin{itemize}
		\item The standard normal distribution $N(0,1)$.
		\item The standardized $Gamma(4,1)$ distribution.
		\item The standardized $t(5)$ distribution.
	\end{itemize}
	\item [(3)] Under the null hypothesis, the mean vectors $\boldsymbol{\mu}_i$ are chosen to be $\boldsymbol{\mu}_1=\boldsymbol{\mu}_2=\boldsymbol{\mu}_3=\boldsymbol{0}$. Conversely, for the alternative hypothesis, the mean vectors are considered as follows:
	\begin{align*}
	\boldsymbol{\mu}_{1}=(\kappa,\ldots,\kappa)^{\top},\quad
	\boldsymbol{\mu}_{2}= (\overbrace{0 , \ldots, 0}^{ [p^{1-\rho}] }, \kappa, \ldots, \kappa )^{\top},\quad
	\boldsymbol{\mu}_{3}= (\overbrace{\kappa , \ldots, \kappa }^{ [p^{1-\rho}] }, 0, \ldots, 0 )^{\top},
	\end{align*}
	where $[a]$ denotes the integer part of $a$. Such designs yield that $\sum_{i=1}^3\beta_i\boldsymbol{\mu}_{i}=(\overbrace{\kappa , \ldots, \kappa }^{ [p^{1-\rho}] }, 0, \ldots, 0 )^{\top}$. Here, $\kappa$ is selected to be $\sqrt{3 r\log p(n_1^{-1}+n_2^{-1}+n_3^{-1}) }$, with $r=0.04,0.08,0.12$ to delineate the signal from weak to strong, and $\rho$ is chosen to be 0.1,0.2,0.3,0.4 to modulate the signal's sparsity.
	\item [(4)] The dimensional settings are established as $n_1=0.5n^*$, $n_2=n^*$, $n_3=1.5n^*$ for $n^*=80,120$, and $p=200,400,600$.
	\item [(5)] For the tuning parameter of our test, we adopt 
	$\alpha_{1}=\cdots=\alpha_{p}=\sqrt{5}p^{-3/8}$  and  $\omega_{k}  = \sqrt{2} (1 + \frac{2k}{3p})$ for $k=1, \ldots , p$.
\end{itemize} 

Empirical sizes and powers at 0.05 significance level from 5000 independent replications are reported in Table \ref{size1}-\ref{size2} and \ref{power1normal}-\ref{power2t}, respectively. Results in Table \ref{size1}-\ref{size2} reveal that the test sizes are well-controlled around the 0.05 significance level across all methods, with disparities decreasing as $n$ and $p$ increase. For empirical powers, as depicted in Table \ref{power1normal}-\ref{power2t}, larger dimensions $p$ and signal strength $r$ exhibit higher powers for all the test methods, while all powers decrease with an increasing sparsity $\rho$. These findings confirm Corollary \ref{powerto1}. However, our proposed test $T_L$ showcases superior powers than the other two competitors, especially for lower $\rho$ values, affirming its exceptional performance in detecting weakly dense linear combination signals. 

\begin{table}[htpb]
	\centering
	\caption{Empirical sizes for Case 1.} 
	\label{size1}
	\begin{tabular}{ccccccccccc}
			\toprule
			\multicolumn{1}{c}{\multirow{2}{*}{$p$}} & \multicolumn{1}{c}{\multirow{2}{*}{$n^*$}} &  \multicolumn{3}{c}{$N(0,1)$} &  \multicolumn{3}{c}{$Gamma(4,1)$} & \multicolumn{3}{c}{$t(5)$}  \\
			\cmidrule(r){3-5} \cmidrule(r){6-8} \cmidrule(r){9-11} 
			&&$T_{L}$&$T_{U}$&$T_{C}$&$T_{L}$&$T_{U}$&$T_{C}$&$T_{L}$&$T_{U}$&$T_{C}$ \\
			\hline
			200 & 80  & 0.059 & 0.059 & 0.054 & 0.062 & 0.057 & 0.051 & 0.063 & 0.059 & 0.050 \\
			& 120 & 0.063 & 0.057 & 0.053 & 0.060 & 0.058 & 0.051 & 0.059 & 0.054 & 0.048 \\
			400 & 80  & 0.053 & 0.056 & 0.053 & 0.063 & 0.055 & 0.050 & 0.062 & 0.063 & 0.056 \\
			& 120 & 0.058 & 0.055 & 0.052 & 0.058 & 0.055 & 0.052 & 0.061 & 0.058 & 0.052 \\
			600 & 80  & 0.057 & 0.057 & 0.054 & 0.057 & 0.056 & 0.051 & 0.057 & 0.057 & 0.053 \\
			& 120 & 0.057 & 0.055 & 0.054 & 0.056 & 0.053 & 0.050 & 0.050 & 0.050 & 0.046 \\
			\bottomrule
	\end{tabular}
\end{table}

\begin{table}[htpb]
	\centering
	\caption{Empirical sizes for Case 2.} 
	\label{size2}
	\begin{tabular}{ccccccccccc}
			\toprule
			\multicolumn{1}{c}{\multirow{2}{*}{$p$}} & \multicolumn{1}{c}{\multirow{2}{*}{$n^*$}} &  \multicolumn{3}{c}{$N(0,1)$} &  \multicolumn{3}{c}{$Gamma(4,1)$} & \multicolumn{3}{c}{$t(5)$}  \\
			\cmidrule(r){3-5} \cmidrule(r){6-8} \cmidrule(r){9-11} 
			&&$T_{L}$&$T_{U}$&$T_{C}$&$T_{L}$&$T_{U}$&$T_{C}$&$T_{L}$&$T_{U}$&$T_{C}$ \\
			\hline
			200 & 80  & 0.059 & 0.056 & 0.052 & 0.063 & 0.054 & 0.050 & 0.064 & 0.053 & 0.047 \\
			& 120 & 0.056 & 0.057 & 0.050 & 0.061 & 0.057 & 0.050 & 0.062 & 0.057 & 0.048 \\
			400 & 80  & 0.063 & 0.063 & 0.057 & 0.058 & 0.054 & 0.050 & 0.059 & 0.055 & 0.048 \\
			& 120 & 0.058 & 0.058 & 0.052 & 0.058 & 0.056 & 0.052 & 0.059 & 0.056 & 0.051 \\
			600 & 80  & 0.056 & 0.055 & 0.052 & 0.055 & 0.055 & 0.052 & 0.058 & 0.056 & 0.049 \\
			& 120 & 0.049 & 0.051 & 0.048 & 0.054 & 0.054 & 0.051 & 0.056 & 0.050 & 0.045 \\
			\bottomrule
	\end{tabular}
\end{table}

\begin{table*}[htpb]
	\centering
	\caption{Empirical powers for Case 1 with $N(0,1)$ distributed $\mathbf{z}_{i j}$.} 
	\label{power1normal}
	\scalebox{0.7}
	{
		\begin{tabular}{ccccccccccccccc}
			\toprule
			\multirow{2}{*}{$p$} & \multirow{2}{*}{$n^*$} & \multirow{2}{*}{$r$} & 
			\multicolumn{3}{c}{$\rho=0.1$} & \multicolumn{3}{c}{$\rho=0.2$} & \multicolumn{3}{c}{$\rho=0.3$} & \multicolumn{3}{c}{$\rho=0.4$} \\ \cmidrule(r){4-6} \cmidrule(r){7-9} \cmidrule(r){10-12} \cmidrule(r){13-15}
			&&&$T_{L}$&$T_{U}$&$T_{C}$    &$T_{L}$&$T_{U}$&$T_{C}$ &$T_{L}$&$T_{U}$&$T_{C}$ &$T_{L}$&$T_{U}$&$T_{C}$  \\ \hline
			200                & 80                 & 0.04               & 0.348  & 0.137  & 0.128 & 0.166  & 0.097  & 0.090 & 0.104  & 0.088  & 0.080 & 0.081  & 0.078  & 0.071 \\
			&                    & 0.08               & 0.603  & 0.251  & 0.239 & 0.277  & 0.149  & 0.140 & 0.131  & 0.110  & 0.103 & 0.090  & 0.088  & 0.082 \\
			&                    & 0.12               & 0.786  & 0.377  & 0.358 & 0.384  & 0.221  & 0.210 & 0.180  & 0.144  & 0.137 & 0.096  & 0.101  & 0.092 \\
			400                & 80                 & 0.04               & 0.554  & 0.184  & 0.173 & 0.220  & 0.114  & 0.108 & 0.109  & 0.092  & 0.087 & 0.070  & 0.068  & 0.065 \\
			&                    & 0.08               & 0.877  & 0.399  & 0.391 & 0.395  & 0.208  & 0.202 & 0.156  & 0.127  & 0.119 & 0.091  & 0.082  & 0.077 \\
			&                    & 0.12               & 0.966  & 0.605  & 0.595 & 0.569  & 0.302  & 0.292 & 0.229  & 0.167  & 0.160 & 0.115  & 0.110  & 0.106 \\
			600                & 80                 & 0.04               & 0.727  & 0.231  & 0.227 & 0.270  & 0.131  & 0.126 & 0.121  & 0.088  & 0.084 & 0.075  & 0.070  & 0.066 \\
			&                    & 0.08               & 0.957  & 0.522  & 0.512 & 0.500  & 0.255  & 0.246 & 0.184  & 0.130  & 0.125 & 0.093  & 0.089  & 0.085 \\
			&                    & 0.12               & 0.997  & 0.770  & 0.763 & 0.696  & 0.388  & 0.379 & 0.260  & 0.184  & 0.179 & 0.118  & 0.113  & 0.108 \\
			200                & 120                & 0.04               & 0.355  & 0.148  & 0.134 & 0.154  & 0.098  & 0.091 & 0.092  & 0.083  & 0.075 & 0.074  & 0.066  & 0.061 \\
			&                    & 0.08               & 0.614  & 0.255  & 0.239 & 0.267  & 0.162  & 0.149 & 0.132  & 0.102  & 0.095 & 0.085  & 0.087  & 0.081 \\
			&                    & 0.12               & 0.791  & 0.396  & 0.379 & 0.376  & 0.219  & 0.207 & 0.177  & 0.149  & 0.141 & 0.100  & 0.106  & 0.098 \\
			400                & 120                & 0.04               & 0.560  & 0.190  & 0.183 & 0.217  & 0.114  & 0.107 & 0.110  & 0.091  & 0.086 & 0.075  & 0.069  & 0.064 \\
			&                    & 0.08               & 0.868  & 0.392  & 0.381 & 0.394  & 0.203  & 0.196 & 0.160  & 0.120  & 0.113 & 0.093  & 0.086  & 0.081 \\
			&                    & 0.12               & 0.974  & 0.611  & 0.600 & 0.567  & 0.313  & 0.301 & 0.210  & 0.169  & 0.161 & 0.114  & 0.107  & 0.100 \\
			600                & 120                & 0.04               & 0.719  & 0.225  & 0.214 & 0.264  & 0.133  & 0.127 & 0.113  & 0.088  & 0.085 & 0.078  & 0.073  & 0.070 \\
			&                    & 0.08               & 0.961  & 0.495  & 0.484 & 0.484  & 0.247  & 0.237 & 0.179  & 0.139  & 0.133 & 0.092  & 0.092  & 0.089 \\
			&                    & 0.12               & 0.997  & 0.768  & 0.763 & 0.695  & 0.398  & 0.389 & 0.264  & 0.188  & 0.181 & 0.123  & 0.113  & 0.108 \\ 
			\bottomrule
	\end{tabular}}
\end{table*}

\begin{table*}[htpb]
	\centering
	\caption{Empirical powers for Case 1 with $Gamma(4,1)$ distributed $\mathbf{z}_{i j}$.} 
	\label{power1gamma}
	\scalebox{0.7}
	{\begin{tabular}{ccccccccccccccc}
			\toprule
			\multirow{2}{*}{$p$} & \multirow{2}{*}{$n^*$} & \multirow{2}{*}{$r$} & 
			\multicolumn{3}{c}{$\rho=0.1$} & \multicolumn{3}{c}{$\rho=0.2$} & \multicolumn{3}{c}{$\rho=0.3$} & \multicolumn{3}{c}{$\rho=0.4$} \\ \cmidrule(r){4-6} \cmidrule(r){7-9} \cmidrule(r){10-12} \cmidrule(r){13-15}
			&&&$T_{L}$&$T_{U}$&$T_{C}$    &$T_{L}$&$T_{U}$&$T_{C}$ &$T_{L}$&$T_{U}$&$T_{C}$ &$T_{L}$&$T_{U}$&$T_{C}$  \\ \hline
			200 & 80  & 0.04 & 0.342 & 0.140 & 0.130 & 0.165 & 0.097 & 0.086 & 0.089 & 0.075 & 0.067 & 0.080 & 0.069 & 0.063 \\
			&     & 0.08 & 0.615 & 0.255 & 0.239 & 0.271 & 0.160 & 0.147 & 0.127 & 0.108 & 0.098 & 0.086 & 0.085 & 0.074 \\
			&     & 0.12 & 0.783 & 0.386 & 0.370 & 0.388 & 0.227 & 0.216 & 0.171 & 0.139 & 0.128 & 0.102 & 0.104 & 0.096 \\
			400 & 80  & 0.04 & 0.575 & 0.190 & 0.177 & 0.219 & 0.121 & 0.113 & 0.107 & 0.086 & 0.080 & 0.070 & 0.071 & 0.067 \\
			&     & 0.08 & 0.876 & 0.396 & 0.380 & 0.398 & 0.210 & 0.199 & 0.162 & 0.123 & 0.114 & 0.103 & 0.089 & 0.084 \\
			&     & 0.12 & 0.972 & 0.601 & 0.586 & 0.563 & 0.306 & 0.293 & 0.230 & 0.174 & 0.163 & 0.103 & 0.106 & 0.100 \\
			600 & 80  & 0.04 & 0.709 & 0.224 & 0.213 & 0.277 & 0.138 & 0.130 & 0.121 & 0.095 & 0.089 & 0.076 & 0.070 & 0.065 \\
			&     & 0.08 & 0.955 & 0.493 & 0.483 & 0.505 & 0.246 & 0.235 & 0.193 & 0.131 & 0.124 & 0.095 & 0.090 & 0.084 \\
			&     & 0.12 & 0.996 & 0.761 & 0.752 & 0.701 & 0.393 & 0.380 & 0.262 & 0.194 & 0.185 & 0.113 & 0.119 & 0.113 \\
			200 & 120 & 0.04 & 0.346 & 0.134 & 0.123 & 0.167 & 0.097 & 0.088 & 0.096 & 0.078 & 0.071 & 0.070 & 0.068 & 0.061 \\
			&     & 0.08 & 0.620 & 0.247 & 0.231 & 0.280 & 0.155 & 0.141 & 0.136 & 0.110 & 0.101 & 0.086 & 0.082 & 0.074 \\
			&     & 0.12 & 0.788 & 0.382 & 0.364 & 0.385 & 0.226 & 0.211 & 0.177 & 0.145 & 0.133 & 0.100 & 0.102 & 0.092 \\
			400 & 120 & 0.04 & 0.562 & 0.187 & 0.176 & 0.226 & 0.121 & 0.114 & 0.104 & 0.086 & 0.080 & 0.072 & 0.071 & 0.065 \\
			&     & 0.08 & 0.857 & 0.387 & 0.372 & 0.400 & 0.200 & 0.188 & 0.156 & 0.119 & 0.110 & 0.092 & 0.092 & 0.085 \\
			&     & 0.12 & 0.977 & 0.616 & 0.600 & 0.559 & 0.307 & 0.295 & 0.237 & 0.173 & 0.165 & 0.116 & 0.105 & 0.098 \\
			600 & 120 & 0.04 & 0.715 & 0.231 & 0.220 & 0.267 & 0.123 & 0.118 & 0.115 & 0.095 & 0.088 & 0.082 & 0.078 & 0.072 \\
			&     & 0.08 & 0.959 & 0.518 & 0.503 & 0.504 & 0.255 & 0.247 & 0.192 & 0.129 & 0.121 & 0.096 & 0.089 & 0.081 \\
			&     & 0.12 & 0.996 & 0.770 & 0.758 & 0.689 & 0.387 & 0.376 & 0.262 & 0.194 & 0.184 & 0.125 & 0.117 & 0.112 \\ 
				\bottomrule
	\end{tabular}}
\end{table*}

\begin{table*}[htpb]
	\centering
	\caption{Empirical powers for Case 1 with $t(5)$ distributed $\mathbf{z}_{i j}$.} 
	\label{power1t}
	\scalebox{0.7}
	{\begin{tabular}{ccccccccccccccc}
			\toprule
			\multirow{2}{*}{$p$} & \multirow{2}{*}{$n^*$} & \multirow{2}{*}{$r$} & 
			\multicolumn{3}{c}{$\rho=0.1$} & \multicolumn{3}{c}{$\rho=0.2$} & \multicolumn{3}{c}{$\rho=0.3$} & \multicolumn{3}{c}{$\rho=0.4$} \\ \cmidrule(r){4-6} \cmidrule(r){7-9} \cmidrule(r){10-12} \cmidrule(r){13-15}
			&&&$T_{L}$&$T_{U}$&$T_{C}$    &$T_{L}$&$T_{U}$&$T_{C}$ &$T_{L}$&$T_{U}$&$T_{C}$ &$T_{L}$&$T_{U}$&$T_{C}$  \\ \hline
			200 & 80  & 0.04 & 0.349 & 0.138 & 0.125 & 0.165 & 0.108 & 0.093 & 0.101 & 0.079 & 0.070 & 0.081 & 0.070 & 0.059 \\
			&     & 0.08 & 0.615 & 0.255 & 0.234 & 0.261 & 0.159 & 0.143 & 0.138 & 0.106 & 0.096 & 0.091 & 0.088 & 0.078 \\
			&     & 0.12 & 0.789 & 0.390 & 0.364 & 0.387 & 0.222 & 0.204 & 0.175 & 0.135 & 0.122 & 0.109 & 0.110 & 0.097 \\
			400 & 80  & 0.04 & 0.569 & 0.190 & 0.172 & 0.217 & 0.114 & 0.101 & 0.106 & 0.087 & 0.077 & 0.070 & 0.073 & 0.065 \\
			&     & 0.08 & 0.864 & 0.395 & 0.372 & 0.398 & 0.203 & 0.186 & 0.166 & 0.123 & 0.109 & 0.088 & 0.090 & 0.080 \\
			&     & 0.12 & 0.968 & 0.612 & 0.586 & 0.565 & 0.309 & 0.289 & 0.231 & 0.173 & 0.158 & 0.118 & 0.114 & 0.102 \\
			600 & 80  & 0.04 & 0.719 & 0.239 & 0.223 & 0.266 & 0.133 & 0.123 & 0.115 & 0.089 & 0.081 & 0.076 & 0.074 & 0.066 \\
			&     & 0.08 & 0.956 & 0.517 & 0.494 & 0.489 & 0.239 & 0.221 & 0.194 & 0.143 & 0.132 & 0.098 & 0.094 & 0.086 \\
			&     & 0.12 & 0.997 & 0.760 & 0.742 & 0.695 & 0.385 & 0.367 & 0.263 & 0.202 & 0.186 & 0.115 & 0.107 & 0.098 \\
			200 & 120 & 0.04 & 0.341 & 0.141 & 0.127 & 0.163 & 0.098 & 0.086 & 0.099 & 0.087 & 0.076 & 0.074 & 0.068 & 0.059 \\
			&     & 0.08 & 0.613 & 0.254 & 0.233 & 0.261 & 0.150 & 0.137 & 0.133 & 0.104 & 0.093 & 0.089 & 0.091 & 0.081 \\
			&     & 0.12 & 0.791 & 0.396 & 0.368 & 0.386 & 0.227 & 0.208 & 0.168 & 0.135 & 0.124 & 0.100 & 0.106 & 0.094 \\
			400 & 120 & 0.04 & 0.549 & 0.179 & 0.167 & 0.214 & 0.121 & 0.109 & 0.107 & 0.078 & 0.071 & 0.070 & 0.066 & 0.059 \\
			&     & 0.08 & 0.872 & 0.385 & 0.364 & 0.403 & 0.204 & 0.189 & 0.158 & 0.123 & 0.113 & 0.090 & 0.087 & 0.079 \\
			&     & 0.12 & 0.967 & 0.614 & 0.594 & 0.568 & 0.317 & 0.296 & 0.234 & 0.181 & 0.167 & 0.112 & 0.111 & 0.104 \\
			600 & 120 & 0.04 & 0.723 & 0.232 & 0.218 & 0.267 & 0.128 & 0.119 & 0.123 & 0.091 & 0.083 & 0.081 & 0.074 & 0.067 \\
			&     & 0.08 & 0.959 & 0.508 & 0.489 & 0.487 & 0.239 & 0.223 & 0.184 & 0.139 & 0.130 & 0.100 & 0.089 & 0.081 \\
			&     & 0.12 & 0.996 & 0.760 & 0.747 & 0.682 & 0.383 & 0.364 & 0.266 & 0.200 & 0.186 & 0.122 & 0.116 & 0.106 \\ \bottomrule
	\end{tabular}}
\end{table*}

\begin{table*}[htpb]
	\centering
	\caption{Empirical powers for Case 2 with $N(0,1)$ distributed $\mathbf{z}_{i j}$.} 
	\label{power2normal}
	\scalebox{0.7}
	{\begin{tabular}{ccccccccccccccc}
			\toprule
			\multirow{2}{*}{$p$} & \multirow{2}{*}{$n^*$} & \multirow{2}{*}{$r$} & 
			\multicolumn{3}{c}{$\rho=0.1$} & \multicolumn{3}{c}{$\rho=0.2$} & \multicolumn{3}{c}{$\rho=0.3$} & \multicolumn{3}{c}{$\rho=0.4$} \\ \cmidrule(r){4-6} \cmidrule(r){7-9} \cmidrule(r){10-12} \cmidrule(r){13-15}
			&&&$T_{L}$&$T_{U}$&$T_{C}$    &$T_{L}$&$T_{U}$&$T_{C}$ &$T_{L}$&$T_{U}$&$T_{C}$ &$T_{L}$&$T_{U}$&$T_{C}$  \\ \hline
			200 & 80  & 0.04 & 0.547 & 0.197 & 0.183 & 0.248 & 0.126 & 0.118 & 0.119 & 0.095 & 0.086 & 0.075 & 0.075 & 0.070 \\
			&     & 0.08 & 0.858 & 0.420 & 0.405 & 0.435 & 0.236 & 0.222 & 0.198 & 0.147 & 0.135 & 0.123 & 0.105 & 0.098 \\
			&     & 0.12 & 0.962 & 0.631 & 0.616 & 0.624 & 0.363 & 0.345 & 0.271 & 0.207 & 0.197 & 0.135 & 0.138 & 0.127 \\
			400 & 80  & 0.04 & 0.814 & 0.303 & 0.292 & 0.346 & 0.163 & 0.155 & 0.146 & 0.107 & 0.102 & 0.087 & 0.080 & 0.074 \\
			&     & 0.08 & 0.987 & 0.647 & 0.635 & 0.636 & 0.337 & 0.327 & 0.258 & 0.186 & 0.178 & 0.116 & 0.105 & 0.099 \\
			&     & 0.12 & 1.000 & 0.889 & 0.882 & 0.844 & 0.541 & 0.528 & 0.365 & 0.259 & 0.249 & 0.146 & 0.138 & 0.130 \\
			600 & 80  & 0.04 & 0.931 & 0.388 & 0.377 & 0.422 & 0.181 & 0.175 & 0.159 & 0.119 & 0.112 & 0.093 & 0.085 & 0.081 \\
			&     & 0.08 & 0.999 & 0.813 & 0.805 & 0.759 & 0.419 & 0.408 & 0.293 & 0.204 & 0.200 & 0.118 & 0.114 & 0.108 \\
			&     & 0.12 & 1.000 & 0.970 & 0.969 & 0.926 & 0.653 & 0.644 & 0.432 & 0.316 & 0.306 & 0.166 & 0.152 & 0.146 \\
			200 & 120 & 0.04 & 0.557 & 0.198 & 0.183 & 0.244 & 0.133 & 0.123 & 0.118 & 0.093 & 0.085 & 0.089 & 0.076 & 0.069 \\
			&     & 0.08 & 0.862 & 0.427 & 0.407 & 0.438 & 0.232 & 0.219 & 0.197 & 0.146 & 0.133 & 0.110 & 0.103 & 0.094 \\
			&     & 0.12 & 0.965 & 0.640 & 0.622 & 0.614 & 0.352 & 0.336 & 0.275 & 0.205 & 0.192 & 0.141 & 0.136 & 0.124 \\
			400 & 120 & 0.04 & 0.824 & 0.301 & 0.291 & 0.332 & 0.162 & 0.153 & 0.144 & 0.111 & 0.104 & 0.090 & 0.083 & 0.078 \\
			&     & 0.08 & 0.988 & 0.657 & 0.643 & 0.638 & 0.335 & 0.318 & 0.245 & 0.172 & 0.162 & 0.116 & 0.107 & 0.101 \\
			&     & 0.12 & 0.999 & 0.886 & 0.879 & 0.834 & 0.532 & 0.520 & 0.357 & 0.262 & 0.249 & 0.157 & 0.154 & 0.146 \\
			600 & 120 & 0.04 & 0.929 & 0.386 & 0.374 & 0.439 & 0.196 & 0.187 & 0.166 & 0.118 & 0.111 & 0.087 & 0.078 & 0.074 \\
			&     & 0.08 & 0.999 & 0.821 & 0.811 & 0.763 & 0.413 & 0.400 & 0.297 & 0.209 & 0.202 & 0.113 & 0.109 & 0.103 \\
			&     & 0.12 & 1.000 & 0.969 & 0.966 & 0.921 & 0.662 & 0.650 & 0.422 & 0.313 & 0.303 & 0.164 & 0.155 & 0.148 \\ \bottomrule
	\end{tabular}}
\end{table*}

\begin{table*}[htpb]
	\centering
	\caption{Empirical powers for Case 2 with $Gamma(4,1)$ distributed $\mathbf{z}_{i j}$.} 
	\label{power2gamma}
	\scalebox{0.7}{
	\begin{tabular}{ccccccccccccccc}
			\toprule
			\multirow{2}{*}{$p$} & \multirow{2}{*}{$n^*$} & \multirow{2}{*}{$r$} & 
			\multicolumn{3}{c}{$\rho=0.1$} & \multicolumn{3}{c}{$\rho=0.2$} & \multicolumn{3}{c}{$\rho=0.3$} & \multicolumn{3}{c}{$\rho=0.4$} \\ \cmidrule(r){4-6} \cmidrule(r){7-9} \cmidrule(r){10-12} \cmidrule(r){13-15}
			&&&$T_{L}$&$T_{U}$&$T_{C}$    &$T_{L}$&$T_{U}$&$T_{C}$ &$T_{L}$&$T_{U}$&$T_{C}$ &$T_{L}$&$T_{U}$&$T_{C}$  \\ \hline
			200 & 80  & 0.04 & 0.538 & 0.200 & 0.184 & 0.237 & 0.131 & 0.119 & 0.124 & 0.089 & 0.079 & 0.083 & 0.076 & 0.067 \\
			&     & 0.08 & 0.859 & 0.425 & 0.403 & 0.443 & 0.234 & 0.217 & 0.196 & 0.145 & 0.131 & 0.105 & 0.107 & 0.098 \\
			&     & 0.12 & 0.969 & 0.648 & 0.626 & 0.625 & 0.371 & 0.348 & 0.276 & 0.212 & 0.195 & 0.140 & 0.135 & 0.123 \\
			400 & 80  & 0.04 & 0.815 & 0.299 & 0.285 & 0.355 & 0.170 & 0.160 & 0.143 & 0.109 & 0.100 & 0.082 & 0.084 & 0.075 \\
			&     & 0.08 & 0.986 & 0.648 & 0.634 & 0.641 & 0.346 & 0.331 & 0.242 & 0.173 & 0.162 & 0.113 & 0.110 & 0.101 \\
			&     & 0.12 & 1.000 & 0.890 & 0.883 & 0.833 & 0.534 & 0.515 & 0.361 & 0.259 & 0.246 & 0.151 & 0.147 & 0.139 \\
			600 & 80  & 0.04 & 0.930 & 0.375 & 0.361 & 0.431 & 0.187 & 0.180 & 0.165 & 0.121 & 0.115 & 0.089 & 0.084 & 0.079 \\
			&     & 0.08 & 0.999 & 0.814 & 0.803 & 0.755 & 0.399 & 0.384 & 0.287 & 0.203 & 0.194 & 0.117 & 0.115 & 0.108 \\
			&     & 0.12 & 1.000 & 0.976 & 0.974 & 0.921 & 0.655 & 0.643 & 0.429 & 0.311 & 0.299 & 0.159 & 0.155 & 0.147 \\
			200 & 120 & 0.04 & 0.559 & 0.205 & 0.189 & 0.227 & 0.122 & 0.107 & 0.117 & 0.089 & 0.079 & 0.087 & 0.077 & 0.070 \\
			&     & 0.08 & 0.851 & 0.416 & 0.392 & 0.440 & 0.238 & 0.222 & 0.183 & 0.146 & 0.133 & 0.102 & 0.110 & 0.100 \\
			&     & 0.12 & 0.964 & 0.647 & 0.627 & 0.627 & 0.361 & 0.339 & 0.264 & 0.204 & 0.186 & 0.140 & 0.130 & 0.120 \\
			400 & 120 & 0.04 & 0.820 & 0.296 & 0.286 & 0.350 & 0.162 & 0.154 & 0.144 & 0.105 & 0.095 & 0.082 & 0.086 & 0.076 \\
			&     & 0.08 & 0.988 & 0.661 & 0.647 & 0.639 & 0.339 & 0.322 & 0.253 & 0.171 & 0.160 & 0.123 & 0.107 & 0.099 \\
			&     & 0.12 & 1.000 & 0.890 & 0.882 & 0.829 & 0.528 & 0.512 & 0.346 & 0.256 & 0.242 & 0.152 & 0.144 & 0.135 \\
			600 & 120 & 0.04 & 0.930 & 0.392 & 0.379 & 0.430 & 0.189 & 0.181 & 0.165 & 0.116 & 0.111 & 0.087 & 0.084 & 0.077 \\
			&     & 0.08 & 0.998 & 0.813 & 0.802 & 0.770 & 0.432 & 0.419 & 0.291 & 0.206 & 0.197 & 0.119 & 0.114 & 0.107 \\
			&     & 0.12 & 1.000 & 0.969 & 0.965 & 0.923 & 0.659 & 0.649 & 0.424 & 0.317 & 0.305 & 0.163 & 0.160 & 0.151 \\ \bottomrule
	\end{tabular}}
\end{table*}

\begin{table*}[htpb]
	\centering
	\caption{Empirical powers for Case 2 with $t(5)$ distributed $\mathbf{z}_{i j}$.} 
	\label{power2t}
	\scalebox{0.7}{
	\begin{tabular}{ccccccccccccccc}
			\toprule
			\multirow{2}{*}{$p$} & \multirow{2}{*}{$n^*$} & \multirow{2}{*}{$r$} & 
			\multicolumn{3}{c}{$\rho=0.1$} & \multicolumn{3}{c}{$\rho=0.2$} & \multicolumn{3}{c}{$\rho=0.3$} & \multicolumn{3}{c}{$\rho=0.4$} \\ \cmidrule(r){4-6} \cmidrule(r){7-9} \cmidrule(r){10-12} \cmidrule(r){13-15}
			&&&$T_{L}$&$T_{U}$&$T_{C}$    &$T_{L}$&$T_{U}$&$T_{C}$ &$T_{L}$&$T_{U}$&$T_{C}$ &$T_{L}$&$T_{U}$&$T_{C}$  \\ \hline
			200 & 80  & 0.04 & 0.569 & 0.201 & 0.181 & 0.241 & 0.129 & 0.115 & 0.125 & 0.096 & 0.085 & 0.075 & 0.071 & 0.063 \\
			&     & 0.08 & 0.858 & 0.421 & 0.395 & 0.434 & 0.223 & 0.203 & 0.200 & 0.145 & 0.130 & 0.105 & 0.104 & 0.092 \\
			&     & 0.12 & 0.970 & 0.638 & 0.613 & 0.620 & 0.367 & 0.343 & 0.261 & 0.203 & 0.182 & 0.135 & 0.129 & 0.116 \\
			400 & 80  & 0.04 & 0.819 & 0.292 & 0.270 & 0.350 & 0.163 & 0.149 & 0.150 & 0.109 & 0.099 & 0.096 & 0.082 & 0.073 \\
			&     & 0.08 & 0.986 & 0.666 & 0.644 & 0.640 & 0.330 & 0.310 & 0.250 & 0.173 & 0.159 & 0.115 & 0.108 & 0.098 \\
			&     & 0.12 & 1.000 & 0.888 & 0.876 & 0.830 & 0.532 & 0.508 & 0.361 & 0.258 & 0.239 & 0.162 & 0.154 & 0.138 \\
			600 & 80  & 0.04 & 0.931 & 0.392 & 0.371 & 0.429 & 0.188 & 0.175 & 0.167 & 0.118 & 0.108 & 0.087 & 0.076 & 0.068 \\
			&     & 0.08 & 1.000 & 0.813 & 0.801 & 0.755 & 0.417 & 0.403 & 0.290 & 0.196 & 0.183 & 0.120 & 0.110 & 0.100 \\
			&     & 0.12 & 1.000 & 0.977 & 0.975 & 0.922 & 0.658 & 0.640 & 0.435 & 0.323 & 0.302 & 0.167 & 0.168 & 0.153 \\
			200 & 120 & 0.04 & 0.564 & 0.206 & 0.185 & 0.230 & 0.125 & 0.112 & 0.121 & 0.094 & 0.084 & 0.082 & 0.073 & 0.065 \\
			&     & 0.08 & 0.865 & 0.423 & 0.396 & 0.450 & 0.238 & 0.220 & 0.197 & 0.145 & 0.129 & 0.108 & 0.099 & 0.091 \\
			&     & 0.12 & 0.968 & 0.644 & 0.622 & 0.626 & 0.370 & 0.347 & 0.267 & 0.207 & 0.189 & 0.143 & 0.134 & 0.118 \\
			400 & 120 & 0.04 & 0.816 & 0.306 & 0.287 & 0.348 & 0.160 & 0.147 & 0.148 & 0.107 & 0.095 & 0.074 & 0.077 & 0.069 \\
			&     & 0.08 & 0.985 & 0.662 & 0.645 & 0.633 & 0.337 & 0.320 & 0.241 & 0.174 & 0.163 & 0.119 & 0.112 & 0.101 \\
			&     & 0.12 & 0.999 & 0.889 & 0.881 & 0.831 & 0.535 & 0.516 & 0.354 & 0.270 & 0.251 & 0.155 & 0.148 & 0.138 \\
			600 & 120 & 0.04 & 0.930 & 0.387 & 0.369 & 0.430 & 0.189 & 0.176 & 0.150 & 0.103 & 0.096 & 0.086 & 0.081 & 0.072 \\
			&     & 0.08 & 0.999 & 0.804 & 0.794 & 0.756 & 0.414 & 0.396 & 0.294 & 0.209 & 0.194 & 0.129 & 0.115 & 0.107 \\
			&     & 0.12 & 1.000 & 0.976 & 0.973 & 0.916 & 0.655 & 0.640 & 0.432 & 0.318 & 0.301 & 0.166 & 0.163 & 0.153 \\ \bottomrule
	\end{tabular}}
\end{table*}

\section{Conclusions and discussions}
In this article, we propose a novel testing procedure for a linear combination of the mean vectors of several populations. 
Some of the existing tests are already included in our suggested framework, such as \cite{chen2010two,li2017test,zhou2017high, jiang022nonparametric}. According to our theoretical analysis and numerical results, the suggested test can have substantial power improvements in the presence of the weakly dense nonzero signal in the linear combination of the true mean vectors. Throughout this paper, we shall limit the case study in this article to the following: $0<\omega_{1} \leq \cdots \leq \omega_{p}$, and $\alpha_{1} = \cdots = \alpha_{p}$. Certainly, it would be interesting to explore other weighting function alternatives. To conclude the article, an intriguing potential future expansion is described below: First, in the case when the sample sizes is fixed but the data dimension is divergent, we might reconsider the original problem \eqref{test on linear hypothesis of means1}, see, e.g., \cite{li2023finite}. Second, it is also interesting to consider the more general linear hypothesis testing problem:
$\mathrm{H}_{0}: \mathbf{G} \mathbf{M}=\mathbf{0}$. where  $\mathbf{M}=\left(\boldsymbol{\mu}_{1}, \ldots, \boldsymbol{\mu}_{q}\right)^{\top}$  is a  $q \times p$  matrix consisting of the  $q$  mean vectors, and  $\mathbf{M}$  is a  $r \times q$ known coefficient matrix. More endeavors in this regard are required.

\backmatter




\bmhead{Funding}
Bai's research was supported by National Natural Science Foundation of China (No. 12271536, No. 12171198), and Natural Science Foundation of Jilin Province (No. 20210101147JC). 
\section*{Declarations}
\bmhead{Conflict of interest} On behalf of all authors, the corresponding author states that there is no conflict of interest.

\begin{appendices}

\section{Proofs of main results}\label{Appendix}
\subsection{Proof of Lemma \ref{linear mean test using random integration}}\label{Appendix1} 
Assume that $\breve{\mathbf{x}}$ is independent copies of  $\mathbf{x}$, we have by Fubini's theorem,
\begin{align*}
\operatorname{RI}_{w}(\mathbf{x})  =&\int \mathbb{E}^{2}(\boldsymbol{\theta }^{\top} \mathbf{x}) w(\boldsymbol{\theta }) d \boldsymbol{\theta } \\
=&\int \mathbb{E}(\boldsymbol{\theta }^{\top}\mathbf{x}) \mathbb{E}(\boldsymbol{\theta }^{\top}\breve{\mathbf{x}}) w(\boldsymbol{\theta }) d \boldsymbol{\theta } \\
=& \int \mathbb{E}\left(\boldsymbol{\theta }^{\top}\mathbf{x} \boldsymbol{\theta }^{\top}\breve{\mathbf{x}}\right) w(\boldsymbol{\theta }) d \boldsymbol{\theta } \\
=& \mathbb{E}\left\{\int\left(\boldsymbol{\theta }^{\top}\mathbf{x} \boldsymbol{\theta }^{\top}\breve{\mathbf{x}}\right) w(\boldsymbol{\theta }) d \boldsymbol{\theta }\right\} \\
=&\boldsymbol{\mu}^{\top} \mathbf{\Omega} \boldsymbol{\mu} + \left(\boldsymbol{\mu}^{\top} \boldsymbol{\alpha}\right)^{2}.
\end{align*}
\subsection{Proof of Theorem \ref{mean linear test via the weighted-norm: main result}}\label{Appendix2}  
Write
\begin{align}
T_{n}  = T_{n1}  + T_{n2}
\end{align}
where 
\begin{align*}
T_{n1} = \sum_{i_{1} \neq i_{2}} \frac{\beta_{i_{1}} \beta_{i_{2}} }
{n_{i_{1}} n_{i_{2}}} \sum_{j_{1}, j_{2}}\left(\mathbf{x}_{i_{1} j_{1}}\right)^{\top} \mathbf{W} \left(\mathbf{x}_{i_{2} j_{2}} \right)
\end{align*}
and
\begin{align*}
T_{n2} = \sum_{i=1}^{q} \frac{\beta_{i}^{2}}{n_{i}\left(n_{i}-1\right)} \sum_{j_{1} \neq j_{2}}\left(\mathbf{x}_{i j_{1}}\right)^{\top} \mathbf{W} \left(\mathbf{x}_{i j_{2}} \right).
\end{align*}
It is very easy to show that
\begin{align*}
\mathbb{E} \left( T_{n} \right) =& \mathbb{E} \left(  T_{n1} \right) +  \mathbb{E} \left( T_{n2} \right) \\
=&  \sum_{i_{1} \neq i_{2}} \frac{\beta_{i_{1}} \beta_{i_{2}} }
{n_{i_{1}} n_{i_{2}}} \sum_{j_{1}, j_{2}}\left( \mathbb{E} \mathbf{x}_{i_{1} j_{1}}\right)^{\top} \mathbf{W} \left( \mathbb{E} \mathbf{x}_{i_{2} j_{2}} \right) \\
& + \sum_{i=1}^{q} \frac{\beta_{i}^{2}}{n_{i}\left(n_{i}-1\right)} \sum_{j_{1} \neq j_{2}}\left( \mathbb{E} \mathbf{x}_{i j_{1}}\right)^{\top} \mathbf{W} \left( \mathbb{E} \mathbf{x}_{i j_{2}} \right) \\
=&  \sum_{i_{1} \neq i_{2}} \frac{\beta_{i_{1}} \beta_{i_{2}} }
{n_{i_{1}} n_{i_{2}}} \sum_{j_{1}, j_{2}} \boldsymbol{\mu}_{i_{1}}^{\top} \mathbf{W} \boldsymbol{\mu}_{i_{2}} + \sum_{i=1}^{q} \frac{\beta_{i}^{2}}{n_{i}\left(n_{i}-1\right)} \sum_{j_{1} \neq j_{2}}  \boldsymbol{\mu}_{i}^{\top} \mathbf{W} \boldsymbol{\mu}_{i} \\
= & \left( \sum_{i=1}^{q} \beta_{i} \boldsymbol{\mu}_{i} \right)^{\top} \mathbf{W} \left( \sum_{i=1}^{q} \beta_{i} \boldsymbol{\mu}_{i} \right).
\end{align*} 
For the variance of $T_n$, we decompose it into
\begin{align*}
\operatorname{Var} \left( T_{n} \right) =& \operatorname{Var} \left(  T_{n1} \right) +  \operatorname{Var} \left( T_{n2} \right)  + 2\operatorname{Cov} \left( T_{n1}, T_{n2} \right),
\end{align*}
and calculate each term. We have
\begin{align}
&\operatorname{Var} \left(  T_{n1} \right) \notag\\
=& \mathbb{E}\left( \sum_{i_1 \neq i_2} \sum_{i_3 \neq i_4} \frac{\beta_{i_1} \beta_{i_2} \beta_{i_3} \beta_{i_4} }{n_{i_1} n_{i_2} n_{i_3} n_{i_4}} \sum_{j_1, j_2} \sum_{j_3, j_4} \left( \mathbf{x}_{i_1 j_1}\right)^{\top} \mathbf{W} \mathbf{x}_{i_2 j_2}\left(\mathbf{x}_{i_3 j_3}\right)^{\top} \mathbf{W} \mathbf{x}_{i_4 j_4} \right) \notag \\
&-\left( \sum_{i_{1} \neq i_{2}} \frac{\beta_{i_{1}} \beta_{i_{2}} }
{n_{i_{1}} n_{i_{2}}} \sum_{j_{1}, j_{2}} \boldsymbol{\mu}_{i_{1}}^{\top} \mathbf{W} \boldsymbol{\mu}_{i_{2}}  \right)^{2} \notag \\
= &  \sum_{i_1 \neq i_2 \neq i_3 \neq i_4} \beta_{i_1} \beta_{i_2} \beta_{i_3} \beta_{i_4} \boldsymbol{\mu}_{i_1}^{\top} \mathbf{W} \boldsymbol{\mu}_{i_2}  \mathbf{W} \boldsymbol{\mu}_{i_3}^{\top} \boldsymbol{\mu}_{i_4}  + 4 \sum_{i_1 \neq i_2 \neq i_{3}} \frac{  \beta_{i_1} \beta_{i_2} \beta_{i_3} \beta_{i_2} }{ n_{i_1} n_{i_2}^{2} n_{i_3}} \sum_{j_1, j_2} \sum_{j_3, j_4} \mathbb{E}  \left( \left( \mathbf{x}_{i_1 j_1}\right)^{\top}  \notag \right. \\ 
& \left. \cdot \mathbf{W} \mathbf{x}_{i_2 j_2} \left(\mathbf{x}_{i_3 j_3}\right)^{\top} \mathbf{W} \mathbf{x}_{i_2 j_4}  \right) + 2 \sum_{i_1 \neq i_2}   \frac{ \beta_{i_1}^{2} \beta_{i_2}^{2} }{n_{i_1}^{2} n_{i_2}^{2}} \sum_{j_1, j_2} \sum_{j_3, j_4} \mathbb{E} \left(\left( \mathbf{x}_{i_1 j_1}\right)^{\top} \mathbf{W} \mathbf{x}_{i_2 j_2} \left( \mathbf{x}_{i_1 j_3}\right)^{\top} \mathbf{W} \mathbf{x}_{i_2 j_4}  \right) \notag \\ 
&-\left( \sum_{i_{1} \neq i_{2}} \frac{\beta_{i_{1}} \beta_{i_{2}} }
{n_{i_{1}} n_{i_{2}}} \sum_{j_{1}, j_{2}} \boldsymbol{\mu}_{i_{1}}^{\top} \mathbf{W} \boldsymbol{\mu}_{i_{2}}  \right)^{2} \notag  \\
= &  \sum_{i_1 \neq i_2 \neq i_3 \neq i_4} \beta_{i_1} \beta_{i_2} \beta_{i_3} \beta_{i_4} \boldsymbol{\mu}_{i_1}^{\top} \mathbf{W} \boldsymbol{\mu}_{i_2}  \mathbf{W} \boldsymbol{\mu}_{i_3}^{\top} \boldsymbol{\mu}_{i_4}  + 4 \sum_{i_1 \neq i_2 \neq i_{3}} \frac{  \beta_{i_1} \beta_{i_2} \beta_{i_3} \beta_{i_2} }{ n_{i_1} n_{i_2}^{2} n_{i_3}} \sum_{j_1, j_3} \left( \sum_{j_2 \neq j_4} +\sum_{j_2 = j_4}\right)  \notag\\ 
&\mathbb{E}  \left( \left( \mathbf{x}_{i_2 j_2}\right)^{\top}   \mathbf{W} \mathbf{x}_{i_1 j_1} \left(\mathbf{x}_{i_3 j_3}\right)^{\top} \mathbf{W} \mathbf{x}_{i_2 j_4}  \right) + 2 \sum_{i_1 \neq i_2}   \frac{ \beta_{i_1}^{2} \beta_{i_2}^{2} }{n_{i_1}^{2} n_{i_2}^{2}} \sum_{j_2, j_4} \sum_{j_1, j_3} \mathbb{E} \left(\left( \mathbf{x}_{i_1 j_1}\right)^{\top} \mathbf{W} \mathbf{x}_{i_2 j_2} \left( \mathbf{x}_{i_2 j_4}\right)^{\top} \mathbf{W} \mathbf{x}_{i_1 j_3}  \right) \notag \\
&-\left( \sum_{i_{1} \neq i_{2}} \frac{\beta_{i_{1}} \beta_{i_{2}} }
{n_{i_{1}} n_{i_{2}}} \sum_{j_{1}, j_{2}} \boldsymbol{\mu}_{i_{1}}^{\top} \mathbf{W} \boldsymbol{\mu}_{i_{2}}  \right)^{2}\notag  \\
= &  \sum_{i_1 \neq i_2 \neq i_3 \neq i_4} \beta_{i_1} \beta_{i_2} \beta_{i_3} \beta_{i_4} \boldsymbol{\mu}_{i_1}^{\top} \mathbf{W} \boldsymbol{\mu}_{i_2} \mathbf{W} \boldsymbol{\mu}_{i_3}^{\top} \boldsymbol{\mu}_{i_4}  + 4 \sum_{i_1 \neq i_2 \neq i_{3}} \beta_{i_1} \beta_{i_2} \beta_{i_3} \beta_{i_2}   \left( \boldsymbol{\mu}_{i_2}^{\top} \mathbf{W} \boldsymbol{\mu}_{i_1} \boldsymbol{\mu}_{i_3}^{\top} \mathbf{W} \boldsymbol{\mu}_{i_2} + \notag \right. \\
& \left.+ \frac{1}{n_{i_2}} \boldsymbol{\mu}_{i_1}^{\top} \mathbf{W} \boldsymbol{\Sigma}_{i_2} \mathbf{W} \boldsymbol{\mu}_{i_3} \right) + 2 \sum_{i_1 \neq i_2}  \beta_{i_1}^{2} \beta_{i_2}^{2}   \left( \frac{1}{n_{i_1} n_{i_2}}  \operatorname{tr}\left(  \mathbf{W} \boldsymbol{\Sigma}_{i_1}  \mathbf{W} \boldsymbol{\Sigma}_{i_2}\right) +  \boldsymbol{\mu}_{i_{1}}^{\top} \mathbf{W} \boldsymbol{\mu}_{i_{2}} \boldsymbol{\mu}_{i_{1}}^{\top} \mathbf{W} \boldsymbol{\mu}_{i_{2}} + \notag \right. \\
& \left. \frac{1}{ n_{i_2}} \boldsymbol{\mu}_{i_1}^{\top} \mathbf{W} \boldsymbol{\Sigma}_{i_2} \mathbf{W} \boldsymbol{\mu}_{i_1}  + \frac{1}{n_{i_1} }  \boldsymbol{\mu}_{i_2}^{\top} \mathbf{W} \boldsymbol{\Sigma}_{i_1} \mathbf{W} \boldsymbol{\mu}_{i_2}\right)  - \left( \sum_{i_{1} \neq i_{2}} \frac{\beta_{i_{1}} \beta_{i_{2}} }
{n_{i_{1}} n_{i_{2}}} \sum_{j_{1}, j_{2}} \boldsymbol{\mu}_{i_{1}}^{\top} \mathbf{W} \boldsymbol{\mu}_{i_{2}}  \right)^{2} \notag \\
=&   \sum_{i_1 \neq i_2 \neq i_3 \neq i_4} \beta_{i_1} \beta_{i_2} \beta_{i_3} \beta_{i_4} \boldsymbol{\mu}_{i_1}^{\top} \mathbf{W} \boldsymbol{\mu}_{i_2} \mathbf{W} \boldsymbol{\mu}_{i_3}^{\top} \boldsymbol{\mu}_{i_4}    + 4 \sum_{i_1 \neq i_2 \neq i_{3}} \beta_{i_1} \beta_{i_2} \beta_{i_3} \beta_{i_2}   \boldsymbol{\mu}_{i_2}^{\top} \mathbf{W} \boldsymbol{\mu}_{i_1} \boldsymbol{\mu}_{i_3}^{\top} \mathbf{W} \boldsymbol{\mu}_{i_2}                            \notag      \\
& + 2 \sum_{i_1 \neq i_2}  \beta_{i_1}^{2} \beta_{i_2}^{2}  \boldsymbol{\mu}_{i_{1}}^{\top} \mathbf{W} \boldsymbol{\mu}_{i_{2}} \boldsymbol{\mu}_{i_{1}}^{\top} \mathbf{W} \boldsymbol{\mu}_{i_{2}}  +  4 \sum_{i_1 \neq i_2 \neq i_{3}} \beta_{i_1} \beta_{i_2} \beta_{i_3} \beta_{i_2} \frac{1}{n_{i_2}}    \boldsymbol{\mu}_{i_1}^{\top} \mathbf{W} \boldsymbol{\Sigma}_{i_2} \mathbf{W} \boldsymbol{\mu}_{i_3}   \notag           \\ 
&+   2 \sum_{i_1 \neq i_2}  \beta_{i_1}^{2} \beta_{i_2}^{2}    \left( \frac{1}{ n_{i_2}}  \boldsymbol{\mu}_{i_1}^{\top} \mathbf{W} \boldsymbol{\Sigma}_{i_2} \mathbf{W} \boldsymbol{\mu}_{i_1}  + \frac{1}{ n_{i_1}} \boldsymbol{\mu}_{i_2}^{\top} \mathbf{W} \boldsymbol{\Sigma}_{i_1} \mathbf{W} \boldsymbol{\mu}_{i_2}\right) +2 \sum_{i_1 \neq i_2}  \beta_{i_1}^{2} \beta_{i_2}^{2} \frac{1}{n_{i_1} n_{i_2}}  \operatorname{tr}\left(  \mathbf{W} \boldsymbol{\Sigma}_{i_1}  \mathbf{W} \boldsymbol{\Sigma}_{i_2}\right)   \notag  \\
&- \left( \sum_{i_{1} \neq i_{2}} \frac{\beta_{i_{1}} \beta_{i_{2}} }
{n_{i_{1}} n_{i_{2}}} \sum_{j_{1}, j_{2}} \boldsymbol{\mu}_{i_{1}}^{\top} \mathbf{W} \boldsymbol{\mu}_{i_{2}}  \right)^{2} \notag\\
=&  4 \sum_{i_1 \neq i_2 \neq i_{3}} \beta_{i_1} \beta_{i_2} \beta_{i_3} \beta_{i_2} \frac{1}{n_{i_2}}    \boldsymbol{\mu}_{i_1}^{\top} \mathbf{W} \boldsymbol{\Sigma}_{i_2} \mathbf{W} \boldsymbol{\mu}_{i_3} +   2 \sum_{i_1 \neq i_2}  \beta_{i_1}^{2} \beta_{i_2}^{2}    \left( \frac{1}{ n_{i_2}}  \boldsymbol{\mu}_{i_1}^{\top} \mathbf{W} \boldsymbol{\Sigma}_{i_2} \mathbf{W} \boldsymbol{\mu}_{i_1}  + \frac{1}{ n_{i_1}} \boldsymbol{\mu}_{i_2}^{\top} \mathbf{W} \boldsymbol{\Sigma}_{i_1} \mathbf{W} \boldsymbol{\mu}_{i_2}\right)  \notag \\
&+2 \sum_{i_1 \neq i_2}  \beta_{i_1}^{2} \beta_{i_2}^{2} \frac{1}{n_{i_1} n_{i_2}}  \operatorname{tr}\left(  \mathbf{W} \boldsymbol{\Sigma}_{i_1}  \mathbf{W} \boldsymbol{\Sigma}_{i_2}\right)     \notag    \\
=& 4 \sum_{i_1 \neq i_2} \sum_{i_3 \neq i_2} \beta_{i_1} \beta_{i_2}^2 \beta_{i_3} \frac{1}{n_{i_2}} \boldsymbol{\mu}_{i_1}^{\top} \mathbf{W} \boldsymbol{\Sigma}_{i_2}\mathbf{W} \boldsymbol{\mu}_{i_3}  +2 \sum_{i_1 \neq i_2} \beta_{i_1}^2 \beta_{i_2}^2 \frac{1}{n_{i_1} n_{i_2}} \operatorname{tr}\left(  \mathbf{W} \boldsymbol{\Sigma}_{i_1}  \mathbf{W} \boldsymbol{\Sigma}_{i_2}\right) \label{varTn1}. 
\end{align}
Similarly, we can obtain
\begin{align}
&\operatorname{Var} \left( T_{n2} \right) \notag\\
= & \mathbb{E}\left(\sum_{i_1=1}^q \sum_{i_2=1}^q \frac{\beta_{i_1} ^2\beta_{i_2}^2}{n_{i_1}\left(n_{i_1}-1\right) n_{i_2}\left(n_{i_2}-1\right)} \sum_{j_1 \neq j_2} \sum_{j_3 \neq j_4}\left(\mathbf{x}_{i_1 j_1} \right)^{\top} \mathbf{W} \mathbf{x}_{i_1 j_2} \left(\mathbf{x}_{i_2 j_3} \right)^{\top} \mathbf{W} \mathbf{x}_{i_2 j_4}\right)     \notag\\
&-  \left(\sum_{i=1}^{q} \frac{\beta_{i}^{2}}{n_{i}\left(n_{i}-1\right)} \sum_{j_{1} \neq j_{2}}  \boldsymbol{\mu}_{i}^{\top} \mathbf{W} \boldsymbol{\mu}_{i}\right)^{2}    \notag\\
=& \sum_{i_1 \neq i_2}^q  \frac{\beta_{i_1} ^2\beta_{i_2}^2}{n_{i_1}\left(n_{i_1}-1\right) n_{i_2}\left(n_{i_2}-1\right)} \sum_{j_1 \neq j_2} \sum_{j_3 \neq j_4} \mathbb{E} \left(\mathbf{x}_{i_1 j_1} \right)^{\top} \mathbf{W} \mathbf{x}_{i_1 j_2} \left(\mathbf{x}_{i_2 j_3} \right)^{\top} \mathbf{W} \mathbf{x}_{i_2 j_4}    \notag\\
&+ \sum_{i_1=1}^q  \frac{\beta_{i_1}^4 }{n_{i_1}^2 \left(n_{i_1}-1 \right)^2  }  \sum_{j_1 \neq j_2} \sum_{j_3 \neq j_4} \mathbb{E} \left(\mathbf{x}_{i_1 j_1} \right)^{\top} \mathbf{W} \mathbf{x}_{i_1 j_2} \left(\mathbf{x}_{i_1 j_3} \right)^{\top} \mathbf{W} \mathbf{x}_{i_1 j_4}     \notag\\
&-  \left(\sum_{i=1}^{q} \frac{\beta_{i}^{2}}{n_{i}\left(n_{i}-1\right)} \sum_{j_{1} \neq j_{2}}  \boldsymbol{\mu}_{i}^{\top} \mathbf{W} \boldsymbol{\mu}_{i}\right)^{2}           \notag \\
=&  \sum_{i_1 \neq i_2}^q \beta_{i_1}^2\beta_{i_2} ^2 \boldsymbol{\mu}_{i_1}^{\top} \mathbf{W} \boldsymbol{\mu}_{i_1} \boldsymbol{\mu}_{i_2}^{\top} \mathbf{W} \boldsymbol{\mu}_{i_2}   +2 \sum_{i_1=1}^q  \frac{\beta_{i_1}^4 }{n_{i_1} \left(n_{i_1}-1 \right)  } \left( 2  \boldsymbol{\mu}_{i_1}^{\top}  \mathbf{W} \boldsymbol{\Sigma}_{i_1}  \mathbf{W} \boldsymbol{\mu}_{i_1} + \right.    \notag\\
& \left.  \left( \boldsymbol{\mu}_{i_1}^{\top} \mathbf{W} \boldsymbol{\mu}_{i_1}\right)^2 + \operatorname{tr}\left( \mathbf{W} \boldsymbol{\Sigma}_{i_1}\right)^2
\right) +  4 \sum_{i_1=1}^q  \frac{\beta_{i_1}^4 \left( n_{i_1} -2 \right) }{n_{i_1} \left(n_{i_1}-1 \right)  }  \left(    \boldsymbol{\mu}_{i_1}^{\top}  \mathbf{W} \boldsymbol{\Sigma}_{i_1}  \mathbf{W} \boldsymbol{\mu}_{i_1} +   \left( \boldsymbol{\mu}_{i_1}^{\top} \mathbf{W} \boldsymbol{\mu}_{i_1}\right)^2    \right)                       \notag   \\
&+ \sum_{i_1=1}^q  \frac{\beta_{i_1} ^4 \left( n_{i_1} -2 \right)\left( n_{i_1} -3 \right) }{n_{i_1} \left(n_{i_1}-1 \right)  } \left( \boldsymbol{\mu}_{i_1}^{\top} \mathbf{W} \boldsymbol{\mu}_{i_1}\right)^2      -  \left(\sum_{i=1}^{q} \frac{\beta_{i}^{2}}{n_{i}\left(n_{i}-1\right)} \sum_{j_{1} \neq j_{2}}  \boldsymbol{\mu}_{i}^{\top} \mathbf{W} \boldsymbol{\mu}_{i}\right)^{2}    \notag\\
= &\sum_{i_1=1}^q \sum_{i_2=1}^q\beta_{i_1}^2\beta_{i_2}^2 \boldsymbol{\mu}_{i_1}^{\top} \mathbf{W} \boldsymbol{\mu}_{i_1} \boldsymbol{\mu}_{i_2}^{\top} \mathbf{W} \boldsymbol{\mu}_{i_2}+4 \sum_{i_1=1}^q\beta_{i_1}^4 \frac{1}{n_{i_1}} \boldsymbol{\mu}_{i_1}^{\top} \mathbf{W} \boldsymbol{\Sigma}_{i_1} \mathbf{W}  \boldsymbol{\mu}_{i_1}    \notag\\
& +2 \sum_{i_1=1}^q\beta_{i_1}^4 \frac{1}{n_{i_1}\left(n_{i_1}-1\right)} \operatorname{tr}\left( \mathbf{W} \boldsymbol{\Sigma}_{i_1}\right)^2 -  \left(\sum_{i=1}^{q} \frac{\beta_{i}^{2}}{n_{i}\left(n_{i}-1\right)} \sum_{j_{1} \neq j_{2}}  \boldsymbol{\mu}_{i}^{\top} \mathbf{W} \boldsymbol{\mu}_{i}\right)^{2}    \notag \\
=& 4 \sum_{i=1}^q\beta_{i}^4 \frac{1}{n_{i}} \boldsymbol{\mu}_{i}^{\top} \mathbf{W} \boldsymbol{\Sigma}_{i} \mathbf{W}  \boldsymbol{\mu}_{i}  +2 \sum_{i=1}^q\beta_{i}^4 \frac{1}{n_{i}\left(n_{i}-1\right)} \operatorname{tr}\left( \mathbf{W} \boldsymbol{\Sigma}_{i}\right)^2    \label{varTn2}      , 
\end{align}
\begin{align}
&\operatorname{Cov} \left( T_{n1}, T_{n2} \right)\notag \\
=	&\mathbb{E}\left(\sum_{i_1 \neq i_2} \sum_{i_3=1}^q \frac{\beta_{i_1} \beta_{i_2}\beta_{i_3}^2}{n_{i_1} n_{i_2} n_{i_3}\left(n_{i_3}-1\right)} \sum_{j_1, j_2} \sum_{j_3 \neq j_4}\left(\mathbf{x}_{ i_1 j_1} \right)^{\top} \mathbf{W} \mathbf{x}_{i_2 j_2}\left(\mathbf{x}_{i_3 j_3} \right)^{\top} \mathbf{W} \mathbf{x}_{i_3 j_4}\right)  \notag \\
&- \left( \sum_{i_{1} \neq i_{2}} \frac{\beta_{i_{1}} \beta_{i_{2}} }
{n_{i_{1}} n_{i_{2}}} \sum_{j_{1}, j_{2}} \boldsymbol{\mu}_{i_{1}}^{\top} \mathbf{W} \boldsymbol{\mu}_{i_{2}}   \right) \left( \sum_{i=1}^{q} \frac{\beta_{i}^{2}}{n_{i}\left(n_{i}-1\right)} \sum_{j_{1} \neq j_{2}}  \boldsymbol{\mu}_{i}^{\top} \mathbf{W} \boldsymbol{\mu}_{i} \right)     \notag \\
=&  \sum_{i_{1} \neq i_{2} = i_{3} } \frac{ \beta_{i_{2}}^3 \beta_{i_{1}} } {n_{i_{2}}^{2} \left(n_{i_{2}}-1\right)  n_{i_{1}}}  \sum_{ j_{1}} \left( \sum_{ j_{2} =j_{3} \neq j_{4}} + \sum_{  j_{3} \neq j_{4} = j_{2}} + \sum_{  j_{2} \neq j_{3} \neq j_{4}} \right) \mathbb{E} \left(\mathbf{x}_{ i_{1} j_{1}} \right)^{\top} \mathbf{W} \mathbf{x}_{ i_{2} j_{2}}\left(\mathbf{x}_{ i_{2} j_{3}}\right)^{\top} \mathbf{W} \mathbf{x}_{i_{2} j_{4}}                       \notag     \\ 
&+  \sum_{ i_{3} =i_{1} \neq i_{2}} \frac{ \beta_{i_{1}}^3  \beta_{i_{2}} } {n_{i_{1}}^{2} \left(n_{i_{1}}-1\right)  n_{i_{2}}}  \sum_{ j_{2}} \left( \sum_{ j_{1} =j_{3} \neq j_{4}} + \sum_{  j_{3} \neq j_{4} = j_{1}} + \sum_{  j_{1} \neq j_{3} \neq j_{4}} \right) \mathbb{E} \left(\mathbf{x}_{ i_{1} j_{1}} \right)^{\top} \mathbf{W} \mathbf{x}_{ i_{2} j_{2}}\left(\mathbf{x}_{ i_{1} j_{3}}\right)^{\top} \mathbf{W} \mathbf{x}_{i_{1} j_{4}}       \notag   \\  
&- \left( \sum_{i_{1} \neq i_{2}} \frac{\beta_{i_{1}} \beta_{i_{2}} }
{n_{i_{1}} n_{i_{2}}} \sum_{j_{1}, j_{2}} \boldsymbol{\mu}_{i_{1}}^{\top} \mathbf{W} \boldsymbol{\mu}_{i_{2}}   \right) \left( \sum_{i=1}^{q} \frac{\beta_{i}^{2}}{n_{i}\left(n_{i}-1\right)} \sum_{j_{1} \neq j_{2}}  \boldsymbol{\mu}_{i}^{\top} \mathbf{W} \boldsymbol{\mu}_{i} \right)            \notag \\
=&  \sum_{i_{1} \neq i_{2}}  \beta_{i_{2}}^3  \beta_{i_{1}} \left(  \boldsymbol{\mu}_{i_{2}}^{\top}\mathbf{W} \boldsymbol{\mu}_{i_{2}} \boldsymbol{\mu}_{i_{2}}^{\top} \mathbf{W} \boldsymbol{\mu}_{i_{1}} + \frac{2}{n_{i_{2} } } \boldsymbol{\mu}_{i_{1}}^{\top} \mathbf{W} \boldsymbol{\Sigma}_{i_{2}} \mathbf{W}  \boldsymbol{\mu}_{i_{2}}          \right) + \sum_{i_{1} \neq i_{2}} \beta_{i_{1}}^3  \beta_{i_{2}}              \notag \\
&\cdot \left(  \boldsymbol{\mu}_{i_{1}}^{\top}\mathbf{W} \boldsymbol{\mu}_{i_{1}} \boldsymbol{\mu}_{i_{1}}^{\top} \mathbf{W} \boldsymbol{\mu}_{i_{2}} + \frac{ 2} {n_{i_{1}} } \boldsymbol{\mu}_{i_{1}}^{\top} \mathbf{W} \boldsymbol{\Sigma}_{i_{1}} \mathbf{W}  \boldsymbol{\mu}_{i_{2}}          \right)   - \left( \sum_{i_{1} \neq i_{2}} \frac{\beta_{i_{1}} \beta_{i_{2}} }
{n_{i_{1}} n_{i_{2}}} \sum_{j_{1}, j_{2}} \boldsymbol{\mu}_{i_{1}}^{\top} \mathbf{W} \boldsymbol{\mu}_{i_{2}}   \right)                      \notag\\
& \cdot \left( \sum_{i=1}^{q} \frac{\beta_{i}^{2}}{n_{i}\left(n_{i}-1\right)} \sum_{j_{1} \neq j_{2}}  \boldsymbol{\mu}_{i}^{\top} \mathbf{W} \boldsymbol{\mu}_{i} \right)                        \notag        \\
=&  \sum_{i_{1} \neq i_{2}}  \beta_{i_{2}}^3  \beta_{i_{1}} \left(   \frac{2}{n_{i_{2} } } \boldsymbol{\mu}_{i_{1}}^{\top} \mathbf{W} \boldsymbol{\Sigma}_{i_{2}} \mathbf{W}  \boldsymbol{\mu}_{i_{2}}          \right) + \sum_{i_{1} \neq i_{2}} \beta_{i_{1}}^3\beta_{i_{2}}  \left(  \frac{ 2} {n_{i_{1}} } \boldsymbol{\mu}_{i_{1}}^{\top} \mathbf{W} \boldsymbol{\Sigma}_{i_{1}} \mathbf{W}  \boldsymbol{\mu}_{i_{2}}          \right)  \notag \\
= & 4 \sum_{i_{1}\neq i_{2}}^{q}\beta_{i_{1}}^3\beta_{i_{2}} \frac{1}{n_{i_{1}}} \boldsymbol{\mu}_{i_{1}}^{\top} \mathbf{W} \boldsymbol{\Sigma}_{i_{1}}\mathbf{W} \boldsymbol{\mu}_{i_{2}}  \label{covTn1Tn2},
\end{align}
Finally, from \eqref{varTn1}-\eqref{covTn1Tn2}, we have
\begin{align*}
\operatorname{Var}\left(T_{n}\right)= &    4 \sum_{i_1 \neq i_2} \sum_{i_3 \neq i_2} \beta_{i_1} \beta_{i_2}^2 \beta_{i_3} \frac{1}{n_{i_2}} \boldsymbol{\mu}_{i_1}^{\top} \mathbf{W} \boldsymbol{\Sigma}_{i_2}\mathbf{W} \boldsymbol{\mu}_{i_3}  +2 \sum_{i_1 \neq i_2} \beta_{i_1}^2 \beta_{i_2}^2  \frac{1}{n_{i_1} n_{i_2}} \operatorname{tr}\left(  \mathbf{W} \boldsymbol{\Sigma}_{i_1}  \mathbf{W} \boldsymbol{\Sigma}_{i_2}\right)                      \\
&+ 4 \sum_{i_1=1}^q\beta_{i_1} ^2\beta_{i_1}^2 \frac{1}{n_{i_1}} \boldsymbol{\mu}_{i_1}^{\top} \mathbf{W} \boldsymbol{\Sigma}_{i_1} \mathbf{W}  \boldsymbol{\mu}_{i_1}  +2 \sum_{i=1}^q\beta_{i}^4 \frac{1}{n_{i}\left(n_{i}-1\right)} \operatorname{tr}\left( \mathbf{W} \boldsymbol{\Sigma}_{i}\right)^2  \\
&+  8 \sum_{i_{1}\neq i_{2}}^{q} \beta_{i_{1}}^3 \beta_{i_{2}} \frac{1}{n_{i_{1}}} \boldsymbol{\mu}_{i_{1}}^{\top} \mathbf{W} \boldsymbol{\Sigma}_{i_{1}}\mathbf{W} \boldsymbol{\mu}_{i_{2}}    \\	
\triangleq &  \sigma_{n,q_{1}}^{2} + \sigma_{n,q_{2}}^{2},
\end{align*}
where
\begin{align*}
\sigma_{n,q_{1}}^{2}=2 \sum_{i_{1} \neq i_{2}} \frac{\beta_{i_{1}} ^{2}\beta_{i_{2}}^{2}}{n_{i_{1}} n_{i_{2}}} \operatorname{tr}\left(  \mathbf{W} \boldsymbol{\Sigma}_{i_1}  \mathbf{W} \boldsymbol{\Sigma}_{i_2}\right)+2 \sum_{i=1}^{q} \frac{\beta_{i} ^{4}}{n_{i}\left(n_{i}-1\right)} \operatorname{tr}\left( \mathbf{W} \boldsymbol{\Sigma}_{i}\right)^2
\end{align*}
and
\begin{align*}
\sigma_{n,q_{2}}^{2}=4\left(\sum_{i=1}^{q} \beta_{i} \boldsymbol{\mu}_{i}\right)^{\top}  \mathbf{W}  \left(\sum_{i=1}^{q} \frac{\beta_{i}^2 }{n_{i}} \boldsymbol{\Sigma}_{i}\right)  \mathbf{W}  \left(\sum_{i=1}^{q} \beta_{i} \boldsymbol{\mu}_{i}\right).
\end{align*}
Note that
\begin{align*}
T_{n}   =&\sum_{i_{1} \neq i_{2}} \frac{\beta_{i_{1}} \beta_{i_{2}} }
{n_{i_{1}} n_{i_{2}}} \sum_{j_{1}, j_{2}}\left(\mathbf{x}_{i_{1} j_{1}}\right)^{\top} \mathbf{W} \left(\mathbf{x}_{i_{2} j_{2}} \right)
+\sum_{i=1} \frac{\beta_{i}^{2}}{n_{i}\left(n_{i}-1\right)} \sum_{j_{1} \neq j_{2}}\left(\mathbf{x}_{i j_{1}}\right)^{\top} \mathbf{W} \left(\mathbf{x}_{i j_{2}} \right)\\
=&\sum_{i_{1} \neq i_{2}} \frac{\beta_{i_{1}} \beta_{i_{2}} }
{n_{i_{1}} n_{i_{2}}} \sum_{j_{1}, j_{2}}\left( \mathbf{x}_{i_{1} j_{1}} - \boldsymbol{\mu}_{i_{1}} + \boldsymbol{\mu}_{i_{1}}\right)^{\top} \mathbf{W} \left(\mathbf{x}_{i_{2} j_{2}} - \boldsymbol{\mu}_{i_{2}} + \boldsymbol{\mu}_{i_{2}} \right) \\
&+\sum_{i=1} \frac{\beta_{i}^{2}}{n_{i}\left(n_{i}-1\right)} \sum_{j_{1} \neq j_{2}}\left( \mathbf{x}_{i j_{1}} - \boldsymbol{\mu}_{i} + \boldsymbol{\mu}_{i}\right)^{\top} \mathbf{W} \left( \mathbf{x}_{i j_{2}} - \boldsymbol{\mu}_{i} + \boldsymbol{\mu}_{i}\right)\\
=&\sum_{i_{1} \neq i_{2}} \frac{\beta_{i_{1}} \beta_{i_{2}} }
{n_{i_{1}} n_{i_{2}}} \sum_{j_{1}, j_{2}}\left( \mathbf{x}_{i_{1} j_{1}} - \boldsymbol{\mu}_{i_{1}} \right)^{\top} \mathbf{W} \left(\mathbf{x}_{i_{2} j_{2}} - \boldsymbol{\mu}_{i_{2}}  \right) + \sum_{i_{1} \neq i_{2}} \beta_{i_{1}} \beta_{i_{2}} \boldsymbol{\mu}_{i_{1}}  \mathbf{W} \boldsymbol{\mu}_{i_{2}} \\
&+\sum_{i=1} \frac{\beta_{i}^{2}}{n_{i}\left(n_{i}-1\right)} \sum_{j_{1} \neq j_{2}}\left( \mathbf{x}_{i j_{1}} - \boldsymbol{\mu}_{i} \right)^{\top} \mathbf{W} \left( \mathbf{x}_{i j_{2}} - \boldsymbol{\mu}_{i} \right)  +\sum_{i=1} \frac{\beta_{i}^{2}}{n_{i}\left(n_{i}-1\right)}   \boldsymbol{\mu}_{i}^{\top} \mathbf{W} \boldsymbol{\mu}_{i}  \\
&+ \sum_{i_{1} \neq i_{2}} \frac{\beta_{i_{1}} \beta_{i_{2}} }
{n_{i_{1}} n_{i_{2}}} \sum_{j_{1}, j_{2}} \left[ \left( \mathbf{x}_{i_{1} j_{1}} - \boldsymbol{\mu}_{i_{1}} \right)^{\top} \mathbf{W} \boldsymbol{\mu}_{i_{2}}  + \boldsymbol{\mu}_{i_{1}}^{\top} \mathbf{W} \left(\mathbf{x}_{i_{2} j_{2}} - \boldsymbol{\mu}_{i_{2}}  \right) \right]    +\sum_{i=1} \frac{\beta_{i}^{2}}{n_{i}\left(n_{i}-1\right)} \\
&  \sum_{j_{1} \neq j_{2}} \left[ \left( \mathbf{x}_{i j_{1}} - \boldsymbol{\mu}_{i} \right)^{\top} \mathbf{W} \boldsymbol{\mu}_{i} +  \boldsymbol{\mu}_{i}^{\top}\left( \mathbf{x}_{i j_{2}} - \boldsymbol{\mu}_{i} \right)        \right] \\
=&  \tilde{T}_{n} + \left( \sum_{i=1}^{q} \beta_{i} \boldsymbol{\mu}_{i} \right)^{\top} \mathbf{W} \left( \sum_{i=1}^{q} \beta_{i} \boldsymbol{\mu}_{i} \right),                         
\end{align*}
where 
\begin{align*}
\tilde{T}_{n}  =& \sum_{i_{1} \neq i_{2}} \frac{\beta_{i_{1}} \beta_{i_{2}} }
{n_{i_{1}} n_{i_{2}}} \sum_{j_{1}, j_{2}}\left( \mathbf{x}_{i_{1} j_{1}} - \boldsymbol{\mu}_{i_{1}} \right)^{\top} \mathbf{W} \left(\mathbf{x}_{i_{2} j_{2}} - \boldsymbol{\mu}_{i_{2}}  \right) + \sum_{i=1} \frac{\beta_{i}^{2}}{n_{i}\left(n_{i}-1\right)} \sum_{j_{1} \neq j_{2}}\left( \mathbf{x}_{i j_{1}} - \boldsymbol{\mu}_{i} \right)^{\top} \mathbf{W} \left( \mathbf{x}_{i j_{2}} - \boldsymbol{\mu}_{i} \right) \\ 
& +  \sum_{i_{1} \neq i_{2}} \frac{\beta_{i_{1}} \beta_{i_{2}} }
{n_{i_{1}} n_{i_{2}}} \sum_{j_{1}, j_{2}} \left[ \left( \mathbf{x}_{i_{1} j_{1}} - \boldsymbol{\mu}_{i_{1}} \right)^{\top} \mathbf{W} \boldsymbol{\mu}_{i_{2}}  + \boldsymbol{\mu}_{i_{1}}^{\top} \mathbf{W} \left(\mathbf{x}_{i_{2} j_{2}} - \boldsymbol{\mu}_{i_{2}}  \right) \right]    \\
&  +\sum_{i=1}^{q} \frac{\beta_{i}^{2}}{n_{i}\left(n_{i}-1\right)} \sum_{j_{1} \neq j_{2}} \left[ \left( \mathbf{x}_{i j_{1}} - \boldsymbol{\mu}_{i} \right)^{\top} \mathbf{W} \boldsymbol{\mu}_{i} +  \boldsymbol{\mu}_{i}^{\top}\mathbf{W}\left( \mathbf{x}_{i j_{2}} - \boldsymbol{\mu}_{i} \right)      \right]                                  \\
=& \sum_{i_{1} \neq i_{2}} \frac{\beta_{i_{1}} \beta_{i_{2}} }
{n_{i_{1}} n_{i_{2}}} \sum_{j_{1}, j_{2}}\left(  \boldsymbol{\Gamma}_{i_{1}} \mathbf{z}_{i_{1} j_{1}} \right)^{\top} \mathbf{W} \left(\boldsymbol{\Gamma}_{i_{2}} \mathbf{z}_{i_{2} j_{2}}  \right) +   \sum_{i=1} \frac{\beta_{i}^{2}}{n_{i}( n_{i} -1)} \sum_{j_{1} \neq j_{2}}\left( \boldsymbol{\Gamma}_{i} \mathbf{z}_{i j_{1}} \right)^{\top} \mathbf{W} \left( \boldsymbol{\Gamma}_{i} \mathbf{z}_{i j_{2}} \right)     \\
&+  \left( \sum_{ i_{2} =1}  \beta_{i_{2}} \boldsymbol{\mu}_{i_{2}} \right)^{\top} \mathbf{W}  \left(  \sum_{i_{1} =1}  \frac{ 2 \beta_{i_{1}}  }
{n_{i_{1}}}  \sum_{j_{1}}^{n_{i_{1}}} \boldsymbol{\Gamma}_{i_{1}} \mathbf{z}_{i_{1} j_{1}} \right),                       
\end{align*}
Therefore, it is sufficient to show that 
\begin{align*}
\frac{\tilde{T}_{n}  }{\sigma_{n,q_{1}}}	\stackrel{\text { d }}{\longrightarrow} \mathrm{N} (0,1).
\end{align*}
For convenience, define
$\mathbf{y}_{t+ \sum_{i=1}^{l-1} n_{i}}=\mathbf{W}^{1/2} \boldsymbol{\Gamma}_{i} \mathbf{z}_{ i j}$, $j=1, \ldots, n_{i}$, with $\sum_{i=1}^{0} n_{i} =0$, and
\begin{align*}
\begin{split}
&\varphi_{st}= \left \{
\begin{array}{ll}
\frac{\beta_{i}^{2}}{n_{i}(n_{i}-1)}\mathbf{y}_{s}^{\top}\mathbf{y}_{t}                   & s,t\in \Lambda_{i}, i=1,2,\cdots,q\\
\frac{\beta_{i} \beta_{l} }{n_{i} n_{l}}\mathbf{y}_{s}^{\top}\mathbf{y}_{t},   & (s,t)\in \Lambda_{i}\times \Lambda_{l}, 1\leq i<l \leq q
\end{array}
\right. \\
&\psi_{t} = \frac{1}{n_{i}} \beta_{i} \left(\sum_{i=1}^{q} \beta_{i} \boldsymbol{\mu}_{i}\right)^{\top}  \mathbf{W}^{1/2}\mathbf{y}_{t},
\end{split}
\end{align*}
where $t \in \Lambda_{i}=\{\sum\limits_{j=1}^{i-1}n_{j}+1,\sum\limits_{j=1}^{i-1}n_{j}+2, \cdots, \sum\limits_{j=1}^{i}n_{j}\}$
with $i=1, 2, \cdots, q$. In addition, we denote $c_{st}$ as the coefficient of $\varphi_{st}$, and if $ t\in \Lambda_{i}$, $\tilde{\boldsymbol{\Sigma}}_t=\boldsymbol{\Sigma}_i$ .

In what follows, we begin to construct square integrable martingale sequence. 
Denote  $V_{t}=\sum_{s=1}^{t -1} \varphi_{s t} +  \psi_{t}$ , and  $G_{m}=\sum_{t=1}^{m} V_{t}$,  and  $\mathcal{F}_{t}=\sigma\left\{\mathbf{y}_{1}, \mathbf{y}_{2} \ldots, \mathbf{y}_{t}\right\}$  which is the $\sigma$-field generated by  $\left\{\mathbf{y}_{1}, \mathbf{y}_{2} \ldots, \mathbf{y}_{t}\right\}$. Then, we have
\begin{align*}
\tilde{T}_{n}=&\sum_{i=1} \frac{\beta_{i}^{2}}{n_{i}(n_i-1)} \sum_{j_{1} \neq j_{2}}\left( \boldsymbol{\Gamma}_{i} \mathbf{z}_{i j_{1}} \right)^{\top} \mathbf{W} \left( \boldsymbol{\Gamma}_{i} \mathbf{z}_{i j_{2}} \right)
+\sum_{i_{1} \neq i_{2}} \frac{\beta_{i_{1}} \beta_{i_{2}} }{n_{i_{1}} n_{i_{2}}} \sum_{j_{1}, j_{2}}
\left(\boldsymbol{\Gamma}_{i_{1}} \mathbf{z}_{i_{1} j_{1}} \right)^{\top}\mathbf{W}\left(\boldsymbol{\Gamma}_{i_{2}} \mathbf{z}_{i_{2} j_{2}}\right)\\
&+2\sum_{i_{1} i_{2}}\frac{\beta_{i_{1}}\beta_{i_{2}}  }{n_{i_{1}}} \sum_{j_{1}}^{n_{i_{1}}}\boldsymbol{\mu}_{i_{2}}^{\top}\mathbf{W} (\boldsymbol{\Gamma}_{i_{1}} \mathbf{z}_{i_{1} j_{1}})\\
=&\sum_{i=1} \frac{\beta_{i}^{2}}{n_{i}(n_i-1)} 2 \sum_{j_{1}<j_{2}}\left( \boldsymbol{\Gamma}_{i} \mathbf{z}_{i j_{1}} \right)^{\top} \mathbf{W} \left( \boldsymbol{\Gamma}_{i} \mathbf{z}_{i j_{2}} \right)
+2\sum_{i_{1}<i_{2}} \frac{\beta_{i_{1}} \beta_{i_{2}} }{n_{i_{1}} n_{i_{2}}} \sum_{j_{1}, j_{2}}
\left(\boldsymbol{\Gamma}_{i_{1}} \mathbf{z}_{i_{1} j_{1}} \right)^{\top}\mathbf{W}\left(\boldsymbol{\Gamma}_{i_{2}} \mathbf{z}_{i_{2} j_{2}}\right)\\
&+2\sum_{i_{1}}\beta_{i_{1}}\sum_{j_{1}}\left(\sum_{i_{2}} \beta_{i_{2}} \boldsymbol{\mu}_{i_{2}}\right)^{\top}\mathbf{W}^{1/2}\left(\frac{1}{n_{i_{1}}}\mathbf{W}^{1/2}\boldsymbol{\Gamma}_{i_{1}} \mathbf{z}_{i_{1} j_{1}}\right)\\
=&2\sum_{i=1}^q\sum_{t=1+\sum_{m=1}^{i-1}n_m}^{\sum_{m=1}^{i}n_m}\sum_{s=1+\sum_{m=1}^{i-1}n_m}^{t-1}\varphi_{st}
+2\sum_{i=1}^{q-1}
\sum_{l=i+1}^q
\sum_{s=1+\sum_{m=1}^{i-1}n_m}^{\sum_{m=1}^{i}n_m}
\sum_{t=1+\sum_{m=1}^{l-1}n_m}^{\sum_{m=1}^{l}n_m}\phi_{st}
+2\sum_{i}\sum_{t\in \Lambda_{i}}\psi_{t}\\
=&2 \sum_{t=1}^{n} \left(\sum_{s=1}^{t-1} \varphi_{s t}\right)+2 \sum_{t=1}^{n} \psi_{t}\\
=&2 \sum_{t=1}^{n} V_{t}.
\end{align*}
In order to prove our main results, the following two lemmas are given.
\begin{lemma}
	For each $n$,  $\{G_{m}, \mathcal{F}_{m}\}_{m=1}^{n}$is the sequence of zero mean and a square integrable martingale.
	\begin{proof}
		By the definition of the martingale, it is obvious that  $\mathcal{F}_{m-1}\subseteq\mathcal{F}_{m}$, for any $1\leq m\leq n$ and $G_{m}$ is measurable with respect to  $\mathcal{F}_{m}$ .  In addition, note that
		\begin{align*}
		\mathbb{E}\left(V_{m} \mid \mathcal{F}_{m-1}\right) = 0,
		\end{align*}
		and
		\begin{align*}
		\mathbb{E} G_{m}^{2} \leq \operatorname{Var}\left(\tilde{T}_{n} \right)
		\end{align*}

		Next, we only need to show that $\{G_{m}\}$ is a martingale, that is, 
		$\mathbb{E}(G_{r}|\mathcal{F}_{m})=G_{m}$ for any $r\geq m$. Note that $\mathbb{E} (V_{\ell}|\mathcal{F}_{m}) =0$, $\forall \ell >m$. Then, 
		\begin{align*}
		\mathbb{E}(G_{r}|\mathcal{F}_{m})
		=G_{m}+\sum_{\ell=m+1}^{r}\mathbb{E}(V_{\ell}|\mathcal{F}_{m}) = G_{m}.
		\end{align*}
	\end{proof}	
\end{lemma}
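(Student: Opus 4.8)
The plan is to verify the three defining properties of a zero-mean, square-integrable martingale directly from the definitions, using only that the vectors $\mathbf{y}_t = \mathbf{W}^{1/2}\boldsymbol{\Gamma}_i\mathbf{z}_{ij}$ are mutually independent, centered, and have finite second (indeed eighth) moments, which follows from Assumption~A together with the facts that $\boldsymbol{\Gamma}_i$ and $\mathbf{W}$ are deterministic and, for each fixed $n$, the dimension $p$ is fixed.

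First I would record adaptedness and integrability. Since $\mathcal{F}_t = \sigma(\mathbf{y}_1,\dots,\mathbf{y}_t)$ the filtration is increasing, so $\mathcal{F}_{m-1}\subseteq\mathcal{F}_m$. Each $\varphi_{st}$ with $s<t$ is a bilinear function of $\mathbf{y}_s$ and $\mathbf{y}_t$ and each $\psi_t$ is a linear function of $\mathbf{y}_t$, hence $V_t = \sum_{s=1}^{t-1}\varphi_{st} + \psi_t$ is $\mathcal{F}_t$-measurable and $G_m = \sum_{t=1}^{m}V_t$ is $\mathcal{F}_m$-measurable. For square-integrability, note that when $s<t$ the vectors $\mathbf{y}_s,\mathbf{y}_t$ are independent and centered, so $\mathbb{E}\bigl[(\mathbf{y}_s^\top\mathbf{y}_t)^2\bigr] = \operatorname{tr}\bigl(\operatorname{Cov}(\mathbf{y}_s)\operatorname{Cov}(\mathbf{y}_t)\bigr) < \infty$, and the linear form in $\psi_t$ likewise has finite variance; since $V_t$ is a finite linear combination of such terms, $\mathbb{E}(V_t^2)<\infty$ and therefore $\mathbb{E}(G_m^2)\le m\sum_{t=1}^m\mathbb{E}(V_t^2)<\infty$. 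Equivalently, once the martingale property is established the increments are orthogonal and, since $\tilde T_n = 2G_n$, one gets $\mathbb{E}(G_m^2)\le\mathbb{E}(G_n^2)=\tfrac14\operatorname{Var}(\tilde T_n)<\infty$ for fixed $n,p$.

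Next I would establish the martingale identity, for which it suffices to show $\mathbb{E}(V_t\mid\mathcal{F}_{t-1})=0$ for every $t$. The term $\psi_t$ depends only on $\mathbf{y}_t$, which is independent of $\mathcal{F}_{t-1}$ and has mean zero, so $\mathbb{E}(\psi_t\mid\mathcal{F}_{t-1})=\mathbb{E}(\psi_t)=0$. For $s<t$ write $\varphi_{st}=c_{st}\,\mathbf{y}_s^\top\mathbf{y}_t$; here $\mathbf{y}_s$ is $\mathcal{F}_{t-1}$-measurable while $\mathbf{y}_t$ is independent of $\mathcal{F}_{t-1}$ with $\mathbb{E}(\mathbf{y}_t)=\mathbf{0}$, so $\mathbb{E}(\varphi_{st}\mid\mathcal{F}_{t-1})=c_{st}\,\mathbf{y}_s^\top\mathbb{E}(\mathbf{y}_t)=0$. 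Summing over $s=1,\dots,t-1$ gives $\mathbb{E}(V_t\mid\mathcal{F}_{t-1})=0$, hence $\mathbb{E}(G_m\mid\mathcal{F}_{m-1})=G_{m-1}+\mathbb{E}(V_m\mid\mathcal{F}_{m-1})=G_{m-1}$; taking expectations termwise (note $V_1=\psi_1$, so $\mathbb{E}(V_1)=0$, and more generally $\mathbb{E}(V_t)=\mathbb{E}[\mathbb{E}(V_t\mid\mathcal{F}_{t-1})]=0$) yields $\mathbb{E}(G_m)=0$ for all $m$.

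I do not anticipate a genuine obstacle: this lemma is the bookkeeping step that recasts $\tilde T_n$ as (twice) the terminal value of a martingale so that a martingale central limit theorem can be applied. The only points deserving care are handling the conditioning argument separately for the two structurally different pieces of $V_t$—the quadratic cross-terms $\varphi_{st}$ with $s<t$ and the linear term $\psi_t$—and noting that square-integrability genuinely invokes the moment hypothesis of Assumption~A rather than being automatic. The substantive analysis, namely verifying a conditional Lindeberg condition and that $\sum_{t}\mathbb{E}(V_t^2\mid\mathcal{F}_{t-1})/\sigma_{n,q_1}^2\to 1$ in probability, will be carried out in the lemmas that follow, not here.
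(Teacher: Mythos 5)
Your proposal is correct and follows essentially the same route as the paper: adaptedness, square-integrability from the moment assumptions (the paper simply bounds $\mathbb{E}G_m^2$ by $\operatorname{Var}(\tilde T_n)$, while you note the sharper $\mathbb{E}G_m^2\le\mathbb{E}G_n^2=\tfrac14\operatorname{Var}(\tilde T_n)$ via orthogonal increments), and the martingale property via $\mathbb{E}(V_t\mid\mathcal{F}_{t-1})=0$, treating the bilinear terms $\varphi_{st}$ and the linear term $\psi_t$ separately exactly as the paper does. No gaps.
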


\begin{lemma}
	Under \textbf{Assumption A-D}, as $n,p\to\infty,$  it gets
	\begin{align} \label{mean linear: the convergence of variance of CLT}
	\frac{\sum_{t=1}^{n} \mathbb{E}\left(V_{t}^{2} \mid \mathcal{F}_{t-1}\right)}{\operatorname{Var}\left(T_{n}\right)} \stackrel{\text { p }}{\longrightarrow} \frac{1}{4}
	\end{align}
	and
	\begin{align} \label{mean linear :the Lindeberg's condition of CLT}
	\frac{\sum_{t=1}^{n} \mathbb{E}\left(V_{t}^{2} I\left(\left|V_{t}\right|>\varepsilon \sqrt{\operatorname{Var}\left( T_{n} \right)}\right) \mid \mathcal{F}_{t-1}\right)}{\operatorname{Var}\left(T_{n}\right)} \stackrel{\text { p }}{\longrightarrow} 0,
	\end{align}	
where  $\stackrel{\text { p }} {\longrightarrow}$ means convergence in probability.
\end{lemma}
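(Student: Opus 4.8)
The plan is to establish \eqref{mean linear: the convergence of variance of CLT} and \eqref{mean linear :the Lindeberg's condition of CLT} through the martingale central limit theorem (Hall--Heyde), exploiting that each martingale difference is a single linear form $V_t=(\mathbf b_t+\mathbf a_t)^\top\mathbf y_t$ in the ``fresh'' vector $\mathbf y_t$, where $\mathbf b_t=\sum_{s=1}^{t-1}c_{st}\mathbf y_s$ is $\mathcal F_{t-1}$-measurable and, for $t\in\Lambda_i$, $\mathbf a_t=\frac{\beta_i}{n_i}\mathbf W^{1/2}\big(\sum_{k=1}^q\beta_k\boldsymbol\mu_k\big)$ is deterministic. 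First I would reduce \eqref{mean linear: the convergence of variance of CLT}: since $\tilde T_n=2\sum_{t=1}^n V_t$ and $\{V_t,\mathcal F_t\}$ is a martingale difference sequence, orthogonality yields $\operatorname{Var}(\tilde T_n)=4\sum_{t=1}^n\mathbb E(V_t^2)$, and $\operatorname{Var}(T_n)=\operatorname{Var}(\tilde T_n)$ because $T_n-\tilde T_n$ is constant; hence $\sum_{t=1}^n\mathbb E(V_t^2)=\operatorname{Var}(T_n)/4$ \emph{exactly} (which is why the limit in \eqref{mean linear: the convergence of variance of CLT} is $1/4$), so it remains only to show that $\sum_{t=1}^n\big[\mathbb E(V_t^2\mid\mathcal F_{t-1})-\mathbb E(V_t^2)\big]$ has variance $o\big(\operatorname{Var}(T_n)^2\big)$.

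Since $\mathbf y_t$ is independent of $\mathcal F_{t-1}$ with mean $\mathbf 0$ and $\mathbb E(\mathbf y_t\mathbf y_t^\top)=\mathbf W^{1/2}\tilde{\boldsymbol\Sigma}_t\mathbf W^{1/2}$, one obtains $\mathbb E(V_t^2\mid\mathcal F_{t-1})=\mathbf b_t^\top\mathbf W^{1/2}\tilde{\boldsymbol\Sigma}_t\mathbf W^{1/2}\mathbf b_t+2\mathbf a_t^\top\mathbf W^{1/2}\tilde{\boldsymbol\Sigma}_t\mathbf W^{1/2}\mathbf b_t+\mathbf a_t^\top\mathbf W^{1/2}\tilde{\boldsymbol\Sigma}_t\mathbf W^{1/2}\mathbf a_t$; the last summand is deterministic, the second is a linear form in $\{\mathbf y_s\}_{s<t}$, and the first equals $\sum_{s_1,s_2<t}c_{s_1t}c_{s_2t}\,\mathbf y_{s_1}^\top\mathbf W^{1/2}\tilde{\boldsymbol\Sigma}_t\mathbf W^{1/2}\mathbf y_{s_2}$. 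Summing the quadratic piece over $t$, subtracting its mean, and computing the second moment produces a sum over index configurations $(s_1,s_2,s_1',s_2',t,t')$ ranging over the $q$ groups; the off-diagonal configurations reproduce products of two second-order traces $\operatorname{tr}(\mathbf W\boldsymbol\Sigma_a\mathbf W\boldsymbol\Sigma_b)$ of order $(\sigma_{n,q}^2)^2$ that cancel against the subtracted mean, and what survives is, up to the bounded factors $\beta_i$ and $n_i/n\to k_i$, a combination of single fourth-order traces $\operatorname{tr}(\mathbf W\boldsymbol\Sigma_{i_1}\mathbf W\boldsymbol\Sigma_{i_2}\mathbf W\boldsymbol\Sigma_{i_3}\mathbf W\boldsymbol\Sigma_{i_4})$, which Assumption C forces to be $o\big(\operatorname{tr}^2\{(\sum_i\mathbf W\boldsymbol\Sigma_i)^2\}\big)=o\big((\sigma_{n,q}^2)^2\big)$. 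The diagonal configurations (coinciding $s$-indices) carry the non-Gaussian corrections in $\kappa_{3i},\kappa_{4i}$; these are damped by an extra power of $n$ and bounded using the moment and pseudo-independence conditions of Assumption A, while the cross term $2\sum_t\mathbf a_t^\top\mathbf W^{1/2}\tilde{\boldsymbol\Sigma}_t\mathbf W^{1/2}\mathbf b_t$, being quadratic in $\sum_k\beta_k\boldsymbol\mu_k$, is negligible by Assumption D (the same estimate that gave $\operatorname{Var}(T_n)=\sigma_{n,q_1}^2(1+o(1))$). This yields \eqref{mean linear: the convergence of variance of CLT}.

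For the Lindeberg condition \eqref{mean linear :the Lindeberg's condition of CLT} I would use Markov's inequality, $\mathbb E\big(V_t^2 I(|V_t|>\varepsilon\sqrt{\operatorname{Var}(T_n)})\mid\mathcal F_{t-1}\big)\le\varepsilon^{-2}\operatorname{Var}(T_n)^{-1}\,\mathbb E(V_t^4\mid\mathcal F_{t-1})$, so that it suffices to prove $\sum_{t=1}^n\mathbb E(V_t^4)=o\big(\operatorname{Var}(T_n)^2\big)$. Conditionally on $\mathcal F_{t-1}$, $V_t=(\mathbf b_t+\mathbf a_t)^\top\mathbf y_t$ is a linear form in $\mathbf y_t=\mathbf W^{1/2}\boldsymbol\Gamma_i\mathbf z_{ij}$, so the bounded-eighth-moment and pseudo-independence conditions of Assumption A give $\mathbb E(V_t^4\mid\mathcal F_{t-1})\le C\big((\mathbf b_t+\mathbf a_t)^\top\mathbf W^{1/2}\tilde{\boldsymbol\Sigma}_t\mathbf W^{1/2}(\mathbf b_t+\mathbf a_t)\big)^2+(\text{lower-order fourth-power terms})$; taking expectations and summing over $t$ again reduces to fourth-order traces controlled by Assumption C (for the $\mathbf b_t$ part) and to the quantity appearing in Assumption D (for the $\mathbf a_t$ part), both $o(\operatorname{Var}(T_n)^2)$, which finishes \eqref{mean linear :the Lindeberg's condition of CLT}.

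The hard part will be the bookkeeping in the second step: carefully enumerating the index-coincidence patterns in the second moment of $\sum_t\mathbf b_t^\top\mathbf W^{1/2}\tilde{\boldsymbol\Sigma}_t\mathbf W^{1/2}\mathbf b_t$ across all $q$ populations, checking that the leading $(\sigma_{n,q}^2)^2$-order contributions cancel after centering, and verifying that every residual term is pinned to a fourth-order trace that Assumption C kills (with the diagonal $\kappa$-corrections absorbed by Assumption A); this is the multi-sample analogue of the corresponding estimate in \cite{chen2010two}, but with substantially more cases to track and the extra mean-vector cross terms to dispatch via Assumption D.
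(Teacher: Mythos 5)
Your proposal takes essentially the same route as the paper's proof: expand $\mathbb{E}(V_t^2\mid\mathcal F_{t-1})$ into the quadratic piece in $\{\mathbf y_s\}_{s<t}$, the linear cross term, and the deterministic term; show the sum $\zeta_n$ has mean $\operatorname{Var}(T_n)/4$ and variance $o\left(\operatorname{Var}^2(T_n)\right)$ by enumerating index-coincidence configurations, with Assumption C killing the fourth-order traces, Assumption D dispatching the mean-vector cross terms, and Assumption A (plus extra powers of $n^{-1}$) controlling the $\kappa$-corrections; and verify the Lindeberg condition via the Markov-type reduction to $\sum_t\mathbb{E}(V_t^4)=o\left(\operatorname{Var}^2(T_n)\right)$ with a $C_r$-inequality splitting. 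The only cosmetic difference is that you get $\sum_t\mathbb{E}(V_t^2)=\operatorname{Var}(T_n)/4$ exactly from martingale orthogonality, whereas the paper computes $\mathbb{E}(\zeta_n)$ directly term by term.
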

\begin{proof}

	Denote  
	\begin{align*}
	\zeta_{n}=\sum_{t=1}^{n} \mathbb{E}\left(V_{t}^{2} \mid \mathcal{F}_{t-1}\right)
	=\sum_{i=1}^{q}\sum_{t\in \Lambda_{i}} \mathbb{E}\left(V_{t}^{2} \mid \mathcal{F}_{t-1}\right).
	\end{align*} 		
	To begin with, for $ t\in \Lambda_{i}$, we have 
	\begin{align*}
	\mathbb{E}\left(V_{t}^{2} \mid \mathcal{F}_{t-1}\right)
	=&\mathbb{E}\left[\left(\sum_{s=1}^{t -1} \varphi_{s t}\right)^{2}+(\psi_{t})^{2}+2\sum_{s=1}^{t -1} \varphi_{st}\psi_{t}\mid \mathcal{F}_{t-1}\right] \\
	=&\mathbb{E}\left[\sum_{s_1,s_2=1}^{t-1}\varphi_{s_{1}t}\varphi_{s_{2}t}
	+\frac{\beta_{i}^{2}}{n_{i}^2}(\sum_{i=1}^{q} \beta_{i}\boldsymbol{\mu}_{i})^{\top}\mathbf{W}^{1/2}\mathbf{y}_{t}\mathbf{y}_{t}^{\top}\mathbf{W}^{1/2}(\sum_{i=1}^{q} \beta_{i}\boldsymbol{\mu}_{i})
	\right.\\
	&\left.+2\sum_{s=1}^{t -1} \varphi_{st}\frac{\beta_{i}}{n_{i}} \left(\sum_{i=1}^{q} \beta_{i} \boldsymbol{\mu}_{i}\right)^{\top}  \mathbf{W}^{1/2}\mathbf{y}_{t} \mid \mathcal{F}_{t-1}
	\right] \\
	=&\sum_{s_1,s_2=1}^{t-1}c_{s_1t}c_{s_2t}\mathbf{y}_{s_1}^{\top}\mathbb{E}(\mathbf{y}_t\mathbf{y}_t^{\top}|\mathcal{F}_{t-1})\mathbf{y}_{s_2}
	+\frac{\beta_{i}^{2}}{n_{i}^2}(\sum_{i=1}^{q} \beta_{i}\boldsymbol{\mu}_{i})^{\top}\mathbf{W}^{1/2}\mathbb{E}(\mathbf{y}_{t}\mathbf{y}_{t}^{\top}|\mathcal{F}_{t-1})\mathbf{W}^{1/2}(\sum_{i=1}^{q} \beta_{i}\boldsymbol{\mu}_{i})\\
	&+2\mathbb{E}\left(\sum_{s=1}^{t -1} c_{st}\frac{\beta_{i}}{n_{i}}\mathbf{y}_{s}^{\top}\mathbf{y}_{t} \left(\sum_{i=1}^{q} \beta_{i} \boldsymbol{\mu}_{i}\right)^{\top}  \mathbf{W}^{1/2}\mathbf{y}_{t}\mid \mathcal{F}_{t-1}\right)\\
	=&\sum_{s_1,s_2=1}^{t-1}c_{s_1t}c_{s_2t}\mathbf{y}_{s_1}^{\top}\mathbb{E}(\mathbf{y}_t\mathbf{y}_t^{\top})\mathbf{y}_{s_2}
	+\frac{\beta_{i}^{2}}{n_{i}^2}(\sum_{i=1}^{q} \beta_{i}\boldsymbol{\mu}_{i})^{\top}\mathbf{W}^{1/2}\mathbb{E}(\mathbf{y}_{t}\mathbf{y}_{t}^{\top})\mathbf{W}^{1/2}(\sum_{i=1}^{q} \beta_{i}\boldsymbol{\mu}_{i})\\
	&+2\sum_{s=1}^{t -1}c_{st}\frac{\beta_{i}}{n_{i}} \mathbf{y}_{s}^{\top} \mathbb{E}\left(\mathbf{y}_{t} (\sum_{i=1}^{q} \beta_{i} \boldsymbol{\mu}_{i} )^{\top} \mathbf{W}^{1/2}\mathbf{y}_{t}^{\top} \right)\\
	=&\sum_{s_1,s_2=1}^{t-1}c_{s_1t}c_{s_2t}\mathbf{y}_{s_1}^{\top}\mathbf{W}^{1/2}\tilde{\boldsymbol{\Sigma}}_{t}\mathbf{W}^{1/2}\mathbf{y}_{s_2}
	+\frac{\beta_{i}^{2}}{n_{i}^2}(\sum_{i=1}^{q} \beta_{i}\boldsymbol{\mu}_{i})^{\top}\mathbf{W}\tilde{\boldsymbol{\Sigma}}_{t}\mathbf{W}
	(\sum_{i=1}^{q} \beta_{i}\boldsymbol{\mu}_{i})\\
	&+2\sum_{s=1}^{t -1}c_{st}\frac{\beta_{i}}{n_{i}}\mathbf{y}_{s}^{\top}
	\mathbf{W}^{1/2} \tilde{\boldsymbol{\Sigma}}_{t}	\mathbf{W} \left(\sum_{i=1}^{q} \beta_{i} \boldsymbol{\mu}_{i}\right)
	\end{align*}

	Then, it follows that
	\begin{align*}
	&\mathbb{E}(\zeta_{n})\\
	=&\sum_{t=1}^{n}\mathbb{E}\left[\sum_{s_1,s_2=1}^{t-1}c_{s_1t}c_{s_2t}\mathbf{y}_{s_1}^{\top}\mathbf{W}^{1/2}\tilde{\boldsymbol{\Sigma}}_{t}\mathbf{W}^{1/2}\mathbf{y}_{s_2}
	+\frac{\beta_{i}^{2}}{n_{i}^2}(\sum_{i=1}^{q} \beta_{i}\boldsymbol{\mu}_{i})^{\top}\mathbf{W}\tilde{\boldsymbol{\Sigma}}_{t}\mathbf{W}
	(\sum_{i=1}^{q} \beta_{i}\boldsymbol{\mu}_{i})\right]\\
	=&\sum_{i=1}^{q}\sum_{t\in \Lambda_{i}}\sum_{s_1= s_2}^{t-1}\mathbb{E}\left(c_{s_1t}^{2}\mathbf{y}_{s_1}^{\top}\mathbf{W}^{1/2}\tilde{\boldsymbol{\Sigma}}_{t}\mathbf{W}^{1/2}\mathbf{y}_{s_1}\right)
	+\sum_{i=1}^{q}\sum_{t\in \Lambda_{i}} \frac{\beta_{i}^{2}}{n_{i}^2}(\sum_{i=1}^{q} \beta_{i}\boldsymbol{\mu}_{i})^{\top}\mathbf{W}\tilde{\boldsymbol{\Sigma}}_{t}\mathbf{W}
	(\sum_{i=1}^{q} \beta_{i}\boldsymbol{\mu}_{i})\\
	=& \sum_{i<l}\frac{\beta_{i}^{2} \beta_{l}^{2} }{n_{l} n_{i}} \operatorname{tr} (\mathbf{W}\boldsymbol{\Sigma}_{i } \mathbf{W}\boldsymbol{\Sigma}_{l})+\sum_{i=1} \frac{\beta_{i}^{4}}{2n_{i}(n_{i}-1)}\operatorname{tr}( \mathbf{W}\boldsymbol{\Sigma}_{i} )^{2} +(\sum_{i=1}^{q} \beta_{i}\boldsymbol{\mu}_{i})^{\top}\mathbf{W}^{1/2}(\sum_{i=1}^{q} \frac{\beta_{i}^{2}}{n_{i}}\boldsymbol{\Sigma}_{i} )\mathbf{W}^{1/2}
	(\sum_{i=1}^{q} \beta_{i}\boldsymbol{\mu}_{i})\\
	=&\frac{1}{4}\operatorname{Var}T_{n}.
	\end{align*}
	Next, we split the term $\mathbb{E}( \zeta_{n}^{2})$ into six term, and analyze them one by one.			
	\begin{align}
	\mathbb{E}( \zeta_{n}^{2}) =& \mathbb{E}\left[\sum_{t=1}^{n} \sum_{s_1,s_2=1}^{t-1}c_{s_1t}c_{s_2t}\mathbf{y}_{s_1}^{\top}\mathbf{W}^{1/2}\tilde{\boldsymbol{\Sigma}}_{t}\mathbf{W}^{1/2}\mathbf{y}_{s_2}
	+\sum_{t=1}^{n}\frac{\beta_{i}^{2}}{n_{i}^2}(\sum_{i=1}^{q} \beta_{i}\boldsymbol{\mu}_{i})^{\top}\mathbf{W}\tilde{\boldsymbol{\Sigma}}_{t}\mathbf{W}
	(\sum_{i=1}^{q} \beta_{i}\boldsymbol{\mu}_{i}) \right. \notag\\
	&\left.+2\sum_{t=1}^{n}\sum_{s=1}^{t -1}c_{st}\frac{\beta_{i}}{n_{i}}\mathbf{y}_{s}^{\top}
	\mathbf{W}^{1/2} \tilde{\boldsymbol{\Sigma}}_{t}	\mathbf{W} \left(\sum_{i=1}^{q} \beta_{i} \boldsymbol{\mu}_{i}\right)\right]^{2} \notag\\
	=&: K_{1} + K_{2} + 4K_{3} +2K_{4} +4K_{5} + 0. \label{the spiliting of two moments of zetan}
	\end{align}	
	After tedious calculations, we can get the following results:
	\begin{align}
	K_{1} =& \sum_{t_{1},t_{2}}^{n}\sum_{s_1, s_2}^{t_{1}-1}\sum_{s_3, s_4}^{t_{2}-1}c_{s_1t_{1}}c_{s_2t_{1}}c_{s_3t_{2}}c_{s_4t_{2}}\mathbb{E}\mathbf{y}_{s_1}^{\top}\mathbf{W}^{1/2}\tilde{\boldsymbol{\Sigma}}_{t_{1}}\mathbf{W}^{1/2}\mathbf{y}_{s_2}\mathbf{y}_{s_3}^{\top}\mathbf{W}^{1/2}\tilde{\boldsymbol{\Sigma}}_{t_{2}}\mathbf{W}^{1/2}\mathbf{y}_{s_4}\notag\\
	=& \sum_{t_{1}\neq t_{2}}^{n} \sum_{s_1= s_2}^{t_{1}-1}\sum_{s_3 = s_4}^{t_{2}-1} c_{s_1t_{1}}^{2}c_{s_3t_{2}}^{2} \mathbb{E}\mathbf{y}_{s_1}^{\top}\mathbf{W}^{1/2}\tilde{\boldsymbol{\Sigma}}_{t_{1}}\mathbf{W}^{1/2}\mathbf{y}_{s_1}\mathbf{y}_{s_3}^{\top}\mathbf{W}^{1/2}\tilde{\boldsymbol{\Sigma}}_{t_{2}}\mathbf{W}^{1/2}\mathbf{y}_{s_3}                                \notag      \\
	& +\sum_{t_{1}= t_{2}}^{n} \sum_{s_1, s_2, s_3, s_4}^{t_{1}-1}c_{s_1t_{1}}c_{s_2t_{1}}c_{s_3t_{1}}c_{s_4t_{1}} \mathbb{E}\mathbf{y}_{s_1}^{\top}\mathbf{W}^{1/2}\tilde{\boldsymbol{\Sigma}}_{t_{1}}\mathbf{W}^{1/2}\mathbf{y}_{s_2}\mathbf{y}_{s_3}^{\top}\mathbf{W}^{1/2}\tilde{\boldsymbol{\Sigma}}_{t_{1}}\mathbf{W}^{1/2}\mathbf{y}_{s_4}                                  \notag  \\
	=& \sum_{i\neq l}^{q} \left( \sum_{t_{1}\in \Lambda_{i}} \sum_{s_1}^{t_{1}-1}c_{s_1t_{1}}^{2} \mathbb{E}\mathbf{y}_{s_1}^{\top}\mathbf{W}^{1/2}\tilde{\boldsymbol{\Sigma}}_{t_{1}}\mathbf{W}^{1/2}\mathbf{y}_{s_1}  \right)  \left( \sum_{t_{2}\in \Lambda_{l}}\sum_{s_3}^{t_{2}-1}c_{s_3t_{2}}^{2} \mathbb{E} \mathbf{y}_{s_3}^{\top}\mathbf{W}^{1/2}\tilde{\boldsymbol{\Sigma}}_{t_{2}}\mathbf{W}^{1/2}\mathbf{y}_{s_3}  \right)        \notag       \\
	&+ \sum_{i=1}^{q} \sum_{t\in \Lambda_{i}} \sum_{s_2=s_1\neq s_3 =s_4}^{t-1} c_{s_1t_{1}}^{2}c_{s_3t_{1}}^{2} \left( \mathbb{E}\mathbf{y}_{s_1}^{\top}\mathbf{W}^{1/2}\tilde{\boldsymbol{\Sigma}}_{t}\mathbf{W}^{1/2}\mathbf{y}_{s_1} \right)  \left(\mathbb{E} \mathbf{y}_{s_3}^{\top}\mathbf{W}^{1/2}\tilde{\boldsymbol{\Sigma}}_{t}\mathbf{W}^{1/2}\mathbf{y}_{s_3} \right)                           \notag                  \\
	&+ 2 \sum_{i=1}^{q} \sum_{t\in \Lambda_{i}} \sum_{s_3=s_1\neq s_2 =s_4}^{t-1} c_{s_1t_{1}}^{2}c_{s_2t_{1}}^{2} \left( \mathbb{E}\mathbf{y}_{s_1}^{\top}\mathbf{W}^{1/2}\tilde{\boldsymbol{\Sigma}}_{t}\mathbf{W}^{1/2}\mathbf{y}_{s_2}  \mathbf{y}_{s_1}^{\top}\mathbf{W}^{1/2}\tilde{\boldsymbol{\Sigma}}_{t}\mathbf{W}^{1/2}\mathbf{y}_{s_2} \right)                                     \notag        \\
	&+ \sum_{i=1}^{q} \sum_{t\in \Lambda_{i}} \sum_{s=1}^{t-1} c_{s t}^{4} \mathbb{E}\left( \mathbf{y}_{s}^{\top}\mathbf{W}^{1/2}\tilde{\boldsymbol{\Sigma}}_{t}\mathbf{W}^{1/2}\mathbf{y}_{s}\mathbf{y}_{s}^{\top}\mathbf{W}^{1/2}\tilde{\boldsymbol{\Sigma}}_{t}\mathbf{W}^{1/2}\mathbf{y}_{s} \right)                        \notag           \\
	= & \frac{1}{16}\sigma_{n,q_{1}}^{4} (1 + o(1)) + O(n^{-4}) \sum_{i_1, i_2, i_3}^{q}  \operatorname{tr} (\mathbf{W}\boldsymbol{\Sigma}_{i_1 } \mathbf{W}\boldsymbol{\Sigma}_{i_3} \mathbf{W}\boldsymbol{\Sigma}_{i_2} \mathbf{W}\boldsymbol{\Sigma}_{i_3})  \notag  \\
	&+ O(n^{-5}) \left(  \sum_{i_1, i_2}^{q}\operatorname{tr} (\mathbf{W}\boldsymbol{\Sigma}_{i_1 } \mathbf{W}\boldsymbol{\Sigma}_{i_2} \mathbf{W}\boldsymbol{\Sigma}_{i_1 } \mathbf{W}\boldsymbol{\Sigma}_{i_2}) +\sum_{i_1, i_2, i_3, i_4}^{q} \operatorname{tr} (\mathbf{W}\boldsymbol{\Sigma}_{i_1 } \mathbf{W}\boldsymbol{\Sigma}_{i_2})\operatorname{tr} (\mathbf{W}\boldsymbol{\Sigma}_{i_3 } \mathbf{W}\boldsymbol{\Sigma}_{i_4}) \right)    
	\notag\\  
	= & \frac{1}{16}\sigma_{n,q_{1}}^{4} (1 + o(1)) + o\left(  \operatorname{Var}^{2}\left(T_{n}\right) \right), \label{K1:the spiliting of two moments of zetan}   
	\end{align}
	\begin{align}
	K_{2} =&  \left[ (\sum_{i=1}^{q} \beta_{i}\boldsymbol{\mu}_{i})^{\top}\mathbf{W}^{1/2}(\sum_{i=1}^{q} \frac{\beta_{i}^{2}}{n_{i}}\boldsymbol{\Sigma}_{i} )\mathbf{W}^{1/2}
	(\sum_{i=1}^{q} \beta_{i}\boldsymbol{\mu}_{i}) \right]^{2} = \frac{1}{16}  \sigma_{n,q_{2}}^{4},   \label{K2:the spiliting of two moments of zetan}   
	\end{align}
	\begin{align}
	K_{3} =& \sum_{t_{1}, t_{2}}^{n}\sum_{s_1}^{t_{1}-1}\sum_{s_2}^{t_{2}-1}c_{s_1t_{1}}c_{s_2t_{2}}\frac{\beta_{i}^{2}}{n_{i}^{2}} \mathbb{E} \left( \mathbf{y}_{s_{1}}^{\top}
	\mathbf{W}^{1/2} \tilde{\boldsymbol{\Sigma}}_{t_{1}}	\mathbf{W} (\sum_{i=1}^{q} \beta_{i} \boldsymbol{\mu}_{i})(\sum_{i=1}^{q} \beta_{i} \boldsymbol{\mu}_{i})^{\top}\mathbf{W} \tilde{\boldsymbol{\Sigma}}_{t_{2}}\mathbf{y}_{s_{2}}\right)           \notag\\
	=&O(n^{-6})\sum_{t=1}^{n} \sum_{s}^{t-1}\mathbb{E}  \left( \mathbf{y}_{s}^{\top}
	\mathbf{W}^{1/2} \tilde{\boldsymbol{\Sigma}}_{t}	\mathbf{W} (\sum_{i=1}^{q} \beta_{i} \boldsymbol{\mu}_{i}) (\sum_{i=1}^{q} \beta_{i} \boldsymbol{\mu}_{i})^{\top}\mathbf{W}  \tilde{\boldsymbol{\Sigma}}_{t} \mathbf{W}^{1/2} \mathbf{y}_{s}\right)                   \notag \\
	=&O(n^{-4}) \sum_{l \leq i}^{q}  (\sum_{i=1}^{q} \beta_{i} \boldsymbol{\mu}_{i})^{\top}\mathbf{W}  \boldsymbol{\Sigma}_{i} \mathbf{W}\boldsymbol{\Sigma}_{l} \mathbf{W} \boldsymbol{\Sigma}_{i}	\mathbf{W} (\sum_{i=1}^{q} \beta_{i} \boldsymbol{\mu}_{i})             \notag\\
	=&O(n^{-4}) \sum_{l \leq i} (\sum_{i=1}^{q} \beta_{i} \boldsymbol{\mu}_{i})^{\top}\mathbf{W}^{1/2}(\mathbf{W}^{1/2} \boldsymbol{\Sigma}_{i} \mathbf{W}^{1/2})
	(\mathbf{W}^{1/2} \boldsymbol{\Sigma}_{l} \mathbf{W}^{1/2})(\mathbf{W}^{1/2} \boldsymbol{\Sigma}_{i} \mathbf{W}^{1/2})\mathbf{W}^{1/2}(\sum_{i=1}^{q} \beta_{i} \boldsymbol{\mu}_{i})                           \notag\\
	\leq & O(n^{-4}) \sum_{l \leq i}(\sum_{i=1}^{q} \beta_{i} \boldsymbol{\mu}_{i})^{\top}\mathbf{W}\boldsymbol{\Sigma}_{i}\mathbf{W}(\sum_{i=1}^{q} \beta_{i} \boldsymbol{\mu}_{i}) \lambda_{max}(\mathbf{W} \boldsymbol{\Sigma}_{l} \mathbf{W}\boldsymbol{\Sigma}_{i})                     \notag\\
	\leq & O(n^{-4})\sum_{l \leq i}(\sum_{i=1}^{q} \beta_{i} \boldsymbol{\mu}_{i})^{\top}\mathbf{W}\boldsymbol{\Sigma}_{i}\mathbf{W}(\sum_{i=1}^{q} \beta_{i} \boldsymbol{\mu}_{i}) \operatorname{tr}(\mathbf{W} \boldsymbol{\Sigma}_{l} \mathbf{W}\boldsymbol{\Sigma}_{i})              \notag\\
	=&o\left(  \operatorname{Var}^{2}\left(T_{n}\right) \right).\label{K3:the spiliting of two moments of zetan}   
	\end{align}		
	\begin{align}
	K_{4} =& \left( \sum_{i=1}^{q}\sum_{t\in \Lambda_{i}}\sum_{s_1,s_2=1}^{t-1}\mathbb{E}\left(c_{s_1t}c_{s_2t}\mathbf{y}_{s_1}^{\top}\mathbf{W}^{1/2}\tilde{\boldsymbol{\Sigma}}_{t}\mathbf{W}^{1/2}\mathbf{y}_{s_2}\right) \right) \times \left(  (\sum_{i=1}^{k} \beta_{i}\boldsymbol{\mu}_{i})^{\top}\mathbf{W}^{1/2}(\sum_{i=1}^{q} \frac{\beta_{i}^{2}}{n_{i}}\boldsymbol{\Sigma}_{i} )\mathbf{W}^{1/2}
	(\sum_{i=1}^{k} \beta_{i}\boldsymbol{\mu}_{i})\right) \notag \\
	= & \frac{1}{16} \sigma_{n,q_{1}}^{2}  \sigma_{n,q_{2}}^{2}.\label{K4:the spiliting of two moments of zetan}   
	\end{align}

	\begin{align}
	K_{5} =& \sum_{t_{1},t_{2}}^{n}\sum_{s_1, s_2}^{t_{1}-1}\sum_{s_3}^{t_{2}-1}c_{s_1t_{1}}c_{s_2t_{1}}c_{s_3t_{2}}\frac{\beta_{i}}{n_{i}} \mathbb{E} \left( \mathbf{y}_{s_1}^{\top}\mathbf{W}^{1/2}\tilde{\boldsymbol{\Sigma}}_{t_{1}}\mathbf{W}^{1/2}\mathbf{y}_{s_2}\mathbf{y}_{s_{3}}^{\top}
	\mathbf{W}^{1/2} \tilde{\boldsymbol{\Sigma}}_{t_{2}}	\mathbf{W} (\sum_{i=1}^{q} \beta_{i} \boldsymbol{\mu}_{i}) \right)             \notag \\
	=&\sum_{t_{1}=t_{2}}^{n} \sum_{s_1, s_2, s_3}^{t_{1}-1}c_{s_1t_{1}}c_{s_2t_{1}}c_{s_3t_{1}}\frac{\beta_{i}}{n_{i}} \mathbb{E} \left( \mathbf{y}_{s_1}^{\top}\mathbf{W}^{1/2}\tilde{\boldsymbol{\Sigma}}_{t_{1}}\mathbf{W}^{1/2}\mathbf{y}_{s_2}\mathbf{y}_{s_{3}}^{\top}
	\mathbf{W}^{1/2} \tilde{\boldsymbol{\Sigma}}_{t_{1}}	\mathbf{W} (\sum_{i=1}^{q} \beta_{i} \boldsymbol{\mu}_{i}) \right)                \notag\\
	=&\sum_{t_{1}=t_{2}}^{n} \sum_{s_1}^{t_{1}-1}c_{s_1t_{1}}^{3}\frac{\beta_{i}}{n_{i}} \mathbb{E} \left( \mathbf{y}_{s_1}^{\top}\mathbf{W}^{1/2}\tilde{\boldsymbol{\Sigma}}_{t_{1}}\mathbf{W}^{1/2}\mathbf{y}_{s_1}\mathbf{y}_{s_{1}}^{\top}
	\mathbf{W}^{1/2} \tilde{\boldsymbol{\Sigma}}_{t_{1}}	\mathbf{W} (\sum_{i=1}^{q} \beta_{i} \boldsymbol{\mu}_{i}) \right)      \notag\\
	=& O(n^{-7}) \sum_{t_{1}}^{n} \sum_{s_1}^{t_{1}-1}\mathbb{E}^{\frac{1}{2}} \left( \mathbf{y}_{s_1}^{\top}\mathbf{W}^{1/2}\tilde{\boldsymbol{\Sigma}}_{t_{1}}\mathbf{W}^{1/2}\mathbf{y}_{s_1}\right)^{2}    \mathbb{E}^{\frac{1}{2}} \left( \mathbf{y}_{s_{1}}^{\top}
	\mathbf{W}^{1/2} \tilde{\boldsymbol{\Sigma}}_{t_{1}}	\mathbf{W} (\sum_{i=1}^{q} \beta_{i} \boldsymbol{\mu}_{i}) \right)^{2}                \notag       \\
	=& O(n^{-5}) \mathbf{1}^{\top}\operatorname{diag}\left( \boldsymbol{\Gamma}_{1}^{\top}\mathbf{W}\boldsymbol{\Sigma}_{2}\mathbf{W}\boldsymbol{\Gamma}_{1} \right)\boldsymbol{\Gamma}_{1}^{\top}\mathbf{W}\boldsymbol{\Sigma}_{2}\mathbf{W}(\sum_{i=1}^{q} \beta_{i} \boldsymbol{\mu}_{i}) \notag\\
	\leq&O(n^{-5}) \left( \mathbf{1}^{\top}\operatorname{diag}^{2}\left( \boldsymbol{\Gamma}_{1}^{\top}\mathbf{W}\boldsymbol{\Sigma}_{2}\mathbf{W}\boldsymbol{\Gamma}_{1} \right) \mathbf{1}\right)^{1/2}\left((\sum_{i=1}^{q} \beta_{i} \boldsymbol{\mu}_{i})^{\top} \mathbf{W}\boldsymbol{\Sigma}_{2}\mathbf{W}\boldsymbol{\Sigma}_{1}\mathbf{W}\boldsymbol{\Sigma}_{2}\mathbf{W} (\sum_{i=1}^{q} \beta_{i} \boldsymbol{\mu}_{i}) \right)^{1/2} \notag\\
	\leq &O(n^{-5})\operatorname{tr}\left(\mathbf{W}\boldsymbol{\Sigma}_{2}\mathbf{W}\boldsymbol{\Sigma}_{1}\right) \left( (\sum_{i=1}^{q} \beta_{i} \boldsymbol{\mu}_{i})^{\top}\mathbf{W}\boldsymbol{\Sigma}_{1}\mathbf{W}(\sum_{i=1}^{q} \beta_{i} \boldsymbol{\mu}_{i}) \operatorname{tr}(\mathbf{W} \boldsymbol{\Sigma}_{2} \mathbf{W}\boldsymbol{\Sigma}_{1})\right)^{1/2} \notag\\
	=&o\left(  \operatorname{Var}^{2}\left(T_{n}\right) \right)\label{K5:the spiliting of two moments of zetan}   
	\end{align}
	where we used the identity
	$
	\mathbb{E} \left( \mathbf{u}^{\top} \mathbf{A} \mathbf{u} \cdot \mathbf{u}^{\top} \mathbf{C} \mathbf{r}  \right) = \sum_{i=1}^{p}  \mathbb{E}u_1^{3} a_{ii}c_{ii} r_{i} + \sum_{i \ne j}  \mathbb{E}u_1^{3} a_{ii}u_{ij} r_{j} = \mathbb{E} u_1^{3} \cdot \mathbf{1}^{\top}\operatorname{diag}(\mathbf{A}) \mathbf{C} \mathbf{r}$, $\mathbf{u}=(u_1,\ldots,u_p)^{\top}$ is a random vector with i.i.d. entries satisfying $\mathbb{E} x_1=0, \mathbb{E} x_1^2=1$, $\mathbf{r}=(r_1,\ldots,r_p)^{\top}$ is a non-random vector, and $\mathbf{A}= (a_{ij}),\mathbf{C}= (c_{ij}) $ be complex $p \times p$ nonrandom matrices.

	By combining \eqref{the spiliting of two moments of zetan}, \eqref{K1:the spiliting of two moments of zetan}, \eqref{K2:the spiliting of two moments of zetan}, \eqref{K3:the spiliting of two moments of zetan}, \eqref{K4:the spiliting of two moments of zetan}, and \eqref{K5:the spiliting of two moments of zetan}, we obtain
	\begin{align*}
	\operatorname{Var}(\zeta_{n}) =\mathbb{E}( \zeta_{n}^{2}) -\mathbb{E}^{2}( \zeta_{n}) =o\left(  \operatorname{Var}^{2}\left(T_{n}\right) \right).
	\end{align*}
	which completes the proof of \eqref{mean linear: the convergence of variance of CLT}.

	Note that
	\begin{align}
	\frac{\sum_{t=1}^{n} \mathbb{E}\left(V_{t}^{2} I\left(\left|V_{t}\right|>\varepsilon \sqrt{\operatorname{Var}\left( T_{n} \right)}\right) \mid \mathcal{F}_{t-1}\right)}{\operatorname{Var}\left(T_{n}\right)} \leq 
	\varepsilon^{-2} \frac{\sum_{t=1}^{n} \mathbb{E}\left(V_{t}^{4} \mid \mathcal{F}_{t-1}\right)}{\operatorname{Var}^{2}\left(T_{n}\right)} .
	\end{align}	
	Hence, to prove \eqref{mean linear :the Lindeberg's condition of CLT}, it suffices to show that
	\begin{align} \label{equivalent condition of the Lindeberg's condition of CLT}
	\mathbb{E}\left(\sum_{t=1}^{n}\mathbb{E}(V_{t}^4|\mathcal{F}_{t-1})\right)= \sum_{t=1}^{n}\mathbb{E}\left( V_{t}^4 \right) = \sum_{t=1}^{n}\mathbb{E}\left(\sum_{s=1}^{t -1} \varphi_{st} +  \psi_{t}\right)^{4}  =o\left(\operatorname{Var}^{2}(T_{n})\right).
	\end{align}
	By $C_{r}$'s inequality, we obtain
	\begin{align}
	\sum_{t=1}^{n}\mathbb{E}\left( V_{t}^4 \right)
	\leq& 8 \sum_{t=1}^{n} \mathbb{E} \left[(\sum_{s=1}^{t -1} \varphi_{st})^{4} + \psi_{t}^{4}  \right] \notag \\
	=& 8 \sum_{t=1}^{n} \mathbb{E} \left[\sum_{s_{1}, s_{2}, s_{3}, s_{4}=1}^{t -1}  \varphi_{s_{1}t} \varphi_{s_{2}t} \varphi_{s_{3}t} \varphi_{s_{4}t}+ \psi_{t}^{4}  \right]   \notag \\
	=&8 \sum_{t=1}^{n}\left[ 3\sum_{s_{1} \neq  s_{2}}^{t -1} c_{s_{1}t}^{2}c_{s_{2}t}^{2}  \mathbb{E} \mathbf{y}_{t}^{\top} \mathbf{y}_{s_{1}} \mathbf{y}_{s_{1}}^{\top}\mathbf{y}_{t} \mathbf{y}_{t}^{\top}\mathbf{y}_{s_{2}} \mathbf{y}_{s_{2}}^{\top}\mathbf{y}_{t}  + \sum_{s=1}^{t -1} c_{st}^{4}  \mathbb{E}\mathbf{y}_{t}^{\top} \mathbf{y}_{s} \mathbf{y}_{s}^{\top}\mathbf{y}_{t} \mathbf{y}_{t}^{\top}\mathbf{y}_{s} \mathbf{y}_{s}^{\top}\mathbf{y}_{t} \right. \notag \\
	& \left. +   \mathbb{E}  \left(\frac{\beta_{i}^{2}}{n_{i}^2} (\sum_{i=1}^{q} \beta_{i} \boldsymbol{\mu}_{i} )^{\top}  \mathbf{W}^{1/2}\mathbf{y}_{t} \mathbf{y}_{t}^{\top} \mathbf{W}^{1/2} (\sum_{i=1}^{q} \beta_{i} \boldsymbol{\mu}_{i} )\right)^{2}            
	\right] \notag\\
	=&:24 R_{1} + 8R_{2} +  8R_{3},  \label{the spiliting of Expectation of Vn4} 
	\end{align}
	Then, we turn to analyze the following three terms: 
	\begin{align}
	R_{1} =& \sum_{t=1}^{n} \sum_{s_{1} \neq  s_{2}}^{t -1} c_{s_{1}t}^{2}c_{s_{2}t}^{2}  \mathbb{E} \mathbf{y}_{t}^{\top} \mathbf{y}_{s_{1}} \mathbf{y}_{s_{1}}^{\top}\mathbf{y}_{t} \mathbf{y}_{t}^{\top}\mathbf{y}_{s_{2}} \mathbf{y}_{s_{2}}^{\top}\mathbf{y}_{t} \notag \\
	=& \sum_{i=1}^{q}\sum_{t\in \Lambda_{i}} \sum_{s_{1} \neq  s_{2}}^{t -1} O(n^{-8}) \mathbb{E} \mathbf{y}_{t}^{\top} \mathbf{y}_{s_{1}} \mathbf{y}_{s_{1}}^{\top}\mathbf{y}_{t} \mathbf{y}_{t}^{\top}\mathbf{y}_{s_{2}} \mathbf{y}_{s_{2}}^{\top}\mathbf{y}_{t} \quad (\text{by}\quad c_{st} = O(n^{-2}))\notag \\
	\leq & O(n^{-5}) \left[\sum_{i=1}^{q} \operatorname{tr}^{2}( \mathbf{W}\boldsymbol{\Sigma}_{i} )^{2} + \sum_{i< l}^{q} \operatorname{tr}^{2} (\mathbf{W}\boldsymbol{\Sigma}_{i } \mathbf{W}\boldsymbol{\Sigma}_{l})   \right] \notag \\  
	=& o \left( 	\sigma_{n,q_{1}}^{4} \right),   \label{term:R1 the spiliting of Expectation of Vn4} 
	\end{align}
	where the inequality follows from Proposition A.1. in \cite{chen2010tests}. By the same argument, it folows that
	\begin{align}
	R_{2} =& \sum_{t=1}^{n} \sum_{s=1}^{t -1} c_{st}^{4}  \mathbb{E}\mathbf{y}_{t}^{\top} \mathbf{y}_{s} \mathbf{y}_{s}^{\top}\mathbf{y}_{t} \mathbf{y}_{t}^{\top}\mathbf{y}_{s} \mathbf{y}_{s}^{\top}\mathbf{y}_{t}  \notag \\
	=& o \left( 	\sigma_{n,q_{1}}^{4} \right).  \label{term:R2 the spiliting of Expectation of Vn4} 
	\end{align} 
	
	\begin{align}
	R_{3} =&\mathbb{E}\sum_{t=1}^{n}  \left[\frac{\beta_{i}^{2}}{n_{i}^2} (\sum_{i=1}^{q} \beta_{i} \boldsymbol{\mu}_{i} )^{\top}  \mathbf{W}^{1/2}\mathbf{y}_{t} \mathbf{y}_{t}^{\top} \mathbf{W}^{1/2} (\sum_{i=1}^{q} \beta_{i} \boldsymbol{\mu}_{i} ) \right]^{2}    \notag \\
	=& \sum_{l=1}^{q}\sum_{t\in \Lambda_{l}} \mathbb{E} \left[\frac{\beta_{i}^{2}}{n_{i}^2} (\sum_{i=1}^{q} \beta_{i} \boldsymbol{\mu}_{i} )^{\top}  \mathbf{W}^{1/2}\mathbf{y}_{t} \mathbf{y}_{t}^{\top} \mathbf{W}^{1/2} (\sum_{i=1}^{q} \beta_{i} \boldsymbol{\mu}_{i} ) \right]^{2}                                      \notag \\
	= &O(n^{-4}) \left[ (\sum_{i=1}^{q} \beta_{i} \boldsymbol{\mu}_{i} )^{\top}\mathbf{W} \boldsymbol{\Sigma}_{l}  \mathbf{W}  (\sum_{i=1}^{q} \beta_{i} \boldsymbol{\mu}_{i} )\right]^2                \notag       \\
	=& o \left( 	\sigma_{n,q_{2}}^{4} \right), \label{term:R3 the spiliting of Expectation of Vn4}   
	\end{align}
	where the last line we used $\sigma_{n,q_{2}}^{2} = o \left( \sigma_{n,q_{1}}^{2}\right) = o\left( \operatorname{Var} (T_{n})\right)$. By combining  \eqref{the spiliting of Expectation of Vn4}-\eqref{term:R3 the spiliting of Expectation of Vn4} , the proof of \eqref{mean linear :the Lindeberg's condition of CLT} is complete. 
\end{proof}
	Putting together the above results, by the martingale central limit theorem (see Corollary 3.1 of \cite{hall1980martingale}), Theorem \ref{mean linear test via the weighted-norm: main result} can be obtained.

\end{appendices}


\end{document}